\newcommand{\KK}{{\mathcal K}}
\newcommand{\SSS}{{\mathcal S}}
\newcommand{\DD}{{\mathcal D}}
\newcommand{\E}{{ \mathbb E}}
\newcommand{\R}{{\mathbb R}}
\newcommand{\C}{{\mathbb C}}
\newcommand{\N}{{\mathbb N}}
\newcommand{\ep}{\varepsilon}
\newcommand{\Pro}{{\mathbb P}}
\newcommand{\indiq}{\hbox{\rm 1}{\hskip -2.8 pt}\hbox{\rm I}}
\newcommand{\ds}{\displaystyle}
\newcommand{\la}{\langle}
\newcommand{\ra}{\rangle}
\newcommand{\intP}{{\int_0^t \!\!\! \int_{-R}^R \! \int_\R}}
\newcommand{\intPh}{{\int_t^{t+h} \!\!\! \int_{-R}^R \! \int_\R}}
\newcommand{\intPst}{{ \int_{-R}^R \! \int_\R}}
\newcommand{\inab}{{(i \nabla)}}
\newcommand{\trho}{{\tilde \rho}}
\DeclareMathOperator{\Tr}{Tr}
\DeclareMathOperator{\erfc}{erfc}
\newtheorem{thm}{Theorem}
\newtheorem{prop}{Proposition}
\newtheorem{lem}{Lemma}
\newtheorem{rmk}{Remark}
\title{Rigorous derivation of Lindblad equations from quantum jumps processes in 1D}
\author{Christophe Gomez}
\address{Institut de Math\'ematiques de Marseille \\
  Universit\'e d'Aix-Marseille \& CNRS UMR 7373 \& \'Ecole Centrale Marseille.}
\email[C.~Gomez]{christophe.gomez@univ-amu.fr}
\author{Maxime Hauray}
\email[M.~Hauray]{maxime.hauray@univ-amu.fr}
\begin{document}

\begin{abstract}
We are interested by the behaviour of a 1D single heavy particle, interacting with an environment made of very fast particles in a thermal state. Assuming that the interactions are instantaneous, we construct an appropriate quantum jump process for the density operator of the heavy particle. In a weak-coupling limit (many interactions with few effect), we show that the solutions of jump process converge in law in the appropriate space towards the solution of a Lindbald master equation. To the best of our knowledge, it seems to be the first rigorous derivation of a dissipative quantum evolution equation.
\end{abstract}

\subjclass[2010]{ 81S22 35R60 60H15 (primary), and 82C31 60H25 47B80 (secondary)} 

\keywords{Decoherence and open quantum systems, Quantum jump process, Lindblad master equation, operator valued processes}

\maketitle


\section{Introduction} \label{sec:intro}

There is a huge phyisical literature on the so-called ``decoherence induced by the interaction with the environment'', starting form the seminal work of Joos and Zeh~\cite{JoosZeh} in the '80 (some authors also quote the forgotten work of Mott~\cite{Mott} fifty years earlier as the first work on the subject). Many works followed, we may quote for instance~\cite{Diosi,AMS,DoddHall,HornSipe,Adler}, in which some dissipative master equations, also called Kossakowski-Lindblad equations~\cite{Kossak,Lindblad} are formally derived for  the density operator of an open system (a particle is enough) interacting with the environment.  

However this subject has not been studied from the mathematical point of view, to the best of our knowledge. Here, we present a rigorous derivation of a Lindblad equation, in a kind of ``weak coupling'' limit of the interaction with a simple environment: a thermal bath, described with the help of a Poisson Point Process (PPP). At random exponential time, our system (a particle) interact with a particle of the environment (a Gaussian Wave Packet with random position, momentum and spread), in a instantaneous way. We use for that super-operators (operator on density matrix) that model this kind of instantaneous collisions and where obtained by the second author and his collaborators in~\cite{AHN}. Of course, this is rather a toy model than a physical environment, but its simplicity allows us to derive in a ``weak coupling'' limit (many collisions with small effect) a Lindblad equation.  An interesting properties of our limit equation is that asides its dissipative part, it contains also a stochastic potential term, that comes together with the dissipative term, and lower in some sense the decoherence effect.


\medskip
\paragraph{ \bf The functional spaces.}

We consider a ``heavy'' particle lying in $\R$, interacting with ``light'' particles of the environment. We are interested only by the behavior of the heavy particle, but since it is an \emph{open system}, we will study its density operator $\rho$.  The natural  Hilbert space associated to the heavy particle is 
\[H := L^2(\R).\]
We recall that a state or density operator associated to that heavy particle is a compact operator on $H$, which is symmetric and nonnegative always in the space trace class space $\SSS_1$, defined by
\[
\SSS_1 :=  \bigl\{   \rho \in L(H,H), \text{ s.t.} \quad \Tr |\rho| < + \infty 
\bigr\},
\]
where $| \rho|$ stands for $\bigl( \rho^\ast \rho \bigr)^{\frac12}$. The notation $\SSS_1^+$ will stand for the subspace of $\SSS_1$ made of symmetric non-negative operator.
We will always assume that the initial density operator of the heavy particle  belongs to $\SSS^+_1$ and is normalized: $ \| \rho_0 \|_{\SSS_1} = \Tr \rho_0 =1$. 

It is well-known that trace class operator are also Hilbert-Schmidt, and thanks 
to~\cite[Theorem 6.23, pp. 210]{reed} they are always defined thanks to a kernel still denoted by $\rho \in L^2(\R^2)$, and such that 
\begin{equation}\label{op-kernel}
\rho :
\left\{ \begin{array}{rcl}
\ds  L^2(\R)& \ds \longrightarrow& \ds L^2(\R)\\
\ds \phi& \ds \longmapsto& \ds \rho (\psi)= \int_{\R} \rho(\cdot,X')\psi(X')dX'.
\end{array} \right.
\end{equation}
When $\rho$ is symmetric and nonnegative, the associated kernel is always defined almost everywhere on the diagonal $\{ X=X'\}$ and we have the following equality~\cite[Theorem 3.1]{brislawn}
\[
\Tr \rho = \int_\R \rho(X,X) \,dX. 
\]
However, we will not be able to prove all our results in the trace class $\SSS_1$, so will also need the Schatten classes $\SSS_p$  for $ p \in [1, +\infty]$, whose definitions are recalled before the statement of the mains results.

\medskip
\paragraph{ \bf A simple description for an interaction with the environment.}

We use an instantaneous description of interaction with ``light'' particles of the environment.
Such an instantaneous interaction will be modeled by the multiplication of the kernel $\rho$ by a suitable function:
\begin{align*}
\rho(X,X') \xrightarrow{collision} \rho(X,X') I_\chi^V(X,X')=: I_\chi^V[\rho](X,X').
\end{align*}
where the function $ I_\chi^V$ depends on the interaction potential $V$ and the wave function $\chi$ of the light particle. Actually that kind of model is rigorously obtained in~\cite{AHN} in a limit where the light particle is infinitely light (and fast).

Here we will consider only a non-physical case: when the reflection and transmission amplitudes naturally associated to the scattering operator are constant (independent of the wave number $k$): respectively $\alpha \in  [0,1]$ and $\pm i \sqrt{1 - \alpha^2}$ \footnote{In that particular case with constant amplitudes, considering $\alpha \in \C$ with $| \alpha | \le 1$ is useless since a phase factor $e^{i \theta}$ will not change anything in the calculation.}. Clearly this cannot happens for any potential $V$, since for instance the reflection amplitude should vanish in the limit $|k| \rightarrow + \infty$. But, despite this drawback, that assumption will not alter what is essential for our study. 
Moreover,  this is also an approximation of more realistic cases, which is quite good when the spread in velocity variable of the light particles is narrow. 

Under this assumption, and when the  light particle has a Gaussian wave function, with average position $x$, spread $\sigma$, and  average momentum $p$, the function $I^a_\chi$ take under that approximation the special form:
\begin{equation} \label{def:I}
I_{p,\sigma,x}^a(X,X')  = 1 -  \alpha^2 +  \alpha^2 \theta_{p,\sigma}(X-X') \pm i \alpha \sqrt{1 - \alpha^2}  e^{-2 \sigma^2 p^2} \bigl( \gamma_{\sigma}(X-x) -  \gamma_{\sigma}(X'-x)  \bigr),
\end{equation}
with
\[
\theta_{p,\sigma}(Y) :=  e^{2 i pY - \frac{Y^2}{2 \sigma^2}}  \quad\text{and}\quad \gamma_{\sigma}(X) :=  e^{ - \frac{X^2}{2 \sigma^2}}.
\]
This (super)-operator $I_{p,\sigma,x}^a$ preserves positivity, symmetry and  trace: 
\mbox{$\Tr \bigl(I_{p,\sigma,x}^a[\rho] \bigr) = \Tr \rho$}
(the last properties is a consequence of the fact than $I_{p,\sigma,x}^V(X,X)=1$ for all $X\in\R$). We refer to~\cite{AHN} for the details. More precisely, the map $\rho \mapsto I_{p,\sigma,x}^V[\rho]$ is completely positive, a property that is important if we want to construct a relevant quantum evolution. We refer to~\cite{AHN} for the details. 

\medskip
\paragraph{\bf A environment modeled by a thermal bath.}

Our heavy particle will encounter many random collisions with particles of the environment: at times  $T_i$ ($i\in\mathbb{N}^\ast$), it will interact  with a Gaussian particle with random spread, position, and momentum $(\sigma_i,x_i,p_i)$. A convenient way to model these interaction is to introduce a Poisson point process (PPP in the sequel) on $\R^+ \times \R^+ \times \R \times \R$: 
\begin{equation*} 
P_N = \sum_{j \in \mathbb{N}^\ast } \delta_{T_j,\sigma_j,x_j,p_j},
\quad \text{with intensity} \quad
 \frac{N}{2R}  \,  \indiq_{(0,+\infty)}(t) dt \otimes  \mu_s(d\sigma) \otimes \indiq_{[-R,R]}(x) dx \otimes \mu_m(dp),
\end{equation*}
where $N$ is the average number of interaction per unit of time, $R$ is a cut-off parameter unfortunately necessary in one dimension, and $\mu_s$ and $\mu_m$ are respectively the (normalized) distributions of spread and average momentum of the light particles. The distribution for $p$ and $\sigma$ can be general, but one very interesting case is when the light particles are assumed to be in a \emph{thermal bath} at temperature $\beta_0$.  
As explained for instance in~\cite[Section II.B]{HornSipe} a thermal bath at temperature $\beta_0$ is naturally obtained has  superposition of Gaussian wave packet with random (normal) momentum and fixed spread in the following way:
\begin{equation*} 
\mu_s= \delta_{\sigma_0}, \quad
\mu_m(dp) = \sqrt{\frac{\beta}{2\pi}}   \, e^{- \frac{\beta}2 p^2}\,dp,
\quad \text{with} \quad 
\frac1{\beta_0}= \frac1{\beta} + \frac1{4\sigma_0^2},
\end{equation*}
where $\delta$ stands for the Dirac distribution. Throughout the paper, for simplicity, we will assume that $\sigma_0=1$. Nevertheless, this assumption is not restrictive since a simple rescaling leads to the general result.  So to summarize, we assume that the parameter $(T_j, x_j, p_j)$   is a PPP with intensity measure
\begin{equation} \label{PPPbis}
\frac N{2R}   \,  \indiq_{(0,+\infty)}(t) dt  \otimes \indiq_{[-R,R]}(x) dx\otimes  \sqrt{\frac{\beta}{2\pi}}   \, e^{- \frac{\beta}2 p^2}\,dp,
\quad \text{with} \quad 
\frac1{\beta_0}= \frac1{\beta} + \frac14.
\end{equation}

\medskip
\paragraph{ \bf The appropriate scaling.}
We are interested by the limit when the average number of interaction per unit time goes to infinity, that is $N\to +\infty$. In order to get a non trivial limit, and in view of the form of $\theta_p$ in~\eqref{def:I}, it is natural to replace
$\alpha$ by $\alpha N^{-1/2}$.  In that case, \eqref{def:I}  becomes
\begin{equation} \label{approx:I}
I_{p,x}^N(X,X')  = 1 -  \frac{\alpha^2}N +  \frac{\alpha^2}N \theta_p(X-X') \pm i \frac{\alpha}{\sqrt N}\sqrt{1- \frac{\alpha^2}N} \, e^{-2 p^2} \bigl( \gamma(X-x) - \gamma(X'-x) \bigr),  
\end{equation}
with
\begin{equation}\label{def:ta} 
\theta_p(Y) :=  e^{2 i pY - \frac{Y^2}2}  \qquad \text{and} \qquad \gamma(X) :=  \,   e^{- \frac{X^2}2}.
\end{equation}

In terms of operators, we will also write for $\rho \in \SSS_1$
\begin{equation} \label{approx:Iop}
I_{p,x}^N [\rho]  = \Bigl(1 -  \frac{\alpha^2}N \Bigr) \rho +  \frac{\alpha^2}N \theta_p[\rho] \pm i \frac{\alpha}{\sqrt N}\sqrt{1- \frac{\alpha^2}N} \, e^{-2 p^2} \bigl[ \gamma(\cdot-x) \rho \bigr], 
\end{equation}
where $\gamma$ naturally act by multiplication and $\theta_p[\rho]$ is defined by
\begin{equation} \label{def:theta1}
\theta_p[\rho] (X,X') =\theta(X-X') \rho(X,X'), 
\quad \text{or} \quad  
\theta_p[\rho] := \frac1{\sqrt{2\pi}}  \int_\R e^{ik \cdot} \rho  e^{- ik \cdot}  \widehat \theta_p(k) \,dk, 
\end{equation} 
with the convention that $\widehat \theta_p(k) :=  \frac1{\sqrt{2\pi}}  \int_\R  e^{- ik \cdot} \theta_p(x) \,dx$.

\medskip
This is the good scaling, since the average number of interaction per unit time equals to $N$,  we can expect that the first term in $1/N$ will give a finite effect in the limit. Remark, that since $\theta_{p}$ does not depends on $x$, it is important here that we consider a PPP with a finite support in $x$. Otherwise, the number of interaction on any time interval will be infinite, and all the interactions will have a non vanishing effect and that will lead to a trivial limit (at least without any rescaling in position). For the last two terms involving $\gamma$, which scale in $1/\sqrt N$, we may also expect a finite but random limit. In fact, thanks to the translation invariance, these terms will have a vanishing expectation and we should therefore consider only their fluctuations which are finite. However, the previous argument is not fully correct because we have introduce a cut-off at size $R$ in the intensity measure of the  PPP, so the expectation of the two last term does not cancel exactly. To overcome that difficulty, which is essentially one dimensional, we have imagine several approximation and modification of our model. None of them was completely satisfactory but, an interesting one is probably the following: we compensate in our models the average value of the terms involving $\gamma$. We have no physical justification for the introduction of that artificial potential term: we introduce it only to obtain some result in the limit.
With respect to the others modifications we tried, it has the advantage to preserve the positivity of the density matrix $\rho$ which is a key property in quantum theory. 

Nevertheless, it is interesting to remark that this problem should not holds in larger dimension ($d \ge 2$), where the equivalent of the $\theta_p$ function does not depend on $X-X'$ any more, and has also a fast enough decrease at infinity. Hence, it should be allowed to use a full stationary (in position) environment in larger dimension. Of course, the collision process now becomes a jump process with infinitely many collisions during any finite time interval, but most of them have a very small effect. But this raised several difficulties and will be the aim of a future work. The present work and its ``unjustified'' approximations should be seen as a first step towards a more difficult but more relevant study in larger dimension.

\medskip
\paragraph{ \bf A Von-Neumann equation with jump.}

In the absence of interaction with the environment, the evolution of the heavy particle is described through the free Hamiltonian:
\begin{equation*} 
H_0 = - \frac12 \Delta, \qquad
D(H_0) = H^2(\R) = \bigl\{ \psi \in H=L^2(\R), \text{ s.t. }  \psi'' \in H \bigr\}.
\end{equation*}

For any $N \in \mathbb N$, Let us consider a \emph{PPP} $(T_j,x_j,p_j)_{j\in\mathbb{N}^\ast}$ as in~\eqref{PPPbis} and study the following stochastic Von-Neumann equation with jumps: for all $j\in\mathbb{N}$
\begin{equation} \label{VN-jump}
\left\{ \begin{array}{ll}
\ds  i \partial_t \rho^N = \Bigl[   H_0 \pm \alpha \sqrt {N- \alpha^2} \gamma_\infty, \rho^N \Bigr]  & \text{on } (T_{j},T_{j+1}) \\
\ds \rho^N_{T_j} =   I_{p_j,x_j} \bigl[ \rho^N_{T_j^-} \bigr]
\end{array} \right.,
\end{equation}
with $\rho^N_{T_0}=\rho^N_0=\rho_0$, and where $I^N_{p,x}$ is defined by~\eqref{approx:I}, and 
\begin{equation} \label{gamma_inf}
\gamma_\infty(X) := 
\int_\R \int_{-R} ^R e^{-2 p^2} \gamma (X-x)   \,\mu_m(dp)dp \,dx 
 = \frac{\sqrt{2 \pi \beta_0}}{4R} \Bigl( 
\erfc\Big(\frac{X-R}{\sqrt2}\Big) - \erfc\Big(\frac{X+R}{\sqrt2}\Big)\Bigr),
\end{equation}
with the function $\erfc$ defined by
$ \erfc(x) := \frac1{\sqrt \pi} \int_x^{+\infty} e^{-y^2}\,dy.$

The potential term appearing in~\eqref{VN-jump} seems very strong with its factor $\sqrt N$, but  it is  exactly compensated by the mean effects of collision, so that we obtain a finite limit despite this term (and rather thank to it). To observe this compensation, it is more convenient to rewrite that system with the help of the PPP: 
\begin{equation} \label{VN-PPP}
\rho^N_t = \rho^N_0 -i \int_0^t  \Bigl[   H_0 \pm \alpha \sqrt {N- \alpha^2} \gamma_\infty, \rho_s^N \Bigr]  \,ds
+ \intP \Bigl(  I^N_{p,x} \bigl[ \rho^N_{s^-} \bigr]  -\rho^N_{s^-} \Bigl)P_N(ds,dx,dp).
\end{equation}
Introducing the compensated Poisson measure 
\begin{equation}\label{compensated}
\tilde{P}_{N}(ds,dx,dp):=P_N(ds,dx,dp)- \frac N{2R} ds \otimes \indiq_{[-R,R]}(x)dx \otimes \mu_m(dp),
\end{equation}
the relation \eqref{VN-PPP} becomes
\begin{multline} \label{VN-PPPcomp}
\rho^N_t = \rho^N_0 -i \int_0^t  \bigl[   H_0 , \rho_s^N \bigr]  \,ds - \alpha^2 \int_0^t  \bigl( \rho^N_s -  \theta_\infty[\rho_s^N] \bigr) \,ds \\
+ \intP \Bigl(  I^N_{p,x} \bigl[ \rho^N_{s^-} \bigr]  -\rho^N_{s^-} \Bigl) \tilde P_N(ds,dx,dp),
\end{multline}
with $\theta_\infty$ defined by
\begin{equation} \label{theta_inf}
\theta_\infty(Y) :=  \int_\R \theta_p(Y) \mu_m(dp) = e^{-\frac2{\beta_0}Y^2}.
\end{equation}
In fact, using the definition~\eqref{approx:I}, we have
\[\begin{split}
\frac N{2R} \intPst \Bigl(  I^N_{p,x}  \bigl[ \rho^N_{s^-} \bigr]  -\rho^N_{s^-} \Bigl) \,\mu_m( dp)dx 
&= \frac {\alpha^2}{2R} \sqrt{\frac{\beta}{2\pi}} \intPst \Bigl(  \theta_p\bigl[\rho^N_{s^-}\bigr]  -\rho^N_{s^-} \Bigl) \,dx \,e^{-\beta  p^2/2} dp \\
& \pm i \frac {\alpha \sqrt{N-\alpha^2}}{2R} \sqrt{\frac{\beta}{2\pi}} 
\intPst \big[\gamma(\cdot -x) ,\rho^N_{s^-}  \big]  \,dx \,e^{- \bigl(2  +\frac12 \beta \bigr) p^2} dp  \\
&= \alpha^2 \Bigl(\theta_\infty\bigl[\rho^N_{s^-}\bigr] -  \rho^N_{s^-} \Bigr) 
\pm i  \alpha \sqrt{N-\alpha^2} \bigl[   \gamma_\infty , \rho^N_{s^-} \bigr],
\end{split}\]
by definition of $\theta_\infty$ \eqref{theta_inf} and of $\gamma_\infty$ \eqref{gamma_inf}.

\begin{rmk}
Here and below, we use the theory of integration in Banach space, developed originally by Bochner in~\cite{bochner}. It apply without difficulties since the trace class $\SSS_1$ is a separable space, which implies that any measurable function on a probability space with value in $\SSS_1$ is the pointwise limit of countably valued function. The same holds for $\DD([0,+\infty),\SSS_1)$ which is a separable Banach space under the Skorokod topology.

A important property of the Bochner integral is that it commutes with bounded linear operator. For instance for a random nonnegative symmetric operator $\rho$ with trace a.s. equal to $1$, and a bounded linear operator $T :\SSS_1 \mapsto B$ another Banach space, then
\[
T \bigl( \E[\rho] \bigr) = \E \bigl[ T(\rho) \bigr]. 
\]  
\end{rmk}

Now, let us remark that equations~\eqref{VN-jump}, \eqref{VN-PPP}, \eqref{VN-PPPcomp} are three different ways of writing the same evolution. In fact, \eqref{VN-jump} is obtained by integrating \eqref{VN-jump} in time, and \eqref{VN-PPPcomp} follows from~\eqref{VN-PPP} after a simple integration of the jump term again the intensity measure. The well-posedness for that tree equivalent systems is stated in the following result.
\begin{prop} \label{prop:WP-jump}
Let $N\in\mathbb{N}^\ast$, and assume that $\rho_0 \in \SSS_1^+$ and that $(T^N_j,x^N_j,p^N_j)_{j\in\mathbb{N}^\ast}$ is a PPP defined on some probability space with intensity measure given by~\eqref{PPPbis}. Then, for almost all realizations of  of the PPP, the $(T^N_j)_j$ are all distinct and without accumulation in finite time, and in that case there exists a unique solution in $L^\infty([0,+\infty),\SSS_1)$ to~\eqref{VN-jump}. This implies strong existence and uniqueness of an adapted process solutions to~\eqref{VN-jump}, or equivalently to~\eqref{VN-PPP} and~\eqref{VN-PPPcomp}.

Moreover, it holds almost surely that for any time $t\geq 0$, $\rho^N_t$ is a symmetric nonnegative operator with $\Tr \rho^N_t =1$.
\end{prop}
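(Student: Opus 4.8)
The plan is to establish well-posedness pathwise, realization by realization of the PPP, by reducing the stochastic equation to a countable sequence of deterministic Cauchy problems glued along the jump times. The first step is to verify the claimed generic properties of the PPP: since the intensity measure restricted to $[0,T]\times[-R,R]\times\R$ has finite total mass $\frac{N}{2R}\cdot T\cdot 2R\cdot 1 = NT$ for every $T$, the number of points $T_j$ in any bounded time interval is almost surely finite (it is a Poisson random variable with finite parameter), hence the $(T_j)_j$ have no accumulation point in finite time; and since the intensity is non-atomic, the $T_j$ are almost surely pairwise distinct. Fix such a good realization and enumerate the jump times $0 = T_0 < T_1 < T_2 < \cdots \to +\infty$.

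\medskip

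On each open interval $(T_j,T_{j+1})$ the equation~\eqref{VN-jump} is the deterministic von-Neumann equation $i\partial_t\rho = [H_0 \pm \alpha\sqrt{N-\alpha^2}\,\gamma_\infty,\rho]$. The Hamiltonian $H_N := H_0 \pm \alpha\sqrt{N-\alpha^2}\,\gamma_\infty$ is self-adjoint on $D(H_0)=H^2(\R)$, because $\gamma_\infty$ is a bounded (indeed $\|\gamma_\infty\|_\infty \le \frac{\sqrt{2\pi\beta_0}}{4R}$), real-valued multiplication operator and is thus a bounded self-adjoint perturbation of $H_0$. Consequently $U_N(t) := e^{-itH_N}$ is a strongly continuous unitary group, and the propagator $\rho \mapsto U_N(t-s)\,\rho\,U_N(t-s)^\ast$ is, for each fixed $t\ge s$, an isometry of $\SSS_1$ that preserves $\SSS_1^+$ and the trace (conjugation by a unitary preserves $|\cdot|$, positivity, symmetry and $\Tr$). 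Starting from $\rho^N_{T_j}$ this gives the unique $L^\infty((T_j,T_{j+1}),\SSS_1)$ solution on that interval, with $\rho^N_{T_{j+1}^-} = U_N(T_{j+1}-T_j)\,\rho^N_{T_j}\,U_N(T_{j+1}-T_j)^\ast$ obtained by strong continuity. At the jump time we apply the second line of~\eqref{VN-jump}, setting $\rho^N_{T_{j+1}} = I^N_{p_{j+1},x_{j+1}}[\rho^N_{T_{j+1}^-}]$; by the properties of $I^N_{p,x}$ recalled after~\eqref{approx:I} — complete positivity, hence preservation of $\SSS_1^+$, symmetry, and $\Tr$ — this stays in $\SSS_1^+$ with unit trace. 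One also needs $\|I^N_{p,x}[\rho]\|_{\SSS_1} \le C_{N,p}\|\rho\|_{\SSS_1}$ with a bound uniform on compacts in $p$, which is immediate from~\eqref{approx:Iop} since $\theta_p[\cdot]$ and multiplication by $\gamma(\cdot-x)$ are bounded on $\SSS_1$. Iterating over $j$ and using the absence of accumulation of the $T_j$ produces a unique function in $L^\infty_{loc}([0,+\infty),\SSS_1)$; finiteness of $\sup_{[0,T]}\|\rho^N_t\|_{\SSS_1}$ follows since on $[0,T]$ only finitely many jumps occur, each with a finite multiplicative cost, and the free flow is isometric — in fact $\|\rho^N_t\|_{\SSS_1}=\Tr\rho^N_t=1$ throughout, which is even cleaner. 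This proves the pathwise statement and the last paragraph of the proposition simultaneously.

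\medskip

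Finally I would upgrade pathwise existence/uniqueness to strong existence and uniqueness of an \emph{adapted} process. Adaptedness is clear from the construction: $\rho^N_t$ is a measurable functional of the points of $P_N$ with time-coordinate $\le t$, hence $\mathcal F_t$-measurable for the natural filtration of the PPP; joint measurability in $(\omega,t)$ follows because $\rho^N$ has càdlàg paths in $\SSS_1$ (the separability of $\SSS_1$ and of $\DD([0,+\infty),\SSS_1)$ noted in the Remark makes this legitimate). The equivalence with~\eqref{VN-PPP} and~\eqref{VN-PPPcomp} is then the observation already made in the text: integrating~\eqref{VN-jump} in time against Lebesgue measure on each inter-jump interval and summing the jumps reproduces~\eqref{VN-PPP}, and compensating the jump integral against the intensity measure — using exactly the computation displayed just before the Remark, which is justified by the local finiteness of the intensity and Fubini for the Bochner integral — yields~\eqref{VN-PPPcomp}. \textbf{The main obstacle} I anticipate is not any single estimate but the careful bookkeeping needed to run the iteration cleanly in the infinite-dimensional setting: one must confirm that the inter-jump von-Neumann flow genuinely defines an $\SSS_1$-valued solution (not merely an $\SSS_2$ or weak-$\ast$ one), which rests on the boundedness of the perturbation $\gamma_\infty$ and the fact that unitary conjugation is an \emph{isometry} of $\SSS_1$ — so that no constants accumulate from the flow and only the (finitely many, locally) jump factors need to be controlled.
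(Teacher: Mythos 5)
Your proposal is correct and follows essentially the same route as the paper: almost-sure local finiteness and distinctness of the jump times from the finite, non-atomic intensity, unitary conjugation by the group generated by $H_0 \pm \alpha\sqrt{N-\alpha^2}\,\gamma_\infty$ between jumps (well-defined since $\gamma_\infty$ is a bounded, smooth potential), and the trace/positivity/symmetry-preserving collision map $I^N_{p,x}$ at the jumps, glued by iteration. You merely spell out in more detail the steps the paper leaves implicit (adaptedness, equivalence of the three formulations), so no further comparison is needed.
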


\begin{proof}
We denote by $e^{i t ( H_0 \pm \alpha \sqrt {N- \alpha^2} \gamma_\infty)}$ the group generated by $i ( H_0 \pm \alpha \sqrt {N- \alpha^2} \gamma_\infty)$ which is well defined since $\gamma_\infty$ is smooth and decrease very quickly to $0$ as $|X|$ goes to infinity~\cite{reed2}. Then, for a given realization of the PPP satisfying the hypothesis stated in the proposition, solutions of~\eqref{VN-jump} should satisfy between the jump (i.e. for $t \in [T^N_j,T^N_{j+1})$ for some $j$)
\[
\rho^N_t = e^{i (t-T^N_j)( H_0 \pm \alpha \sqrt {N- \alpha^2} \gamma_\infty) } \rho^N_{T^N_j} e^{- i (t- T^N_j) ( H_0 \pm \alpha \sqrt {N- \alpha^2} \gamma_\infty)}.
\]
Hence, the solutions of the evolution problem are uniquely characterized, since that formula defines the unique solution of the free Von Neumann equation that remains in $\SSS_1$ in-between two collisions. 

Remark also that the previous evolution between two collisions preserves the symmetry, the positivity, and the trace. Finally, the collision map $\rho \mapsto I^N_{p,x}[ \rho ]$ is also well defined from $\SSS_1$ into itself, and preserve symmetry, positivity and trace. We refer to Appendix \ref{Sch_Bd} for more details, and also to~\cite{AHN}.
\end{proof}

\medskip
\paragraph{\bf Main result}

Let us first recall that a compact operator on $L^2(\R)$ has always a norm convergent expansion
\begin{equation} \label{expansion}
\rho = \sum_{j=1}^{+\infty} \mu_j(\rho) | \psi_j \ra \la \phi_j |,
\end{equation}
where the $\mu_j(\rho) \ge 0$ stand for the singular values of $\rho$ listed in a decreasing order with multiplicities, and $(\phi_j)$ and $(\psi_j)$ are two (not necessarily complete) orthonormal sets \cite[Theorem 1.4]{simon}.
We borrowed the very convenient ``bra-ket'' notation to physicist. We recall the $| \psi \ra \la \phi |$ simply stands for the operator on $L^2: \chi \mapsto \la \phi, \chi \ra \psi$.

The values $\mu_j(\rho)$ are uniquely determined, and the two orthogonal families are up to some isometry on the eigenspaces of $\rho^\ast \rho$.  In the particular case where $\rho$ is symmetric, we also have $\psi_j=\pm \phi_j$ for each $j$ depending on the sign of the eigenvalues of $\rho$. Then the Schatten norm $\| \rho\|_{\SSS_p}$ of a compact operator is defined for any $p \in [1,+\infty]$ by
\begin{equation} \label{Sch_norm}
\|  \rho \|_{\SSS_p}:= \biggl( \sum_{j=1}^{+\infty} |\mu_j(\rho)|^p \biggr)^{\frac 1p},
\end{equation}
and the Schatten class $\SSS_p$ is the subset of compact operators for which the above series is convergent.

\medskip
Our main theorem state a result of convergence in the space of cad-lag  (right continuous with a limit from the left) process in the space $\DD\bigl([0,\infty), \SSS_p \bigr)$ for $p >1$, endowed with the Skorokhod topology~\cite{billingsley}.
\begin{thm}[Quenched convergence] \label{thm:main}
Let us assume that $\rho_0\in \SSS_1^+$ and
\begin{equation}\label{initcond}
\Tr \bigl( \inab \rho_0 \inab \bigr) + \Tr \bigl( X \rho_0 X) < + \infty,
\end{equation}
and consider $\rho^N$ the unique solution to~\eqref{VN-jump} for each $N$. Then, for $p \in (1,2]$, it converges in law, as $N \rightarrow +\infty$, in $\DD([0,\infty),\SSS_p)$ towards the unique  (weak) solution  of the stochastic ``Lindblad'' equation
\begin{equation} \label{LB-sto}
i \,d \rho_\infty = \bigl[   H_0\, dt + dW_t, \rho_\infty \bigr]
+ i\, \alpha^2 \Bigl( \theta_\infty[\rho_\infty]- \rho_\infty  \Bigr  )\ \,dt,
\end{equation}
where $\theta_\infty$ act by multiplication on the kernel as $\theta_\infty[\rho_\infty](X,X') =  \rho_\infty(X,X') e^{-2\frac{(X-X')^2}{\beta_0}}$, or equivalently 
\[
\theta_\infty[\rho_\infty]  = \int_{\R}e^{ip \cdot } \rho_\infty e^{-ip\cdot } \, \mu_m(dp),
\]
and $W$ is a Brownian potential with correlation function given by
\begin{equation} \label{eq:cor}
\E \bigl[  W_s(X) \, W_t(X') \bigr] := (s \wedge t)  \frac{\alpha^2  \sqrt \pi}{2R} \sqrt{ \frac{\beta_0}{8 -  \beta_0}}
e^{-\frac12 (X-X')^2}  
\biggl( \erfc\Bigl( -R - \frac{X+X'}2\Bigr)  - \erfc\Bigl( R - \frac{X+X'}2\Bigr)   \biggr).
\end{equation}
\end{thm}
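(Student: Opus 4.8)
The plan is to establish tightness of the family $(\rho^N)_{N}$ in $\DD([0,\infty),\SSS_p)$ for $p\in(1,2]$, identify the limit points as solutions of the martingale problem associated to \eqref{LB-sto}, and finally prove uniqueness for that limit equation. The starting point is the compensated formulation \eqref{VN-PPPcomp}: the drift part is already in the expected form (the $\alpha^2(\theta_\infty[\rho]-\rho)$ term is exactly the Lindblad dissipator, and the commutator with $H_0$ is the free evolution), so everything hinges on controlling the martingale term $M^N_t := \intP (I^N_{p,x}[\rho^N_{s^-}] - \rho^N_{s^-})\tilde P_N(ds,dx,dp)$. I would first compute its predictable quadratic variation by squaring the integrand and integrating against the intensity measure $\frac{N}{2R}\,ds\otimes\indiq_{[-R,R]}(x)dx\otimes\mu_m(dp)$. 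The dominant contribution comes from the $\gamma$-terms, which scale as $\alpha/\sqrt N$: each jump is $O(1/\sqrt N)$ but there are $O(N)$ of them, so the bracket converges. A careful bookkeeping should show that $\langle\!\langle M^N\rangle\!\rangle_t \to t\,C$ where $C$ is the covariance operator whose kernel, after integrating $e^{-2p^2}\gamma(X-x)\gamma(X'-x)$ over $x\in[-R,R]$ and $p\in\R$ against $\mu_m$, produces exactly the $\SSS_p$-valued Gaussian structure encoded in \eqref{eq:cor} — the product $\gamma(X-x)\gamma(X'-x)=e^{-\frac12(X-X')^2}e^{-(X+X')^2/4+\cdots}$ integrated in $x$ giving the difference of $\erfc$ functions, and the Gaussian integral in $p$ of $e^{-(4+\beta)p^2}$ against $\mu_m$ giving the constant $\frac{\alpha^2\sqrt\pi}{2R}\sqrt{\beta_0/(8-\beta_0)}$.

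For tightness I would use a criterion of Aldous–Rebolledo type in the Banach space $\SSS_p$: it suffices to bound uniformly in $N$ the quantities $\E\|\rho^N_t\|_{\SSS_p}$ (trivial, since $\|\rho^N_t\|_{\SSS_p}\le\|\rho^N_t\|_{\SSS_1}=1$), to control the modulus of continuity of the drift (Lipschitz-type estimates on $\theta_\infty$ and $[H_0,\cdot]$ using the propagation of the moment bound \eqref{initcond}, which should be shown to persist along the evolution), and to show the jump sizes go to zero in probability, $\sup_{t\le T}\|\Delta\rho^N_t\|_{\SSS_p}\to 0$, which follows because a single collision changes $\rho$ by $O(1/\sqrt N)$ in $\SSS_p$ for $p>1$ (note $p=1$ is excluded precisely because the $\theta_p$ multiplier is not a bounded operator on $\SSS_1$, only on $\SSS_p$ for $p>1$, by interpolation between $\SSS_2$ where it acts by a bounded multiplier-type average and $\SSS_\infty$). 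The key auxiliary estimate is that the operators appearing — multiplication by $\gamma(\cdot-x)$, conjugation by $e^{ik\cdot}$, and the averaged $\theta_\infty$ — are bounded on every $\SSS_p$, $p\in(1,\infty)$, uniformly in the parameters, which I would prove via complex interpolation (Appendix \ref{Sch_Bd} is referenced for exactly this). Propagation of \eqref{initcond} requires checking that $\Tr((i\nabla)\rho^N(i\nabla))$ and $\Tr(X\rho^N X)$ satisfy Gronwall-type inequalities under both the free flow (where $[H_0,\cdot]$ moves $X$ into $i\nabla$ and vice versa, in a controlled way) and the collision map; the potential term $\pm\alpha\sqrt{N-\alpha^2}\gamma_\infty$ looks dangerous but $\gamma_\infty$ is bounded with bounded derivatives, and its large coefficient is precisely cancelled in expectation by the collision mean, so one must combine the two before estimating — working with \eqref{VN-PPPcomp} rather than \eqref{VN-jump} makes this manifest.

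Having extracted a convergent subsequence with limit $\rho_\infty$, I would identify it by passing to the limit in the martingale problem: for a suitable class of test functions $F$ on $\SSS_p$ (e.g. $\rho\mapsto\langle\phi,\rho\psi\rangle$ and their polynomials, or $\rho\mapsto\Tr(A\rho)$ for nice $A$), show that $F(\rho^N_t)-F(\rho^N_0)-\int_0^t\LL^N F(\rho^N_s)\,ds$ is a martingale and that $\LL^N F\to\LL F$ where $\LL$ is the generator of \eqref{LB-sto}, combining the first-order drift with the second-order term coming from the quadratic variation of $M^N$ — this second-order term is what produces the Stratonovich/Itô correction turning the naive commutator with $dW$ into the dissipator-like structure, and care is needed to see that the Itô term $dW_t$ in \eqref{LB-sto} together with its implicit quadratic-variation correction reproduces exactly the limit of $\LL^N F$. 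Uniqueness of the weak solution to \eqref{LB-sto} I would obtain by a Gronwall argument on $\E\|\rho^1_\infty - \rho^2_\infty\|_{\SSS_p}^2$ (or on $\SSS_2$ first, then transfer), using that the map $\rho\mapsto\theta_\infty[\rho]$ and the Brownian commutator term are globally Lipschitz on $\SSS_p$ given the covariance kernel \eqref{eq:cor} is bounded; since the limit is weakly unique, all subsequential limits coincide and the full sequence converges.

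I expect the main obstacle to be the precise computation and convergence of the quadratic variation of $M^N$ together with the matching Itô correction in the generator — in particular, verifying that the cross terms between the two $\gamma$-contributions (the $+i$ and the implicit conjugate structure) combine to give a \emph{real symmetric} positive covariance yielding a genuine Brownian potential $W_t$ rather than spurious imaginary or non-positive contributions, and that no $O(\sqrt N)$ remainder survives because of the $R$-cutoff mismatch flagged in the introduction. The bookkeeping of the $\erfc$ terms and the constant in \eqref{eq:cor} is the concrete place where this must be checked; everything else (tightness, moment propagation, boundedness on $\SSS_p$, uniqueness) is more routine given the interpolation tools and the contractivity of the collision maps.
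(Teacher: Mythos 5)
Your overall architecture (tightness, martingale-problem identification of the limit via the quadratic variation of the compensated Poisson integral, uniqueness by a Lipschitz/Gronwall argument on $\SSS_2$) is the same as the paper's, and your reading of where the covariance \eqref{eq:cor} comes from is correct. But there are two concrete gaps in the way you propose to execute it.

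First, you declare the tightness of the time-marginals ``trivial'' because $\|\rho^N_t\|_{\SSS_p}\le 1$. A uniform norm bound does not give tightness of a family of random variables in an infinite-dimensional Banach space: condition (i) of the Aldous--Kurtz criterion requires that $\rho^N_t$ concentrates on a \emph{compact} subset of $\SSS_p$, and bounded sets of $\SSS_p$ are not relatively compact. This is exactly what the moment bounds of \eqref{initcond} are for: the paper propagates $\Tr(X\rho^N_t X)$ and $\Tr(\inab\rho^N_t\inab)$ (Proposition~\ref{propSPDE}) and shows that the sets $\KK_M=\{\rho\in\SSS_1^+:\Tr\rho=1,\ \|X\rho X\|_{\SSS_1}+\|\inab\rho\inab\|_{\SSS_1}\le M\}$ are compact in $\SSS_1$ (Proposition~\ref{prop:compS1}, which needs a genuinely non-trivial argument bounding the number of orthonormal vectors with bounded position and momentum spread). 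You do propagate the moments, but you assign them the wrong role, and the compactness step is missing.

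Second, your plan to control the modulus of continuity by ``Lipschitz-type estimates on $[H_0,\cdot]$'' cannot work: $[H_0,\rho]$ is not controlled in any $\SSS_p$-norm by the first moments $\Tr(\inab\rho\inab)$ and $\Tr(X\rho X)$ --- one would need a second momentum moment $\Tr(\Delta\rho\Delta)$, which is neither assumed nor propagated. This is precisely why the paper filters out the free transport, working with $\trho^N_t=e^{itH_0}\rho^N_te^{-itH_0}$ as in \eqref{PJ_proc}, so that no unbounded operator appears in the evolution \eqref{SPDEpoisson}; the free flow itself is then handled separately through the H\"older-$\tfrac12$ estimate $\|e^{-ihH_0}\rho e^{ihH_0}-\rho\|_{\SSS_1}\le 2\sqrt{h\,\Tr(\inab\rho\inab)}$ of Lemma~\ref{rho_and_T}, not through a bound on the integrand $[H_0,\rho_s]$. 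Without this detour the martingale $M^N$ also fails to live in $\SSS_2$ as an integral equation with bounded drift, which is what the infinite-dimensional representation theorem requires. Two smaller points: your stated reason for excluding $p=1$ is incorrect --- $\theta_p[\cdot]$ \emph{is} bounded on $\SSS_1$ (it is an average of unitary conjugations, Proposition~\ref{prop:Schatten}(iii)); the restriction to $p>1$ comes from the fact that the modulus-of-continuity estimate is proved in $\SSS_2$ and only interpolates down to $p>1$. And the dissipator $\alpha^2(\theta_\infty[\rho]-\rho)$ is not produced by an It\^o correction from the quadratic variation: it is already present as the compensator drift in \eqref{VN-PPPcomp}; the second-order term only yields the commutator noise $[dW_t,\rho]$.
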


\begin{rmk}
The dissipative term as the Lindblad form~\cite{Lindblad}: It may precisely be rewritten
\[
\theta_\infty[\rho_\infty] - \rho_\infty = \int_{\R} \bigl( e^{ip \cdot } \rho_\infty e^{-ip\cdot } -1 \bigr) \, \mu_m(dp),
\]
so it is a linear combination of terms of the form $U \rho_\infty U^* - \rho_\infty$, with isometric $U$. 
\end{rmk}

\begin{rmk} An intuitive derivation of the Brownian potential is that $N^{-1/2} \tilde P_N$ converges in law towards a Gaussian white noise $\dot B$ in time, position and velocity, i.e.\ with correlation function
\[\E[\dot B_s(x,p) \dot B_t(y,q)]= \delta(s-t)  \frac{1}{2R}\indiq_{[-R,R]}(x) \delta(x-y) \mu_m(p)\delta(p-q).\]
Then, the stochastic potential $dW$ may be written as
\begin{equation} \label{GField}
dW_t(X) = \pm \alpha  \int_{\R^2}   e^{-2 p^2} \gamma(X-x) \dot W(t,x,p) \, dp \,dx,
\end{equation}
and then \eqref{eq:cor} comes from the following computation
\begin{align*}
\E \bigl[  W_s(X) \, W_t(X') \bigr]  & = \alpha^2 \int_0^s\int_0^t \E[ dW_u(X) dW_v(X') ]dudv\\
&=\alpha^2  \int_0^s\int_0^t \int_{\R^4} e^{-2 (p^2+q^2)} \gamma(X-x) \gamma(X'-y) \E \bigl[\dot B(u,x,p) \dot B(v,y,q) \bigr] \,dx\,dy\,dp\,dq \\
& =(s \wedge t) \frac{\alpha^2}{2R}  \sqrt{\frac{\beta}{2\pi}} \intPst    e^{-\frac12 \bigl[ (X-x)^2+ (X'-x)^2 \bigr]} e^{- \bigl( 4+ \frac 12 \beta \bigr) p^2} dx\, dp \\
& =  (s \wedge t)  \frac{\alpha^2}{2R} \sqrt{ \frac{\beta_0}{8 -  \beta_0}} e^{-\frac12 (X-X')^2}
 \int_{-R}^R  e^{- \bigl( x - \frac{X+X'}2\bigr)^2}   \,dx.
\end{align*}
\end{rmk}

The term $ i \alpha^2 \bigl( \rho_\infty - \theta[\rho_\infty] \bigr)$ is the decoherent term: it roughly decrease the ``off-diagonal'' terms of the density operator. 
But since the random potential term act in the It\=o sense, the evolution equation without the decoherent term is non reversible. In order to transform it in a reversible evolution plus some dissipative term, we have to switch to the the Stratonovich representation: 
\begin{multline*} 
i d \rho_\infty = \bigl[   H_0 dt + \circ \, dW_t , \rho_\infty \bigr]
+ i\, \alpha^2 \bigl(  \theta_\infty[\rho_\infty] - \rho_\infty \bigr) \, dt \\
+ i \, \frac{\alpha^2}2 \sqrt{\frac{\beta_0}{8 - \beta_0}}\, \int_{-R}^R \bigl[  \gamma( \cdot -x), 
\bigl[  \gamma (\cdot -x) , \rho_\infty \bigr]
 \bigr] \,dx \,dt.
\end{multline*}
Remark that this is again a Lindblad super-operator but with the bad sign: the double commutator is equal to $[\gamma, [\gamma, \rho_\infty]] = - (2 \gamma \rho \gamma - \gamma^2 \rho - \rho \gamma^2 )$. 
The new term may also be written $-\frac12 \E\bigl[ [dW_t [ dW_t, \rho_\infty]] \bigr]$, and has an explicit but complicated formula in terms of the $\erfc$ function. This complicated form is due to the cut-off, and does not give much information.

However, in kernel formulation it corresponds to the  multiplication by (up to a factor) :
\begin{equation}\label{def:gamma_inf}
\gamma_\infty(X,X') :=  \frac12 \sqrt{\frac{\beta_0}{8 - \beta_0}}\, 
\biggl(\int_{-R}^R \bigl(\gamma( X -x) -  \gamma (X' -x) \bigr)^2  
\,dx \biggr).
\end{equation}
Denoting by $\gamma_\infty[\rho_\infty]$ the operator with kernel $\gamma_\infty(X,X') \rho_\infty(X,X')$, we end up with the above Stratonovich version of~\eqref{LB-sto}:
\begin{equation} \label{LB-strat}
i d \rho_\infty = \bigl[   H_0 dt + \circ \, dW_t , \rho_\infty \bigr]
+ i\, \alpha^2 \bigl(\theta_\infty[\rho_\infty] + \gamma_\infty[\rho_\infty]  -  \rho_\infty \bigr) \, dt
\end{equation}
Since the new function $\gamma_\infty$ defined in~\eqref{def:gamma_inf} vanishes on the diagonal $\{ X=X'\}$ and is positive outside, the associated in the above equation is re-coherent. 

The Stratonovich formulation~\eqref{LB-strat} separate the dynamics 
in a reversible evolution (Hamiltonian with noise in the Stratonovich sense) and a dissipative term. So the ``true'' decoherent term is 
actually the one appearing in~\eqref{LB-strat} and not the one of~\eqref {LB-sto}: for instance it is that term that should be taken into account to calculate the decrease of the ``off-diagonal'' terms of the density operators.

\medskip
\paragraph{\bf Annealed convergence}

If we are interested only in the mean behaviour of $\rho^N$, then we can take the expectation in~\eqref{VN-PPPcomp}, and obtain, since the integral with respect to the compensated PPP is a martingale, that 
\begin{equation} \label{eq:ann}
\E [\rho^N_t] = \rho^N_0 -i \int_0^t  \bigl[   H_0 , \E[\rho_s^N] \bigr]  \,ds -  \alpha^2 \int_0^t  \Bigl( \E[\rho^N_s] -  \theta_\infty\bigl[ \E[\rho_s^N] \bigr] \Bigr) \,ds.
\end{equation} 
To rigorously commute the expectation with $H_0$, we will introduce the unitary group generated by $i H_0$, which does commute with the expectation as a continuous linear application (see below). This  equation is deterministic and independent of $N$, so it is certainly satisfied also by the limit process. In fact, taking the expectation in the  It\=o formulation~\eqref{LB-sto} of the limit equation also leads to~\eqref{eq:ann}: it is enough to erase the stochastic terms. So the annealed convergence is formally obvious. To do it rigorously, it is enough to prove that the (deterministic) annealed equation~\eqref{eq:ann} admits a unique solution. But, once the free evolution is filtrated, it becomes a linear evolution equation, involving only a bounded (super)-operator: its solutions are clearly given by the exponential of this operator. We will not give more details on that point, and concentrate in the rest of the paper on the more difficult problems raised by quenched evolution.

Remark that the annealed equation is more decoherent that the quenched one, since the (re-coherent) $\gamma_\infty$ term of~\eqref{LB-strat} has disappeared in~\eqref{eq:ann}. In fact in the annealed model, we do not care about the correlations due to the ``common history'' shared by distant points; in the sense that  for any realization of the noise, the density operator defined in the whole space evolves according to its global equation.

\medskip
\paragraph{\bf Existing literature and interest of that new model.}

The main interest of this model is twofold. First, it is to the best of our knowledge the first rigorous derivation of a Lindblad equation. In fact, most of the works on the decoherence induced by the environment are found in the physic literature~\cite{JoosZeh,Diosi,AMS,DoddHall,HornSipe,Adler}. But of course, we do it from a toy model, and the derivation from a more realistic dynamics seems very challenging.

Moreover we naturally end up with a model which incorporate time fluctuations via a random time dependent ``effective'' potential. In fact, in the previous physicist literature~\cite{JoosZeh,Diosi,AMS,DoddHall,HornSipe,Adler}, the computation of exact decoherence coefficients was performed without time fluctuations: these authors do not use any PPP (or something similar) for the description of the environment, but rather use only the intensity measure~\eqref{PPPbis}, i.e. only the mean effect. In short, they derive some annealed models like~\eqref{eq:ann}.
So we show here that taking into account the time fluctuations of the environment, randomness remains in the limit, but in a non obvious way, i.e. as a random ``effective'' potential. 
Moreover and because of the random term, the effective decoherence in the quenched model is lower than the decoherence given by the annealed equation. 

\medskip
\paragraph{\bf Plan of the paper}
In Section~\ref{section:tech}, we collect some preliminary results about the effect of one collision, and the evolution of the kinetic energy and  the momentum in position. The Section~\ref{proof1} is devoted to the proof of Theorem~\ref{thm:main}. Finally, several appendix give some important lemmas and properties: on the norm of important operators in the Schatten class, on martingales valued in $\SSS_2$, on compact subsets of $\SSS_1$.

%
%
%
%
\section{Preliminary results}\label{section:tech}

Let us start by proving that it is possible to exchange unbounded linear operator with the Bochner integration (written here as expectation). 
\begin{lem} \label{lem:exchange}
\begin{enumerate}
\item[(i)] Assuming that $\rho$ is a random symmetric nonnegative operator (that is it belongs to $\SSS_1^+$), we have the two following equalities:
\[
\E\Bigl[ \Tr \bigl( \inab \rho \inab \bigr) \Bigr] =  
\Tr \bigl( (i\nabla) \E[\rho]  (i\nabla) \bigr),
 \qquad 
  \E\bigl[   \Tr \bigl( X \rho X) \bigr] =  
  \Tr \bigl( X \E[\rho]  X \bigr).
\]
\item[(ii)] Let $(\rho_n)_{n \in \N}$ be a sequence of random symmetric nonnegative operator converging weakly in law towards some $\rho$, (weakly means that for any finite rank operator $A$, $\la \rho_n, A\ra_{\SSS_2} \rightarrow \la \rho, A\ra_{\SSS_2}$  in law). Then, we have
\[
\E\Bigl[ \Tr \bigl( \inab \rho \inab \bigr) \Bigr] \le 
\liminf_{n \to \infty} \E\Bigl[ \Tr \bigl( \inab \rho_n \inab \bigr) \Bigr],
\qquad
\E\bigl[   \Tr \bigl( X \rho X) \bigr] \le 
\liminf_{n \to \infty}
\E\bigl[   \Tr \bigl( X \rho_n X) \bigr].
\]
\end{enumerate}
\end{lem}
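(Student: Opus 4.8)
The plan is to dispose of the unboundedness of $i\nabla$ and $X$ by a double truncation: a spectral cutoff to pass to bounded operators, and then, for part (ii), a further finite rank approximation from below. For the two equalities of (i) the first step already suffices, combined with the commutation of the Bochner integral with bounded linear maps recalled in the Remark above; for the lower semicontinuity in (ii) the second step is what makes the hypothesis usable, since weak convergence in law is only assumed against finite rank operators.

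For (i), I would let $E_k$ be the spectral projection of $i\nabla$ onto $\{|\xi|\le k\}$ (a Fourier multiplier) and set $M_k := E_k (i\nabla)^2 E_k$, which is bounded, self-adjoint, nonnegative, with $\|M_k\|\le k^2$. Expanding $\rho\in\SSS_1^+$ as $\rho=\sum_j\mu_j(\rho)|\psi_j\ra\la\psi_j|$ with $\mu_j(\rho)\ge0$ as in~\eqref{expansion}, monotone convergence in the $j$-sum and in the Fourier integral gives $\Tr(M_k\rho)\nearrow\Tr\bigl((i\nabla)\rho(i\nabla)\bigr)\in[0,+\infty]$ as $k\to\infty$; the same holds for $\E[\rho]$, which again lies in $\SSS_1^+$ because that set is a closed convex cone and $\Tr\E[\rho]=1$. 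For fixed $k$ the functional $\rho\mapsto\Tr(M_k\rho)$ is bounded linear on $\SSS_1$, so the Bochner integral commutes with it: $\E[\Tr(M_k\rho)]=\Tr(M_k\E[\rho])$. Letting $k\to\infty$, monotone convergence on the left and the preceding limit on the right give the first equality. The second is obtained identically, replacing $E_k$ by multiplication by $\indiq_{\{|X|\le k\}}$ and $M_k$ by $X^2\indiq_{\{|X|\le k\}}$.

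For (ii), write $G(\rho):=\Tr\bigl((i\nabla)\rho(i\nabla)\bigr)$ and $G_k(\rho):=\Tr(M_k\rho)$, so that $G_k\le G$ and $G_k(\rho)\nearrow G(\rho)$ on $\SSS_1^+$ by the previous step. Fixing $k$ and an orthonormal basis $(f_i)_{i\ge1}$ of $L^2(\R)$, let $P_J$ be the projection onto $\mathrm{span}(f_1,\dots,f_J)$ and set $A_{k,J}:=M_k^{1/2}P_JM_k^{1/2}=\sum_{i\le J}|M_k^{1/2}f_i\ra\la M_k^{1/2}f_i|$, a finite rank nonnegative operator with $0\le A_{k,J}\le M_k$ and satisfying $\Tr(\rho A_{k,J})=\sum_{i\le J}\la f_i,M_k^{1/2}\rho M_k^{1/2}f_i\ra\nearrow_J G_k(\rho)$ for every $\rho\in\SSS_1^+$. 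Since $A_{k,J}$ has finite rank, the hypothesis gives $\Tr(\rho_n A_{k,J})\to\Tr(\rho A_{k,J})$ in law; these variables being bounded (between $0$ and $k^2$, using $\Tr\rho_n\le1$) their expectations converge, and moreover $A_{k,J}\le M_k$ forces $\Tr(\rho_n A_{k,J})\le G(\rho_n)$, whence $\E[\Tr(\rho A_{k,J})]=\lim_n\E[\Tr(\rho_n A_{k,J})]\le\liminf_n\E[G(\rho_n)]$. Letting $J\to\infty$ and then $k\to\infty$ by monotone convergence yields $\E[G(\rho)]\le\liminf_n\E[G(\rho_n)]$, and the statement for $X$ is handled the same way.

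The only genuine difficulty is this unboundedness: $(i\nabla)^2$ and $X^2$ cannot be tested against $\rho$ in a way that is continuous for the weak convergence in law at hand, so they must be approached from below by finite rank operators $A_{k,J}$ while maintaining both the domination $A_{k,J}\le M_k$ and the chain of monotone limits $\Tr(\rho A_{k,J})\nearrow G_k(\rho)\nearrow G(\rho)$. Beyond this the verifications are routine (that $\E[\rho]\in\SSS_1^+$, that $\Tr(M_k\rho)=\Tr(M_k^{1/2}\rho M_k^{1/2})$, and the elementary monotone convergences).
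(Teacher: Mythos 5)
Your proof is correct and follows essentially the same strategy as the paper's: approximate the unbounded quadratic forms from below by an increasing family of finite-rank, bounded, nonnegative linear functionals (the paper uses the partial sums $\lambda_M(\rho)=\sum_{j\le M}\la \inab\psi_j|\rho|\inab\psi_j\ra$ over a smooth wavelet basis, where you use a spectral cutoff followed by a finite-rank projection), commute these with the Bochner integral for item (i), and conclude by monotone convergence together with a Fatou-type passage to the limit in law for item (ii). The only caveat is that your step ``these variables being bounded \dots their expectations converge'' relies on $\Tr\rho_n\le 1$, which holds in the application but is not among the lemma's stated hypotheses; the paper sidesteps this by invoking Billingsley's Fatou lemma for convergence in law, which yields $\E\bigl[\Tr(\rho A_{k,J})\bigr]\le\liminf_n\E\bigl[\Tr(\rho_n A_{k,J})\bigr]$ with no boundedness needed, and you could substitute that argument verbatim.
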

Remark that the usual property of the Bochner integral may not be applied since the linear functionals $\rho \mapsto \Tr \bigl( \inab \rho \inab \bigr)$ and $\rho \mapsto \Tr \bigl( X \rho X)$ are not bounded linear functionals. But, when $\rho$ is a symmetric nonnegative operator, they are at least defined in $[0,+\infty]$, and we can improve the result as in the cases we integrate nonnegative functions on $\R$ rather than general functions without specific sign.

\begin{proof}[of Lemma \ref{lem:exchange}]
Here, we restrict ourselves to the proof of the results regarding the kinetic energy. The proofs to obtain the corresponding result for the momentum in position follow the same lines. 

\emph{Item (i).} First, Let us pick up a basis $(\psi_n)_{n \in \N}$ of $L^2(\R)$ made of smooth functions: use for instance the smooth wavelets introduced by Lemari\'e and Meyer~\cite{lemarie}. Then, using that $i \nabla$ is self-adjoint, we may write
\[
\Tr \bigl( \inab \rho \inab \bigr) =
\sum_{j = 1}^{+\infty} \bigl\la \psi_j \big| \inab \rho \inab \big| \psi_j \bigr\ra
=
\sum_{j = 1}^{+\infty} \bigl\la \inab \psi_j \big|  \rho  \big| \inab \psi_j \bigr\ra,
\]
where the equalities are still valid even if the sum in the r.h.s. is infinite since all the terms are non negative. Introducing the following notation for the partial sum
\begin{equation} \label{lambda_M}
\lambda_M(\rho ) :=  \sum_{j=1}^M \bigl\la \inab \psi_j \big|  \rho  \big| \inab \psi_j \bigr\ra,
\end{equation}
it is clear that for any $\rho \in \SSS_1^+$, 
\[\lim_{M\to+\infty}\lambda_M(\rho)= \Tr \bigl( \inab \rho \inab \bigr),\]
and for each $M$, $\lambda_M$ is also a bounded linear mapping form $\SSS_1$ to $\R$, so that thanks to the usual properties of the Bochner integral we have $\E\bigl[\lambda_M(\rho)\bigr] = \lambda_M \bigl( \E[\rho]\bigr)$. Now, using that $\E[\rho]$ is still symmetric and nonnegative, we also have 
 \[\lim_{M \rightarrow +\infty} \lambda_M \bigl(\E[\rho]\bigr) = \Tr \bigl( \inab \E[\rho] \inab \bigr),\qquad 
\text{and}\qquad\lim_{M\to+\infty}\E\bigl[\lambda_M(\rho)\bigr]=\E\bigl[ \Tr \bigl( \inab \rho \inab \bigr) \bigr],\]
by Lebesgue's monotone convergence theorem.

\emph{Item (ii).} According to \cite[ Theorem 3.4 pp. 31]{billingsley}, the convergence in law of $\lambda_M(\rho_n)$ towards $\lambda_M(\rho)$ for any $ M \in \N$, implies that
\[
\liminf_{n \to \infty} \E \bigl[ \Tr \bigl( \inab \rho_n \inab \bigr) \bigr] \ge 
\liminf_{n \to \infty} \E \bigl[\lambda_M(\rho_n) \bigr] \ge \E \bigl[\lambda_M(\rho) \bigr].
\]
Then, from item $(i)$, $\lim_{M \to \infty} \E \bigl[\lambda_M(\rho) \bigr] = \E \bigl[ \Tr \bigl( \inab \rho \inab \bigr) \bigr]$, and the conclusion follows.
\end{proof}

\subsection{The effect of one  collision} \label{subsec:1col}

This section states some preliminary results which are needed in the proof of Theorem \ref{thm:main} to obtain a tightness result for $(\rho_N)_N$. The first one deals with the effect of only one collision on the density operator.

\begin{lem} \label{lem:I_bound}
For all $q \ge 1$, $p\in\R$, $x\in [-R,R]$, and $\rho\in\SSS_q$, we have for $I^N_{p,x}$ is defined by~\eqref{approx:I}:
\begin{equation} \label{eq:L2col}
 \bigl\|  ( I^N_{p,x} -Id)[\rho] \bigr\|_{\SSS_q} 
 \le 2\,   \Bigl(  \frac{\alpha^2}N + \frac{\alpha}{\sqrt N} e^{-2 p^2} \Bigr)  \, \|  \rho \|_{\SSS_q},
\end{equation}
If moreover, $\Tr \bigl( \inab \rho \inab \bigr) < + \infty$, then 
\begin{equation}
\Bigl| \Tr \Bigl( \inab I^N_{p,x}[\rho] \inab \Bigr)  - 
\Tr \Bigl( \inab \rho \inab \Bigr) \Bigr| \le 
 \frac {\alpha + \alpha^2}{\sqrt N} (1+8p) \sqrt{\Tr \bigl( \inab \rho \inab \bigr)} + 
  \frac{\alpha^2}N \Bigl( 1 + 2p^2 \Bigr).
\end{equation}

\end{lem}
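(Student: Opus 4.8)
The plan is to treat the two estimates separately, in each case splitting $I^N_{p,x}-\mathrm{Id}$ into the two elementary pieces visible in the operator form~\eqref{approx:Iop}: the ``$\theta$-piece'' $\frac{\alpha^2}N\bigl(\theta_p[\rho]-\rho\bigr)$, and the ``$\gamma$-piece'' $\pm i\frac{\alpha}{\sqrt N}\sqrt{1-\frac{\alpha^2}N}\,e^{-2p^2}\,[\gamma(\cdot-x)\,\rho\,]$ (here I read the bracket as the commutator $[\gamma(\cdot-x),\rho]$, which is what gives trace preservation; in any case the two one-sided products are handled the same way). For the first, Schatten-norm estimate~\eqref{eq:L2col} I would invoke the Schatten-norm bounds for the elementary operations recorded in Appendix~\ref{Sch_Bd}: multiplication by a bounded function has operator norm on $\SSS_q$ bounded by its $L^\infty$ norm, so $\|\gamma(\cdot-x)\rho\|_{\SSS_q}\le\|\gamma\|_\infty\|\rho\|_{\SSS_q}=\|\rho\|_{\SSS_q}$, hence the commutator piece is $\le 2\|\rho\|_{\SSS_q}$ times its prefactor; similarly $\|\theta_p[\rho]\|_{\SSS_q}\le\|\rho\|_{\SSS_q}$ since $\theta_p[\rho]=\frac1{\sqrt{2\pi}}\int e^{ik\cdot}\rho e^{-ik\cdot}\widehat\theta_p(k)\,dk$ with $\|\widehat\theta_p\|_{L^1}$ controlled, or more directly because Fourier conjugation is an $\SSS_q$-isometry and $\theta_p$ acts by a pointwise-in-kernel multiplication by a function of modulus $\le 1$. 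Bounding $\sqrt{1-\alpha^2/N}\le 1$ and adding the two contributions gives exactly~\eqref{eq:L2col}.

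For the second, kinetic-energy estimate I would expand $\Tr\bigl(\inab I^N_{p,x}[\rho]\inab\bigr)-\Tr\bigl(\inab\rho\inab\bigr)$ using~\eqref{approx:Iop} into a $\theta$-contribution and a $\gamma$-contribution, and estimate each by working on the Fourier side / with the quadratic form $\langle\inab\psi_j|\cdot|\inab\psi_j\rangle$ as in the proof of Lemma~\ref{lem:exchange}. The key algebraic fact is the commutator identity $[i\nabla,f]=i f'$ for a smooth $f$, so that $i\nabla\,(f\rho)\,i\nabla = f\,(i\nabla\rho\, i\nabla) + (if')\,(\rho\, i\nabla) + \cdots$; applying this with $f=\gamma(\cdot-x)$ (for which $\|\gamma\|_\infty\le 1$, $\|\gamma'\|_\infty\le 1$ since $\gamma'(X)=-X e^{-X^2/2}$ has sup norm $e^{-1/2}<1$, and $\|\gamma''\|_\infty\le 1$) produces, after taking traces and using Cauchy--Schwarz in the form $|\Tr(A^*B)|\le\sqrt{\Tr(A^*A)}\sqrt{\Tr(B^*B)}$ together with $\Tr\rho=\|\rho\|_{\SSS_1}$ bounded (it equals $1$ here, or one keeps $\|\rho\|_{\SSS_1}$ explicit), a bound of order $\frac{\alpha}{\sqrt N}e^{-2p^2}\bigl(\sqrt{\Tr(\inab\rho\inab)}+\cdots\bigr)$. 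For the $\theta_p$ term one writes $\theta_p[\rho]=e^{2ip\cdot}\bigl(\gamma_{\!/\sqrt2}\text{-type multiplication}\bigr)$; more cleanly, using $\theta_p[\rho]=\frac1{\sqrt{2\pi}}\int e^{ik\cdot}\rho\,e^{-ik\cdot}\widehat\theta_p(k)dk$ and that conjugation by $e^{ik\cdot}$ shifts momentum by $k$, one gets $\Tr\bigl(\inab\theta_p[\rho]\inab\bigr)=\frac1{\sqrt{2\pi}}\int\Tr\bigl((i\nabla+k)\rho(i\nabla+k)\bigr)\widehat\theta_p(k)dk$, and expanding the square in $k$ and using that $\widehat\theta_p$ is a Gaussian centered at $2p$ with fixed variance (so its moments up to order $2$ are $O(1+|p|+p^2)$) yields the $\frac{\alpha^2}N(1+2p^2)$ term plus a cross term $\frac{\alpha^2}{\sqrt N}\cdot(\text{const}\cdot p)\sqrt{\Tr(\inab\rho\inab)}$. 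Collecting everything and absorbing numerical constants into the stated coefficients $\frac{\alpha+\alpha^2}{\sqrt N}(1+8p)$ and $\frac{\alpha^2}{N}(1+2p^2)$ finishes the proof.

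The main obstacle I anticipate is the $\gamma$-piece of the second inequality: it is the term whose prefactor is only $N^{-1/2}$, so to get a bound proportional to $\sqrt{\Tr(\inab\rho\inab)}$ (rather than to $\Tr(\inab\rho\inab)$ itself, which would be useless for the tightness argument) one must peel off \emph{one} factor of $i\nabla$ via the commutator identity and pair the remaining $i\nabla\rho$ with a bounded operator using Cauchy--Schwarz, while carefully tracking that the ``leftover'' factor contributes only $\sqrt{\Tr\rho}=1$ and bounded multiplication norms. Keeping the explicit $p$-dependence linear (the factor $(1+8p)$) rather than quadratic requires being slightly careful that the only place $p$ enters the $\gamma$-piece is through the scalar $e^{-2p^2}\le 1$, so in fact that piece contributes no growth in $p$ at all, and all the $p$-growth in the final bound comes from the $\theta_p$-piece, whose Gaussian weight $\widehat\theta_p$ is centered at $2p$. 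Once this bookkeeping is set up, the remaining computations are routine Gaussian integral estimates.
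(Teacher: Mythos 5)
Your proposal is correct and follows essentially the same route as the paper: the same decomposition of $I^N_{p,x}-\mathrm{Id}$ into the $\theta_p$-piece and the commutator $\gamma$-piece, the Schatten bounds of Proposition~\ref{prop:Schatten} for \eqref{eq:L2col}, and for the kinetic-energy estimate the conjugation identity $\inab e^{ikX}=e^{ikX}(\inab-k)$ integrated against $\widehat\theta_p$ (this is exactly Lemma~\ref{kin_eq}) together with peeling off one derivative via $[\inab,\gamma]=i\gamma'$, discarding the trace of the resulting commutator, and applying Cauchy--Schwarz (Lemma~\ref{lem:mix}) to the leftover cross terms. The bookkeeping you describe (all $p$-growth coming from the moments of $\widehat\theta_p$, the $\gamma$-piece contributing only through $e^{-2p^2}\le 1$ and $\|\gamma'\|_\infty\le 1$) matches the paper's computation.
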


\begin{proof}

 The first point is a simple application of $(iii)$ and $(iv)$ of Proposition~\ref{prop:Schatten} given in Appendix~\ref{Sch_Bd}. In fact, we have by definition
\[
( I^N_{p,x} -Id)[\rho]  = - \frac{\alpha^2}N \rho + \frac{\alpha^2}N \theta_p[\rho] \pm i e^{-2p^2} \frac{\alpha}{\sqrt N} \sqrt{1 - \frac{\alpha^2}N} \bigl( \gamma \rho - \rho  \gamma \bigr), \]
and then,
\[\begin{split}
\|( I^N_{p,x} -Id)[\rho]\|_{\SSS_q} &\le  \frac{\alpha^2}N \| \rho \|_{\SSS_q}+ \frac{\alpha^2}N \bigl\|\theta_p[\rho] \bigl\|_{\SSS_q} +  \frac{\alpha \, e^{-2p^2}}{\sqrt N} \bigl( \|  \gamma \rho \|_{\SSS_q} + \| \rho  \gamma \|_{\SSS_q} \bigr) \\
& \le \frac{\alpha^2 }N   
 \biggl(1+ \frac{\bigl\| \widehat \theta _p\bigr\|_1}{\sqrt{2 \pi}}\biggr)\| \rho \|_{\SSS_q}
+ \frac{2 \alpha\, e^{-2p^2} }{\sqrt N} \| \gamma \|_\infty \| \rho  \|_{\SSS_q}.
\end{split}\]
The conclusion follows using the definitions~\eqref{def:ta}: $\bigl\| \widehat \theta_p \bigr\|_1 = \sqrt{2 \pi}$
and $\| \gamma \|_\infty = 1$.

\medskip
The inequality on the kinetic energy is proved as follows. First, from~\eqref{approx:I}
\begin{multline*}
\Tr \bigl( \inab I^N_{p,x} [\rho] \inab  \bigr)- \Tr \bigl( \inab \rho \inab \bigr) \Bigr) =  \frac{\alpha^2}N  \Bigl(\Tr \bigl( \inab \theta_p [\rho] \inab \bigr) - \Tr \bigl( \inab \rho \inab \bigr) \Bigr) \\
\pm i e^{-2p^2}\frac\alpha{\sqrt N} \sqrt{1 - \frac{\alpha^2}N}  \Tr \Bigl( \inab  [ \gamma,\rho] \inab  \Bigr)
\end{multline*}
Next using point $(iii)$ of Lemma~\ref{kin_eq} with $\theta_p(Y) = e^{2i pY - \frac{Y^2}2}$, which satisfies $\theta_p(0) =0$, $\theta_p'(0) = 2i p$, $\theta_p''(0) = -(1 + 4 p^2)$, we get
\[
\Tr \bigl( \inab \theta_p [\rho] \inab \bigr) - \Tr \bigl( \inab \rho \inab \bigr) 
=- 4p \Tr \bigl( \inab \rho\bigr) + (1+ 4p^2).
\]
Then Lemma~\ref{lem:mix} implies that $\bigl| \Tr \bigl( \inab \rho\bigr) \bigr| \le 2 \sqrt{\Tr \bigl( \inab \rho \inab \bigr)}$. And finally
\begin{equation} \label{eqcin_step1}
\Bigl|\Tr \bigl( \inab \theta_p [\rho] \inab \bigr) - \Tr \bigl( \inab \rho \inab \bigr)\Bigr| \le   8 p \sqrt{ \Tr \bigl( \inab \rho \inab \bigr) }+ (1 + 4p^2).
\end{equation}
To treat the remaining term, 
we use $\bigl[\inab, \gamma \bigr] = i \gamma'$ and write
\[
\inab [\gamma, \rho] \inab = \bigl[ \gamma, \inab \rho \inab \bigr] +  i \bigl( \gamma' \rho \inab + \inab \rho \gamma'\bigr)
\]
The first term does belongs to $\SSS_1$, thanks to the assumption $Tr \bigl( \inab \rho \inab \bigr) < + \infty$ and point $(iv)$ of Proposition~\ref{prop:Schatten}.
And moreover, its trace vanishes, since it is a bracket.
The second term is bounded with the help of Lemma~\ref{lem:mix}, 
\begin{align*}
\Tr \bigl( \gamma' \rho \inab + \inab \rho \gamma' \bigr) & \le
2 \sqrt{\Tr\bigl( \gamma' \rho \gamma' \bigr) \Tr\bigl( \inab \rho \inab \bigr)} \\
& \le 2 \sqrt{ \| \gamma'\|_{\SSS_\infty}^2\Tr( \rho\bigr) \Tr\bigl( \inab \rho \inab \bigr)}
\le 2 \| \gamma'\|_\infty \sqrt{ \Tr\bigl( \inab \rho \inab \bigr)}.
\end{align*}
All in all, since $\| \gamma'\|_\infty \le 1$, we end up with
\[
\Bigl| \Tr \Bigl( \inab I^N_{p,x}[\rho] \inab \Bigr)  - 
\Tr \Bigl( \inab \rho \inab \Bigr) \Bigr| \le 
\frac {\alpha + \alpha^2}{\sqrt N} (1+8p) \sqrt{\Tr \bigl( \inab \rho \inab \bigr)} + 
  \frac{\alpha^2}N \Bigl( 1 + 2p^2 \Bigr),
\]
which is the claimed inequality.
\end{proof}

\subsection{The evolution of kinetic energy and momentum in position}

The following proposition describes the evolution of the expected values for the kinetic energy and the momentum in position under the effect of collisions. We will rely on Lemma~\ref{lem:exchange} that allow to exchange expectation and unbounded operators under some positivity assumption: the mean kinetic energy and the mean momentum in position are given respectively by the kinetic energy and momentum in position of the mean density operator.  

\begin{prop}\label{propSPDE}
Assume that $(\rho^N_t)_{t \ge 0}$ is a solution of~\eqref{VN-jump} satisfying~\eqref{initcond}. We have for $t \ge 0$ 
\begin{align}\label{unifbound1}
  \E\Big[ \Tr \bigl( (i\nabla) \rho^N_t (i\nabla)  \bigr) \Big] &=  \Tr \bigl( \inab \rho_0 \inab \bigr) + 
  \frac{4 \alpha^2}{\beta_0} t, \\
 \label{unifbound2} 
  \E\Big[   \Tr \bigl( X \rho^N_t X) \Big]  & =  \Tr \bigl( X \rho_0 X) + 
  \Tr\bigl( \inab \rho_0 X + X \rho_0 \inab \bigr) \, t +
  \Tr \bigl( \inab \rho_0 \inab \bigr) \, t^2 +
  \frac{4 \alpha^2}{3 \beta_0} t^3,
\end{align}
  where
  \[ \big\vert \Tr\bigl( \inab \rho_0 X + X \rho_0 \inab \bigr)\big\vert  \leq2 \sqrt{ \Tr \bigl( X \rho_0 X) \, \Tr \bigl( \inab \rho_0 \inab \bigr)}<+\infty. \]
 \end{prop}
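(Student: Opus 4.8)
\textbf{Proof plan for Proposition \ref{propSPDE}.}

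The plan is to derive evolution equations for the two quantities $e(t):=\E[\Tr(\inab\rho^N_t\inab)]$ and $x(t):=\E[\Tr(X\rho^N_t X)]$ by testing the integral formulation \eqref{VN-PPPcomp} against the unbounded operators $\inab(\cdot)\inab$ and $X(\cdot)X$. The crucial preliminary is that, by Lemma \ref{lem:exchange}(i), these unbounded quadratic functionals commute with the Bochner expectation on $\SSS_1^+$, so that $e(t)=\Tr(\inab\E[\rho^N_t]\inab)$ and $x(t)=\Tr(X\E[\rho^N_t]X)$; one works throughout with the mean density operator $\bar\rho^N_t:=\E[\rho^N_t]$, which by \eqref{eq:ann} solves the closed \emph{deterministic} linear equation $\partial_t\bar\rho^N = -i[H_0,\bar\rho^N] - \alpha^2(\bar\rho^N-\theta_\infty[\bar\rho^N])$, with no $\sqrt N$ term and no noise. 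The task then reduces to computing how $\Tr(\inab\,\cdot\,\inab)$ and $\Tr(X\,\cdot\,X)$ evolve under this simple flow.

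First I would handle the kinetic energy. Differentiating $e(t)=\Tr(\inab\bar\rho^N_t\inab)$ and using the evolution equation, the commutator term $\Tr(\inab[H_0,\bar\rho^N]\inab)$ vanishes because $\inab$ commutes with $H_0=\tfrac12(\inab)^2$ (so the integrand is a trace of a commutator), and the dissipative term contributes $-\alpha^2(\Tr(\inab\bar\rho^N\inab)-\Tr(\inab\theta_\infty[\bar\rho^N]\inab))$. By the same computation as in Lemma \ref{lem:I_bound} — point $(iii)$ of Lemma \ref{kin_eq} applied now to $\theta_\infty(Y)=e^{-2Y^2/\beta_0}$, which satisfies $\theta_\infty(0)=1$, $\theta_\infty'(0)=0$, $\theta_\infty''(0)=-4/\beta_0$ — one gets $\Tr(\inab\theta_\infty[\bar\rho^N]\inab)-\Tr(\inab\bar\rho^N\inab)=4/\beta_0$ (the first-derivative term drops out since $\theta_\infty'(0)=0$, which is why no momentum term appears). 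Hence $e'(t)=4\alpha^2/\beta_0$, a constant, and integrating with $e(0)=\Tr(\inab\rho_0\inab)$ gives \eqref{unifbound1}. A small technical point to address: to justify differentiating under the trace and exchanging the infinite sum with $\partial_t$, I would work with the truncated functionals $\lambda_M$ of \eqref{lambda_M} (bounded linear), prove the identity at the level of $\lambda_M(\bar\rho^N_t)$, and pass to the limit using monotone convergence as in the proof of Lemma \ref{lem:exchange}; the a priori finiteness needed for $\inab\theta_\infty[\rho]\inab\in\SSS_1$ comes from point $(iv)$ of Proposition \ref{prop:Schatten} together with \eqref{initcond} propagated by the bound just derived.

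Next, for the momentum in position $x(t)=\Tr(X\bar\rho^N_t X)$, I would again differentiate along \eqref{eq:ann}. The Hamiltonian term is no longer a commutator trace: using $[X, H_0] = [X, \tfrac12(\inab)^2] = i\inab$ one finds $\Tr(X[H_0,\bar\rho^N]X) = -\,\Tr(\inab\bar\rho^N X + X\bar\rho^N\inab)$ (up to the factor $i$ from \eqref{eq:ann}), i.e.\ the ``position–momentum mixed'' term. The dissipative term $-\alpha^2(\Tr(X\bar\rho^N X)-\Tr(X\theta_\infty[\bar\rho^N]X))$ vanishes because multiplication by $\theta_\infty(X-X')$ acts trivially on the diagonal kernel region controlling $\Tr(X\rho X)=\int X^2\rho(X,X)\,dX$ — more precisely $\Tr(X\theta_\infty[\rho]X)=\Tr(X\rho X)$ since $\theta_\infty(0)=1$. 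So we obtain the linear ODE system: letting $m(t):=\Tr(\inab\bar\rho^N_t X+X\bar\rho^N_t\inab)$, we need $x'(t)=m(t)$; then differentiating $m(t)$ one picks up $\Tr([H_0,\cdot]\text{-terms})$ giving $2e(t)$ (from $[H_0,X]=-i\inab$ on each slot) plus a dissipative contribution from $\theta_\infty$ acting on the mixed quadratic form, which by the $\theta_\infty'(0)=0$ property again vanishes, so $m'(t)=2e(t)=2\Tr(\inab\rho_0\inab)+8\alpha^2 t/\beta_0$. Integrating the cascade $x''=m'$, $m(0)=\Tr(\inab\rho_0 X+X\rho_0\inab)$, $x(0)=\Tr(X\rho_0 X)$ produces exactly the polynomial \eqref{unifbound2}: the $t^0,t^1,t^2$ terms from $\rho_0$ and the $t^3$ term $\tfrac{4\alpha^2}{3\beta_0}t^3$ from the quadratic growth of $e$. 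Finally the stated bound on $|\Tr(\inab\rho_0 X+X\rho_0\inab)|$ is an immediate application of Lemma \ref{lem:mix} (Cauchy–Schwarz for traces) with the two factors $\inab$ and $X$, finite by \eqref{initcond}.

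The main obstacle is not the formal computation — which is a short sequence of commutator identities and Taylor expansions of $\theta_\infty$ at $0$ — but the rigorous justification of all the manipulations with \emph{unbounded} operators: differentiating $t\mapsto\Tr(X\bar\rho^N_t X)$ under the trace, commuting $\partial_t$ past the infinite spectral sum, and making sense of $\Tr(X[H_0,\bar\rho^N]X)$ and of the mixed term $\Tr(\inab\bar\rho^N X + X\bar\rho^N\inab)$ as honest (finite) traces of $\SSS_1$ operators at each time. I would resolve this by the truncation device of Lemma \ref{lem:exchange} (working with $\lambda_M$ and its position analogue, which are bounded hence differentiate painlessly along the norm-continuous flow $\bar\rho^N_t$), establishing the ODEs for the truncated quantities first, using the a priori $\SSS_1$-membership from Proposition \ref{prop:Schatten}$(iv)$ and Lemma \ref{lem:mix} to control the limits, and only then letting $M\to\infty$ by monotone convergence on the nonnegative pieces and dominated convergence (with the just-proven bounds) on the sign-indefinite mixed term.
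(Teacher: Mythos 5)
Your proposal is correct and follows essentially the same route as the paper: pass to the annealed equation \eqref{eq:ann} via Lemma~\ref{lem:exchange}, exploit the Taylor data $\theta_\infty(0)=1$, $\theta_\infty'(0)=0$, $\theta_\infty''(0)=-4/\beta_0$ through Lemma~\ref{kin_eq} and the free-evolution identities of Lemma~\ref{free_case} to get the ODE cascade $e'=4\alpha^2/\beta_0$, $x'=m$, $m'=2e$, and close with Lemma~\ref{lem:mix}. (The only nitpick: the kinetic-energy identity you invoke is item $(ii)$ of Lemma~\ref{kin_eq}, not $(iii)$.)
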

 
 \begin{proof}

Using Lemma~\ref{lem:exchange}, we can use the deterministic evolution equation~\eqref{eq:ann} on $\rho^{av}_t := \E\bigl[   \rho^N_t  \bigr]$. We first multiply it by $\inab$ on the right and on the left, and take the trace. Since the free unitary group preserves the kinetic energy, we then obtain 
\[
 \partial_t  \Tr \bigl( \inab \rho^{av}_t \inab \bigr)  = - \alpha^2
\Tr \bigl( \inab \rho^{av}_t \inab \bigr) 
+ \alpha^2 \Tr \bigl( \inab \theta_\infty \bigl[ \rho^{av}_t \bigr] \inab \bigr) .
\]
Next, we want to use $(ii)$ of Lemma~\ref{kin_eq} with $\theta_\infty$, which satisfies $\theta_\infty(0) =1$, $\theta_\infty'(0)=0$ and $\theta_\infty'' (0)=  - 4 \beta_0^{-1}$. Hence, using also that $\Tr \rho^{av}_t =1$, we obtain
\begin{equation}\label{kin_ep_theta}
\Tr \bigl( \inab \theta_\infty \bigl[ \rho^{av}_t \bigr] \inab \bigr) = 
\Tr \bigl( \inab \rho^{av}_t \inab \bigr) +
\frac 4{\beta_0},
\end{equation}
and the first equality follows easily.

To prove the second one, we first multiply by $X$ on the right and on the left of~\eqref{eq:ann} and take the trace to obtain
\[
 \partial_t  \Tr \bigl( X \rho^{av}_t X \bigr)  = 
 \Tr \Bigl(X \bigl[ H_0, \rho^{av}_t \bigr] X  \Bigr)
 - \alpha^2
\Tr \bigl( X \rho^{av}_t X \bigr) 
+ \alpha^2 \Tr \bigl( X \theta_\infty \bigl[ \rho^{av}_t \bigr] X \bigr).
\]
Now, using $(i)$ of Lemma~\ref{kin_eq} in Appendix~\ref{Sch_Bd}, again applied to $\theta_\infty$, the second term in the r.h.s. vanishes so that
\[
 \partial_t  \Tr \bigl( X \rho^{av}_t X \bigr)  = 
 \Tr \Bigl(X \bigl[ H_0, \rho^{av}_t \bigr] X  \Bigr),
 \]
which means in short that the quantity $\Tr \bigl( X \rho^{av}_t X \bigr) $ evolves as if 
$\rho^{av}_t$ is not interacting with the environment, and then evolves freely. Then, following the proof of the first equality of Lemma~\ref{free_case} in Appendix \ref{Appendix:B} we have
\[
\partial_t \Tr ( X \rho^{av}_t X) = \Tr\bigl( \inab \rho^{av}_t X + X \rho^{av}_t \inab \bigr).
\]
Moreover, following the proof of the second point of Lemma~\ref{free_case} together with $(iii)$ of Lemma~\ref{kin_eq}, we obtain
\[
\partial_t \Tr\bigl( \inab \rho^{av}_t X + X \rho^{av}_t \inab   \bigr) = 2 \Tr \bigl( \inab \rho^{av}_t \inab \bigr),
\]
since  $\theta_\infty(0) =1$, $\theta_\infty'(0)=0$.

To conclude, it only remains to prove that $\Tr\bigl( \inab \rho_0 X + X \rho_0 \inab \bigr) < +\infty$ when 
$\Tr \bigl( X \rho_0 X) +   \Tr \bigl( \inab \rho_0 \inab \bigr) < + \infty$. This is a consequence of lemma~\ref{lem:mix} in Appendix~\ref{Sch_Bd} applied with $g(X)=X$:
\[
\Bigl[\Tr\bigl( \inab \rho_0 X + X \rho_0 \inab \bigr) \Bigr]^2 \le 
4 \Tr \bigl( X \rho_0 X) \, \Tr \bigl( \inab \rho_0 \inab \bigr).
\] 
Let us remark that there is another way to obtain that last bound. For the free evolution, \eqref{unifbound2} must be satisfied without the term involving $\alpha$ and $\beta$, so that the term in the r.h.s. should remains nonnegative for any time $t$, which leads directly to the above inequality.
\end{proof}

%
%
%
%
\section{Proof of Theorem \ref{thm:main}} \label{proof1}

The proof of Theorem \ref{thm:main} is based on martingale techniques and is in two steps. First, we prove the tightness of $(\rho^N)_N$, and then we characterize all the subsequence limits thanks to a well-posed stochastic differential equation on $\SSS_2$. Nevertheless, we do not directly work with $(\rho^N)_N$. We introduce an auxiliary process which is more convenient to use standard results on tightness and stochastic differential equation on Hilbert space.

\medskip
\paragraph{\bf  Filtration of the free transport.} W defined the following filtered process 
\begin{equation} \label{PJ_proc}
\trho^N_t:= T_{-t} \bigl[ \rho^N_t \bigr] := e^{i t H_0 } \rho^N_t e^{-i t H_0},
\end{equation}
where $\rho^N$ satisfies \eqref{VN-PPPcomp}, and $T_t$ is an isometry on  $\SSS_p$ for all $p \ge 1$ by Proposition~\ref{prop:Schatten} in Appendix \ref{Sch_Bd}. The interest is that the process $\trho^N$ evolves according to the following dynamics:
\begin{multline} \label{SPDEpoisson}
 \trho^N_{t}  = \rho_0 - \alpha^2 \int_0^{t}  
 \Bigl(  \trho^N_s  - T_{-s} \theta_\infty\bigl[ T_s [\trho^N_s] \bigr]  \Bigr)\, ds\\
 +  \intP \Bigl( T_{-s} I_{p,x}\bigl[ T_s[ \trho^N_{s^-}]\bigr]-\trho^N_{s^-} \Bigr) \, \tilde P_N(ds,dx,dp),
 \end{multline}
where no unbounded operator appears. 
In the two following sections, we work with $\trho^N$ in a first time, and then we derive the desired properties for $(\rho^N)_N$. 

\subsection{Tightness}\label{tightnesssec}
We begin this section by proving a uniform estimate for the kinetic energy. 

\begin{prop}\label{unifkinetic} Let $(\rho^N_t)_{t \ge 0}$ be the unique process solution to~\eqref{VN-jump}. 
 For any $T>0$,
\[
\sup_N \E\Big[\sup_{t\in[0,T]} \Tr(\inab \rho^N_t \inab)^2 \Big]< + \infty.
\]
\end{prop}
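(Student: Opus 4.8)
The plan is to control the supremum in time of the squared kinetic energy $\Tr(\inab \rho^N_t \inab)^2$ via a martingale/Doob argument applied to the filtered process $\trho^N$, using the per-collision estimate from Lemma~\ref{lem:I_bound} together with the moment bounds of Proposition~\ref{propSPDE}. Since the kinetic energy $\Tr(\inab \rho \inab)$ is invariant under the free unitary group $T_t$, we have $\Tr(\inab \trho^N_t \inab) = \Tr(\inab \rho^N_t \inab)$, so it is equivalent (and more convenient) to work with $\trho^N$, whose dynamics~\eqref{SPDEpoisson} involves only bounded operators and a compensated Poisson integral. First I would set $e^N_t := \Tr(\inab \trho^N_t \inab)$ and write, using~\eqref{SPDEpoisson} and the fact that the drift term $\trho^N_s - T_{-s}\theta_\infty[T_s[\trho^N_s]]$ contributes (after conjugating by $T_s$ and applying $(ii)$--$(iii)$ of Lemma~\ref{kin_eq} as in Proposition~\ref{propSPDE}) a bounded increment of size $O(1)\,ds$, an identity of the form
\[
e^N_t = e^N_0 + \frac{4\alpha^2}{\beta_0}\, t + M^N_t + A^N_t,
\]
where $M^N_t$ is the martingale coming from integrating the collision increments against the compensated measure $\tilde P_N$, and $A^N_t$ is a finite-variation remainder whose total variation on $[0,T]$ is controlled deterministically.

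The key step is the estimate of the jump increments of the kinetic energy. For a single collision at $(s,x,p)$ acting on $\trho^N_{s^-}$, the increment of $e^N$ is $\Tr(\inab T_{-s}I^N_{p,x}[T_s[\trho^N_{s^-}]]\inab) - \Tr(\inab \trho^N_{s^-}\inab)$; conjugating by $T_s$ (which preserves kinetic energy) this equals $\Tr(\inab I^N_{p,x}[\sigma]\inab) - \Tr(\inab \sigma \inab)$ with $\sigma = T_s[\trho^N_{s^-}]$, and Lemma~\ref{lem:I_bound} bounds this by $\frac{\alpha+\alpha^2}{\sqrt N}(1+8|p|)\sqrt{e^N_{s^-}} + \frac{\alpha^2}{N}(1+2p^2)$ (note $\Tr(\inab\sigma\inab)=e^N_{s^-}$). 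Squaring and integrating against the intensity measure $\frac{N}{2R}\,ds\,\indiq_{[-R,R]}(x)dx\,\mu_m(dp)$: the quadratic variation of $M^N_t$ is $\int_0^t \frac{N}{2R}\int_{-R}^R\int_\R |\text{increment}|^2\,\mu_m(dp)dx\,ds$, and since $N\cdot\frac{1}{N}$ and $N\cdot\frac{1}{\sqrt N}^2$ are both $O(1)$ and $\mu_m$ has all moments finite, this is bounded by $C\int_0^t (1 + e^N_s)\,ds$. Similarly the drift from compensating, and the finite-variation term $A^N_t$, contribute a deterministic $O(t)$ plus $C\int_0^t \E[e^N_s]\,ds$-type control; here Proposition~\ref{propSPDE} gives $\E[e^N_s] = e^N_0 + \frac{4\alpha^2}{\beta_0}s \le C_T$ on $[0,T]$, uniformly in $N$.

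To pass from the quadratic variation bound to control of the supremum of $(e^N_t)^2$, I would apply the Burkholder--Davis--Gundy inequality to $M^N$ (valued in $\R$, so the classical BDG applies), giving
\[
\E\Big[\sup_{t\le T}(M^N_t)^2\Big] \le C\,\E\big[[M^N]_T\big] \le C\int_0^T \big(1 + \E[e^N_s]\big)\,ds \le C_T,
\]
using the already-established uniform-in-$N$ bound on $\E[e^N_s]$. Combining this with the deterministic bounds on $e^N_0 + \frac{4\alpha^2}{\beta_0}t$ and on $A^N_t$, and squaring, yields $\E[\sup_{t\le T}(e^N_t)^2] \le C_T$ uniformly in $N$, which is the claim since $e^N_t = \Tr(\inab \rho^N_t \inab)$. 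The main obstacle I anticipate is purely technical bookkeeping: making the martingale decomposition of $t\mapsto \Tr(\inab\trho^N_t\inab)$ rigorous — one must justify that $e^N_t$ is a genuine semimartingale with the claimed decomposition despite $\inab$ being unbounded, which requires either a localization/truncation argument (stopping when $e^N_t$ exceeds a level $L$, then letting $L\to\infty$ by monotone convergence) or an approximation of $\inab$ by bounded operators $\inab(1+\ep\,\inab^2)^{-1}$ and a limiting argument, controlling errors uniformly. Everything else reduces to the pointwise collision estimate of Lemma~\ref{lem:I_bound}, the finiteness of moments of $\mu_m$, Proposition~\ref{propSPDE}, and BDG.
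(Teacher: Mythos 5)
Your proposal is correct and follows essentially the same route as the paper: filter by the free group so that the kinetic energy identity reads $\Tr(\inab \rho^N_t \inab) = \Tr(\inab\rho_0\inab) + \tfrac{4\alpha^2}{\beta_0}t + M^N_t$ with $M^N$ the compensated-Poisson martingale, bound the jump increments by the second estimate of Lemma~\ref{lem:I_bound}, and control $\E[\sup_t (M^N_t)^2]$ via a maximal inequality. The only (harmless) differences are cosmetic: you invoke BDG where the paper uses Doob's $L^2$ inequality, and you close the bound directly with the first-moment estimate $\E[\Tr(\inab\rho^N_s\inab)]\le C_T$ from Proposition~\ref{propSPDE} instead of the Gr\"onwall argument on $\E[\sup_s\Tr(\inab\rho^N_s\inab)^2]$ used in the paper.
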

 This estimate is needed to obtain a tightness result for $(\rho^N)_N$ at the end of this section. 

\begin{proof} 
From the Definition~\eqref{PJ_proc} of the filtered process, it comes $\Tr \bigl( \inab \trho^N_t \inab \bigr) = \Tr \bigl( \inab \rho^N_t \inab \bigr)$. 
Using this, the equations~\eqref{SPDEpoisson} and~\eqref{kin_ep_theta}, we get that
\begin{multline} \label{kin_without_sup}
Tr\bigl(\inab \rho^N_t \inab \bigr) = Tr \bigl(\inab \rho^N_0 \inab \bigr) + \frac{4 \alpha^2 t}{\beta_0} \\
+  \intP \Tr \Bigl[ \inab \Bigl( I_{p,x}\bigl[ \rho^N_{s^-} \bigr]-\rho^N_{s^-} \Bigr)  \inab \Bigr] \, \tilde P_N(ds,dx,dp).
\end{multline}
Let us denote by $Q^N_t$ the stochastic integral of the r.h.s.\ . By Doob inequality and the second point of Lemma~\ref{lem:I_bound}, we have
\begin{align*}
\E\biggl[\sup_{t\in[0,T]}\Bigl \vert \intP  \Tr \Bigl[ & \inab \Bigl( I_{p,x}\bigl[ \rho^N_{s^-}  \bigr]-\rho^N_{s^-} \Bigr)  \inab \Bigr] \tilde{P}_N(ds,dx,dp) \Bigr \vert^2 \biggr] \\
& \le 2\,  \E\biggl[ \Bigl \vert \int_0^t \intP  \Tr \Bigl[ \inab \Bigl( I_{p,x}\bigl[ \rho^N_{s^-} \bigr]-\rho^N_{s^-} \Bigr)  \inab \Bigr] \tilde{P}_N(ds,dx,dp) \Bigr \vert^2 \biggr], \\
& = 2  \frac N {2R} \int_0^t   \intPst   \E \Bigl[ \Tr \Bigl[ \inab \Bigl( I_{p,x}\bigl[ \rho^N_{s^-} \bigr]-\rho^N_{s^-} \Bigr)  \inab \Bigr]^2 \Bigr] \, dx\,\mu_m(dp)   \,ds \\
& \le \frac C {2R}  \int _0^t  \E \Bigl[  \intPst \Tr\bigl(\inab \rho^N_s \inab \bigr) ^2 \, dx\, (1+p^4)\mu_m(dp)  \Bigr] \,ds, \\ 
& \le C \int_0^t  \E \Bigl[ \Tr\bigl(\inab \rho^N_s \inab \bigr) ^2 \Bigr] \,ds.
\end{align*}
Taking the square and the supremum in time in~\eqref{kin_without_sup}, and using the above bound, we get
\[
\E\Bigl[\sup_{t\in[0,T]} \Tr\bigl(\inab \rho^N_t \inab \bigr)^2 \Bigr] \le C \biggl( 1  + T^2 +  \int_0^T \E\Bigl[\sup_{s\in[0,t]} \Tr\bigl(\inab \rho^N_s \inab \bigr)^2 \Bigr]\,ds \biggr),
\]
and an application of the Gr\"onwall lemma conclude the proof. 
Actually, to proof this, we should first check that the expectation $\E \Bigl[ \Tr\bigl(\inab \rho^N_s \inab \bigr) ^2 \Bigr]$ is not infinite when we integrate on time intervals, so that our bounds have a meaning. But this can be done using similar and simpler arguments.
\end{proof}

We use here a standard tightness criterion which is valid for process with values in general separable Banach spaces, and particularly for the Schatten class $\SSS_p$ with $p \in(1,2]$.  The proposition below  is stated for instance in~\cite[Theorem 3 pp. 47]{kushner}, and goes back to Kurtz~\cite{kurtz} and Aldous~\cite{aldous1}.
\begin{prop}\label{aldous} 
Let $(Z_N)_{N \in \N}$ be a sequence of random variables in $\DD([0,\infty),B)$, for a separable Banach space $B$, endowed with a norm $\|\cdot\|_B$. For any $t \ge 0$, we denote by $\E^N_t$ the conditional expectation with respect to the filtration $\mathcal{F}^N_t=\sigma(Z_N(s),\, s\leq t)$.
If 
\begin{itemize}
\item[(i)] at any time $t\in \R^+$, $Z^N(t)$ is a tight sequence in $B$;
\item[(ii)] for any $N$, $T \in \R^+$, and $h\in(0,1)$, there exists a real-valued random variable $\eta^T_N(h)$ such that
\[ \forall \; t \in [0,T], \quad 
\sup_{0 \le u \le h }\, \E^N_t \bigl[ \| Z_N(t+u)- Z_N(t) \|_B^2 \bigr]\leq \E^N_t[ \eta^T_N(h)] 
\quad\text{and}\quad\lim_{h\to0} \sup_{N}\E[ \eta_N(h)]=0,
\]
\end{itemize}
then the sequence $(Z_N)_{N \in \N}$ is tight in $\DD([0, \infty),B)$.
\end{prop}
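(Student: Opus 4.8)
Since this is the Kurtz--Aldous tightness criterion, one option is simply to invoke~\cite{kushner,kurtz,aldous1}; if a self-contained argument is wanted, here is how I would proceed. First I would reduce to a finite horizon: tightness in $\DD([0,\infty),B)$ follows by standard arguments from tightness in $\DD([0,T],B)$ for every $T>0$, so I fix $T$. Because $B$ is a separable Banach space, $\DD([0,T],B)$ is a Polish space, so by Prokhorov's theorem tightness is relative compactness in law, and by Skorokhod's compactness characterisation it is enough to check two things: the \emph{compact containment} property, i.e.\ that for each $t$ in a dense subset of $[0,T]$ the laws of $Z_N(t)$ form a tight family in $B$ --- which is precisely hypothesis~(i) --- and a control of the Skorokhod modulus of continuity, i.e.\ $\lim_{\delta\to 0}\limsup_N \Pro\bigl(w'_N(\delta)\ge\eta\bigr)=0$ for every $\eta>0$, where $w'_N(\delta)$ denotes the usual $\delta$-modulus of the path $Z_N$ on $[0,T]$. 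Thus the whole task is to deduce this modulus control from~(ii).

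I would do that in two stages. \emph{Stage 1: from deterministic times to stopping times.} I would show that~(ii) implies the Aldous-type bound
\[
\E\bigl[\|Z_N(\tau+u)-Z_N(\tau)\|_B^2\bigr]\le \E\bigl[\eta^T_N(h)\bigr]\qquad\text{for every $\mathcal F^N$-stopping time }\tau\le T\text{ and every }u\in[0,h].
\]
When $\tau$ takes finitely many values $t_1<\dots<t_m$ this is immediate: split on the events $\{\tau=t_j\}\in\mathcal F^N_{t_j}$, use the tower property, and apply~(ii) at each $t=t_j$. A general $\tau\le T$ is then approximated from above by the dyadic stopping times $\tau_k=2^{-k}\lceil 2^k\tau\rceil\downarrow\tau$, and I would pass to the limit using right-continuity of the paths together with Fatou's lemma (a technical point being that $Z_N(\tau_k+u)\to Z_N(\tau+u)$ only for $u$ outside the a.s.\ countable, hence Lebesgue-null, set of jump instants of the path, which is enough, the general $u\le h$ following by right-continuity in $u$). \emph{Stage 2: from stopping-time increments to the modulus.} Markov's inequality converts the $L^2$ bound into $\Pro\bigl(\|Z_N(\tau+u)-Z_N(\tau)\|_B\ge\eta\bigr)\le\eta^{-2}\E[\eta^T_N(h)]$ for all stopping times $\tau\le T$ and $u\le h$; this is exactly Aldous's condition, and I would then run the classical argument: introduce the successive hitting times $\tau_0=0$, $\tau_{j+1}=\inf\{t\ge\tau_j:\|Z_N(t)-Z_N(\tau_j)\|_B\ge\eta\}\wedge T$, and show that on an event where $w'_N(\delta)$ exceeds a fixed multiple of $\eta$ some increment $\|Z_N(\tau_j+u)-Z_N(\tau_j)\|_B$ must be $\ge\eta$ with $u\le\delta$ for a controlled index $j$. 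Choosing $h=\delta$ then yields $\limsup_N\Pro\bigl(w'_N(\delta)\ge 3\eta\bigr)\le C(\eta)\limsup_N\E[\eta^T_N(\delta)]$, which vanishes as $\delta\to 0$ by the second half of~(ii). Together with~(i), and letting $T\to\infty$, this completes the proof.

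The step I expect to be delicate is the transfer to stopping times in Stage~1 and, even more, the bookkeeping in Stage~2 that pins a large Skorokhod modulus to a single large stopping-time increment: the hitting times $\tau_j$ are random in number, and one must argue that on the bad event not all consecutive gaps $\tau_{j+1}-\tau_j$ can be $\ge\delta$. The reduction to $[0,T]$, the Prokhorov/Skorokhod package and the Markov estimate are routine. In a written version I would give Stage~1 in full and, for Stage~2, cite the standard implication ``Aldous's condition plus compact containment implies tightness'' rather than reproduce its combinatorial proof.
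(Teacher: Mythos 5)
The paper does not actually prove this proposition: it is quoted as a standard result, with the pointer to \cite[Theorem 3, p.~47]{kushner} and the original works of Kurtz~\cite{kurtz} and Aldous~\cite{aldous1}, so your first option --- simply invoking those references --- is exactly what the paper does. Your self-contained sketch is moreover a correct outline of the standard proof of the Kurtz--Aldous criterion (reduction to $[0,T]$, compact containment from~(i), transfer of~(ii) from deterministic times to stopping times by conditioning on $\{\tau=t_j\}$ for finitely-valued $\tau$ and then a dyadic approximation with right-continuity and Fatou, followed by Markov's inequality and the classical hitting-time control of the Skorokhod modulus), and the two steps you single out as delicate are indeed the only nontrivial ones.
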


In what follows, the first step consist in proving $(i)$ of Proposition~\ref{aldous} for $(\rho_N(t))_N$ and $(\trho_N(t))_N$ in $\SSS_p$, for any $t \ge 0$ and $p \ge 1$. Step 2 consists in showing the tightness $(\trho_N)_N$ in $\DD([0,\infty),\SSS_2)$ by proving  $(ii)$ of Proposition~\ref{aldous}, and Step 3 consists in extending this to any $\DD([0,\infty),\SSS_p)$ with $p >1$. Then, Step 4 goes back to $(\rho_N)_N$ and give its tightness in $\DD([0,\infty),\SSS_p)$ for any $p >1$. We conclude this section by proving some properties of the accumulation points. 

\medskip
{\bf Step 1: Tightness of $(\rho_N(t))_N$ in $\SSS_p$ for any $t\in [0,T]$ and $p\in[1,2]$.} For $p=1$, this is a consequence of Lemma \ref{prop:compS1}, which state that, for any $M>0$, the the set 
\[
\KK_M := \bigl\{ \rho \in \SSS_1^+ :  \; \|\rho \|_{\SSS_1} =1 \text{ and } 
\|X \rho X \|_{\SSS_1} + \bigl\| (i\nabla) \rho (i\nabla) \bigr\|_{\SSS_1} \le M
\bigr\}
\]
is compact in $\SSS_1$. Then, 
\[\begin{split}
 \mathbb{P}\big( \rho^N_t\not\in \KK_M\big)&\leq \mathbb{P}\big( \|X \rho^N_tX\|_{\SSS_1}+ \|\inab \rho^N_t\inab\|_{\SSS_1}>M\big)
+  \mathbb{P}\bigl( \|\rho \|_{\SSS_1} \neq 1  \bigr)
 \\
 &\leq \frac{1}{M}\E\big[\Tr(X\rho^N_tX)+\Tr(\inab \rho^N_t \inab)\big] 
 \leq \frac{K(t,\rho_0,\alpha,\beta)}{M},
 \end{split} \]
thanks to Proposition~\ref{propSPDE} and that $\Tr \rho^N_t=1$ with probability one for any $t\geq 0$. For $p\in(1,2]$, we just have to remark that $\KK_M$ is also compact in $\SSS_p$ for any $p\in(1,+\infty)$ since we have $\|\cdot\|_{\SSS_p}\leq \|\cdot\|_{\SSS_1}$.
 
Finally, using that $T_t$ is an isometry on $\SSS_p$ (see Proposition \ref{prop:Schatten}) together with the mapping theorem \cite[Theorem 2.7 pp. 21]{billingsley}, $(\trho_N(t))_N$ is also tight in $\SSS_p$.

\medskip
{\bf Step 2: Tightness of $(\trho_N)_N$ in $\DD([0,\infty),\SSS_2)$.}  Here, we just have to prove $(ii)$ of Proposition~\ref{aldous}. To do so, rewriting \eqref{SPDEpoisson} between $t$ and $t+h$ and then taking its $\SSS_2$-norm, we have 
\[\begin{split} 
 \bigl\| \trho^N_{t+h}  - \trho^N_t \bigr\|_{\SSS_2} \le   \alpha^2 & \int_t^{t+h}  
 \Bigl(  1  +  \Bigl\| \theta_\infty\bigl[ T_s [\trho^N_s] \bigr]  \Bigr\|_{\SSS_2} \Bigr) \,ds\\
 &+  \biggl\| \intPh  \bigl( T_{-s}  I_{p,x}\bigl[ T_s[ \trho^N_{s^-}]\bigr]-\trho^N_{s^-} \bigr) \tilde P_N(ds,dx,dp) \biggr\|_{\SSS_2}.
\end{split}\]
Using that $\| \trho^N_t \|_{\SSS_2} \le \| \trho^N_t \|_{\SSS_1} =\| \rho^N_t \|_{\SSS_1} =1$ at any time, since $T_t$ preserves all the Schatten norms, together with $(iii)$ of Proposition~\ref{prop:Schatten}, the control of the first integral in the r.h.s is easy: 
\begin {align*}
\int_t^{t+h}  
 \Bigl(  1  +  \Bigl\| \theta_\infty\bigl[ T_s (\trho^N_s) \bigr]  \Bigr\|_{\SSS_2} \Bigr) \,ds \le 
 h \biggl(1 + \frac1{\sqrt{2\pi}} \bigl\|\widehat \theta_\infty \bigr\|_1 \biggr) = 2 h.
\end{align*}
The second integral in the r.h.s. is an integral with respect to a random Poisson measure, so that by Lemma \ref{lem:MG} in Appendix \ref{sec:MGS2} and Lemma~\ref{lem:I_bound}, it comes 
\[\begin{split}
\E^N_t \Biggl[ \biggl\| \intPh \bigl(  T_{-s} I_{p,x}\bigl[ &T_s[ \trho^N_{s^-}]\bigr]-\trho^N_{s^-} \bigr) \tilde P_N(ds,dx,dp) \biggr\|_{\SSS_2}^2 \Biggr] \\
& =  \frac N {2R} \int_t^{t+h} \!\!\! \int_{-R}^R \!\int_\R
\Bigl\| T_{-s}\Bigl[  I_{p,x}\bigl[ T_s[ \trho^N_{s^-}]\bigr]  -T_s[\trho^N_{s^-}] \Bigr]  \Bigr\|_{\SSS_2}^2 \,ds \,dx \,\mu_m(dp), \\
& \le  \frac {2N}{R} \int_t^{t+h} \!\!\! \int_{-R}^R \!\int_\R
\Bigl(  \frac{\alpha^2}N + \frac{\alpha}{\sqrt N} e^{-2 p^2} \Bigr)^2 \,ds \,dx \,\mu_m(dp), \\
& \le  C\,N h,
\end{split}\]
where we have used again that $\| \trho^N_t \|_{\SSS_2} \le \| \rho^N_t \|_{\SSS_1} =1$ for all $t\geq 0$, and the constant $C$ depends on $\alpha$ and $\beta_0$. As a result, we obtain 
\begin{equation} \label{eq:tight1}
\E^N_t \Bigl[ \bigl\| \trho^N_{t+h}  - \trho^N_t \bigr\|_{\SSS_2} \Bigr] \le K \,h^{1/2},
\end{equation}
for some $K >0$ independent of $N$ and then $(ii)$ of Proposition~\ref{aldous} holds with $\eta_N^T(h) = K h^{1/2}$.  

\medskip
{\bf Step 3: Tightness of $(\trho_N)_N$ in $\DD([0,T],\SSS_p)$ for any $p>1$.}
the case $p\ge 2$ is clear since $\| \cdot\|_{\SSS_p} \le \| \cdot\|_{\SSS_2}$ in that case. 

For the remaining cases, we use the following standard H\"older inequality: 
for all $1\leq q \leq r \leq +\infty $, $p \in[q,r]$, and any compact operator $\rho$, we have
\[  \| \rho \|_{\SSS_p}\leq \|\rho\|^{\theta}_{\SSS_q} \|\rho\|^{1-\theta}_{\SSS_r}, 
\qquad\text{where}\qquad \frac{1}{p} =\frac{1-\theta}{r}+\frac{\theta}{q}.\]
We apply it for  $p \in (1,2]$, $q=1$ and $r=2$, so that $\theta= \frac2p -1$. Using the simple bound $\bigl\| \trho^N_{t+h}  - \trho^N_t \bigr\|_{\SSS_1} \le 2$ and the bound~\eqref{eq:tight1},
\begin{align}
\E^N_t \Bigl[ \bigl\| \trho^N_{t+h}  - \trho^N_t \bigr\|_{\SSS_p} \Bigr]
& \le 
\E^N_t \Bigl[ \bigl\| \trho^N_{t+h}  - \trho^N_t \bigr\|_{\SSS_1}^{1-\theta}
 \bigl\| \trho^N_{t+h}  - \trho^N_t \bigr\|_{\SSS_2}^{\theta}
 \Bigr] \nonumber \\
&  \label{trho_tight} \le 
\E^N_t \Bigl[ \bigl\| \trho^N_{t+h}  - \trho^N_t \bigr\|_{\SSS_1} \Bigr]^{1-\theta}
\E^N_t \Bigl[  \bigl\| \trho^N_{t+h}  - \trho^N_t \bigr\|_{\SSS_2} \Bigr]^{\theta}
 \le 2^{1-\theta}K^\theta h^{\theta/2}.
\end{align}

\medskip
{\bf Step 4: Tightness of $(\rho_N)_N$ in $\DD([0,\infty),\SSS_p)$ for any $p > 1$.} Using that the action of an isometry preserves the $\SSS_2$-norm and 
Lemma~\ref{rho_and_T} of Appendix~\ref{Appendix:B}, one has:
\[
\bigl\| \rho^N_{t+h}  - \rho^N_t \bigr\|_{\SSS_2} \leq \bigl\| \rho^N_{t+h}  - T_h[\rho^N_t] \bigr\|_{\SSS_2} +\bigl\| T_h[ \rho^N_t] -\rho^N_{t} \bigr\|_{\SSS_p} \leq  \bigl\| \trho^N_{t+h}  - \trho^N_t \bigr\|_{\SSS_2} +2 h^{1/2} \Tr(\inab \rho^N_t \inab).
\]
We fix $T >0$, $N \in \N^*$ and $h \in (0,1)$, and we define the random variable
\[
\eta^T_N(h) := \Bigl(   K + 2 \sup_{t \le T} \Tr(\inab \rho^N_t \inab)  \Bigr)  \sqrt h.
\]
Then from the above inequality, and the bound~\eqref{eq:tight1}, one has 
\[
\E^N_t \Bigl[ \bigl\| \rho^N_{t+h}  - \rho^N_t \bigr\|_{\SSS_2} \Bigr]
 \le  \E^N_t \Bigl[ \bigl\| \trho^N_{t+h}  - \trho^N_t \bigr\|_{\SSS_2} \Bigr]
+ 2 \sqrt h \, \E^N_t \Bigl[  \Tr(\inab \rho^N_t \inab) \Bigr]
\le
\E^N_t \Bigl[ \eta^T_N(h) \Bigr].
\]
But Proposition~\ref{unifkinetic} implies that $\lim_{h \to 0} \sup_N \E \bigl[ \eta^T_N(h) \bigr] = 0$, which means that the point $(ii)$ of Proposition~\ref{aldous} is satisfied. So the tightness holds for $p=2$. 
Then the tightness extend to all $p>1$ exactly has in the previous step.

\medskip
\paragraph{\bf Properties for the possible limits.}
 
Up to some extraction, we may assume that both $\trho^N$ and $\rho^N$ converges in law in $\DD([0,+\infty),\SSS_p)$ respectively to processes denoted by $\trho^\infty$ and $\rho^\infty$. We prove now the following useful result on the limit processes.
\begin{lem}\label{apriori}
The possible limit coupled processes $\trho^\infty$ and $\rho^\infty$ satisfy:
\begin{itemize}
\item[i)] $\trho^\infty$ and $\rho^\infty$ have continuous trajectories with respect to the $\SSS_p$-norm, for $p >1$;
\item[ii)] for any $t \ge 0$, $\rho^N_t$ and $\trho^N_t$ $\SSS_p$-converge in law respectively towards $\rho^\infty_t$ and $\trho^\infty_t$. Moreover, $\trho^\infty_t = T_t [\rho^\infty_t]$ for all time $t \ge 0$, almost surely;
\item[iii)] for any $t \ge 0$, $\rho^\infty_t$ and $\trho^\infty_t$ are symmetric nonnegative operators with trace one, a.s.;
\item[iv)] the kinetic energy and the momentum in position satisfy
\begin{equation} \label{mom_lim}
\E \Bigl[  \Tr ( \inab \rho_t^\infty \inab \bigr) \Bigr]  \le 
\Tr \bigl( \inab \rho_0 \inab \bigr) + \frac{4 \alpha^2}{\beta_0} t, \qquad
\E \Bigl[  \Tr ( X \rho_t^\infty X \bigr) \Bigr]  \le  C (1 + t^3),
\end{equation}
for some constant $C$ depending on $\rho_0$, $\alpha$ and $\beta_0$. 
\end{itemize}
\end{lem}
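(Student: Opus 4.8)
\textbf{Proof plan for Lemma \ref{apriori}.}

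The plan is to deduce each item from the tightness construction and the quenched/annealed estimates already established, in the order the items are stated. For item (i), I would invoke the standard fact that a tight sequence in $\DD([0,\infty),\SSS_p)$ whose jumps vanish in the limit has continuous limit trajectories: from the bound \eqref{eq:tight1} (and its $\SSS_p$-version \eqref{trho_tight}) together with the fact that a single collision produces a jump of size $O(N^{-1/2})$ in $\SSS_p$ by Lemma~\ref{lem:I_bound}, the maximal jump of $\trho^N$ (resp.\ $\rho^N$) on any compact time interval tends to $0$ in probability; by the continuous-mapping theorem applied to the jump functional $J(z) = \sup_{t\le T}\|z(t)-z(t^-)\|_{\SSS_p}$, which is continuous on Skorokhod space, the limit has $J = 0$ a.s., hence continuous trajectories. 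For item (ii), the evaluation map $z \mapsto z(t)$ is continuous at any path that is continuous at $t$; since the limit is a.s.\ continuous by (i), $\rho^N_t \to \rho^\infty_t$ and $\trho^N_t \to \trho^\infty_t$ in law in $\SSS_p$ by continuous mapping. For the identity $\trho^\infty_t = T_t[\rho^\infty_t]$, I would pass to the limit in the exact relation $\trho^N_t = T_t[\rho^N_t]$: the joint convergence of $(\trho^N,\rho^N)$ (after a further extraction to a joint subsequential limit) combined with continuity of $T_t$ on $\SSS_p$ gives the identity in law at fixed $t$, and one upgrades to an almost sure statement along the joint coupling.

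For item (iii), I would argue that the set $\SSS_1^+\cap\{\Tr = 1\}$ is closed in $\SSS_p$ for $p>1$ — symmetry and nonnegativity are preserved under $\SSS_p$-limits (testing against rank-one operators $|\phi\rangle\langle\phi|$), and the trace constraint survives because the uniform second-moment bounds of Proposition~\ref{propSPDE} force tightness inside the compact sets $\KK_M$, on which the trace functional is continuous. Since $\rho^N_t$ lives in this closed set a.s.\ by Proposition~\ref{prop:WP-jump}, so does the weak limit $\rho^\infty_t$, and likewise $\trho^\infty_t = T_t[\rho^\infty_t]$ since $T_t$ preserves symmetry, positivity and trace. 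For item (iv), this is exactly where Lemma~\ref{lem:exchange}(ii) is the workhorse: the functionals $\rho\mapsto\Tr(\inab\rho\inab)$ and $\rho\mapsto\Tr(X\rho X)$ are lower semicontinuous (as suprema of the bounded functionals $\lambda_M$), so weak-in-law convergence plus Fatou gives
\[
\E\bigl[\Tr(\inab\rho^\infty_t\inab)\bigr]\le\liminf_N\E\bigl[\Tr(\inab\rho^N_t\inab)\bigr],
\]
and the right-hand side equals $\Tr(\inab\rho_0\inab)+4\alpha^2 t/\beta_0$ by \eqref{unifbound1}; the position bound follows the same way from \eqref{unifbound2}, absorbing the cross term via the Cauchy--Schwarz-type estimate stated in Proposition~\ref{propSPDE} into the form $C(1+t^3)$.

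The main obstacle I anticipate is the passage from convergence in law to an almost sure identity in item (ii): convergence in law only yields equality in distribution at each fixed $t$, so to get $\trho^\infty_t = T_t[\rho^\infty_t]$ a.s.\ one must work on a common probability space. The clean way is to exploit that the joint law of $(\trho^N,\rho^N)$ is already supported on the graph $\{(T_\bullet[z],z)\}$ — a closed condition under the joint Skorokhod topology since $T$ is continuous — so any subsequential joint limit is supported there as well, giving the identity a.s.\ without invoking Skorokhod representation. A secondary technical point is checking that the jump functional $J$ is genuinely continuous (not merely measurable) on the relevant subset of Skorokhod space; this is standard but should be stated, e.g.\ by noting continuity at paths whose limit is continuous, which is all we need since we only apply continuous mapping along the tight sequence whose limit we have just shown (in (i), bootstrapped) to be continuous. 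Everything else is a routine combination of Fatou, continuity of $T_t$ on the Schatten scale, and the already-proven moment bounds.
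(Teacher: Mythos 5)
Your proposal follows essentially the same route as the paper on every item: vanishing maximal jumps (of size $O(N^{-1/2})$ by Lemma~\ref{lem:I_bound}) plus the standard Skorokhod criterion for (i), evaluation at continuity points and $\SSS_p$-continuity of $T_t$ for (ii), the compact sets $\KK_M$ with the Portmanteau theorem for (iii), and Lemma~\ref{lem:exchange}(ii) combined with Proposition~\ref{propSPDE} for (iv). One caveat on item (iii): your opening claim that $\SSS_1^+\cap\{\Tr=1\}$ is closed in $\SSS_p$ for $p>1$ is false as stated (trace mass can escape: normalized projections of rank $n$ tend to $0$ in $\SSS_p$ while each has trace one); what actually rescues the trace is exactly the mechanism you then invoke, namely that Proposition~\ref{propSPDE} confines $\rho^N_t$ to the $\SSS_p$-compact sets $\KK_M\subset\{\Tr=1\}$ up to probability $C/M$ and Portmanteau transfers this to the limit, which is precisely the paper's argument, so the proof stands once that sentence is rephrased.
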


\begin{proof}
\emph{Item i)} is a standard consequence of the fact that the maximum size of jumps goes to zero as $N$ goes to infinity. Here, Lemma~\ref{lem:I_bound} yields for $p \ge 1$
\begin{equation}\label{conti}
\E\Big[\sup_{t\geq 0}\|\rho^N(t)-\rho^N(t^-)\|_{\SSS_p}\Big] \le 
\E\Big[\sup_{t\geq 0}\|\rho^N(t)-\rho^N(t^-)\|_{\SSS_1}\Big] \leq \frac C {\sqrt N},
\end{equation}
so that \cite[Theorem 13.4 pp. 142]{billingsley} implies that $\rho^\infty$, and also $\trho^\infty$, have $\SSS_p$-continuous trajectories for $p>1$ ($p=1$ not included since the convergence does not holds in that case).

\medskip
\emph{Item ii).} Since the limit processes have continuous trajectories,  we may pass to the limit at any fixed time $t$ (a thing that the Skorokhod topology do not necessary allows in the general case, see \cite[Section 13 pp. 138]{billingsley}): $\rho^N_t$ and $\trho^N_t$ both $\SSS_p$-converge in law towards respectively $\rho^\infty(t)$ and $\trho^\infty(t)$ for $p \in (1,2]$. Now,  for any fixed $t \ge 0$, by the $\SSS_p$-continuity of the operator $T_t$, we have
\[
\trho^\infty_t =T_t \bigl[ \rho^\infty_t\bigr], \quad \text{a.s.},
\]
and then we can exchange the ``almost sure'' and the ``for all $t \ge 0$'', since  $t\mapsto T_t \bigl[ \rho^\infty_t\bigr]$ taking its values in $\SSS_p$ is continuous. 

\medskip
\emph{Item iii).} The fact that the limit process is made of symmetric and nonnegative operators is straightforward. It remains to prove the trace property. Since the sets $\KK_M$ defined in Appendix \ref{sec:App_comp} by~\eqref{def:KM} are compact in $\SSS_1$, they are also compact and closed in the weaker topologies $\SSS_p$, for $p \in [1, +\infty]$. It implies, using again the Portmanteau theorem, that for any $t \ge 0$
\[
\Pro \Bigl( \rho^\infty_t \in \KK_M \Bigr) \ge \limsup_{N \to \infty} \Pro \Bigl( \rho^N_t \in \KK_M \Bigr) \ge 1  - \frac C M,
\]
by Proposition~\ref{propSPDE}. Since $\bigcup_{M>0} \KK_M \subset \SSS_1$, it implies that
$\Pro \bigl( \| \rho^\infty_t\|_{\SSS_1} =1 \bigr) =1$, and the same holds for $\trho^\infty_t$. 

\medskip
\emph{Item iv).} The inequalities~\eqref{mom_lim} on the kinetic energy and the moment in position for the limit processs $\rho^\infty$ simply follows from the same inequalities for $\rho^N$ proved in Proposition~\ref{propSPDE}, and from the fact that the kinetic energy and the position momentum can only decrease in the limit $N\to+\infty$ by Lemma~\ref{lem:exchange}.
\end{proof}

\subsection{Identification of all subsequence limits} \label{identificationsec}

This section is devoted to the identification of all the accumulation points by mean of a well posed stochastic differential equation on $\SSS_2$. For the sake of simplicity, throughout this section, we still denote by $\trho^N$ a converging subsequence in law in $\DD([0,+\infty),\SSS_2)$, with limit $\trho^{\infty}$.

Let us introduce for any $\rho\in\DD([0,+\infty),\SSS_2)$ the process
\begin{equation} \label{def:M}
M_t(\rho):=  \rho_t - \rho_0 
 + \alpha^2 \int_0^t  \Bigl( \rho_s -  T_{-s} \bigl[\theta_\infty \bigl[T_s[\rho_s]\bigr] \bigr] \Bigr) \,ds.
\end{equation}
Going back to \eqref{VN-PPPcomp}, we have
\begin{equation} \label{def:MN}
M_t ( \trho^N_t ) = \intP \Bigl(  T_{-s} \Bigl[I^N_{p,x} \bigl[ T_s\bigl[\trho^N_{s^-}\bigr] \bigr] \Bigr] -\trho^N_{s^-} \Bigl) \tilde P_N(ds,dx,dp) 
\end{equation}
which is a square-integrable martingale in $\SSS_2$ according to Lemma~\ref{lem:I_bound} and Lemma~\ref{lem:MG}. 
That last lemma also implies that the quadratic variation of $M( \trho^N)$ is given by
\[ \la\la \, M( \trho^N)\, \ra\ra (t,U,V)
=\int_0^t \big<A^N(s,\trho^N_s)U ,V \big>_{\SSS_2}ds,\]
with
\begin{multline}\label{defAN}
 \big<A^N(s,\trho^N_s)U,V \big>_{\SSS_2}=\frac{N}{2R} \int_{-R}^R \int_{\R} \Big\la  T_{-s} \Bigl[I^N_{p,x} \bigl[ T_s\bigl[\trho^N_{s}\bigr] \bigr] \Bigr] -\trho^N_{s} ,U\Big\ra_{\SSS_2}  \\
 \times \Big\la V,  T_{-s} \Bigl[I^N_{p,x} \bigl[ T_s\bigl[\trho^N_{s}\bigr] \bigr] \Bigr] -\trho^N_{s} \Big\ra_{\SSS_2} dx\mu_m(dp).
\end{multline}
for any $U,V\in \SSS_2$. From this we can prove the following result.
\begin{prop}\label{prop:varquad}
For any accumulation point $\trho^\infty$, $M(\trho^\infty)$ is a martingale with quadratic variation given by
\[ \la\la \, M( \trho^\infty)\, \ra\ra (t,U,V)=\int_0^t \big<A\big(s,\trho^\infty_s\big)(U),V \big>_{\SSS_2}ds,\]
where for any $U,V, \rho \in \SSS_2$,
\begin{multline}\label{defA}
 \big<A\big(s,\rho\big)U,V \big>_{\SSS_2}= \frac{\alpha^2}{2R}\sqrt{\frac {\beta_0}{8 - \beta_0}} \int_{-R}^R   \Bigl\la T_{-s} \Bigl[ \bigl[ i\gamma (\cdot -x),T_s[\rho] \bigr] \Bigr] ,U \Bigr\ra_{\SSS_2} \\
\times \Big\la V, T_{-s} \Bigl[ \bigl[ i\gamma (\cdot -x),T_s[\rho] \bigr] \Bigr]\Big\ra_{\SSS_2}  \,  dx.
\end{multline}

\end{prop}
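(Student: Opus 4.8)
The plan is to pass to the limit, along the chosen subsequence, in the martingale characterization of $M(\trho^N)$. By \eqref{def:MN}, \eqref{defAN}, Lemma~\ref{lem:I_bound} and Lemma~\ref{lem:MG}, for each $N$ the process $M(\trho^N)$ is a square-integrable $\SSS_2$-valued martingale and, for every $U,V\in\SSS_2$,
\[
Z^N_t(U,V):=\la M_t(\trho^N),U\ra_{\SSS_2}\,\la M_t(\trho^N),V\ra_{\SSS_2}-\int_0^t \la A^N(s,\trho^N_s)\,U,V\ra_{\SSS_2}\,ds
\]
is a scalar martingale. Once we know that $\trho^N$ and $M(\trho^N)$ converge jointly in law to $\trho^\infty$ and $M(\trho^\infty)$, it remains to show (a) that the martingale property passes to the limit, and (b) that $\int_0^\bullet\la A^N(s,\trho^N_s)U,V\ra_{\SSS_2}\,ds\to\int_0^\bullet\la A(s,\trho^\infty_s)U,V\ra_{\SSS_2}\,ds$, with enough uniform integrability to pass $M(\trho^N)$ and $Z^N(U,V)$ to the limit. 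The joint convergence follows from the continuous mapping theorem: the map $\rho_\bullet\mapsto M_\bullet(\rho)$ of \eqref{def:M} is continuous from $\DD([0,\infty),\SSS_2)$ to itself at every \emph{continuous} path, since evaluation at a fixed time and the Bochner integral $\int_0^t(\rho_s-T_{-s}\theta_\infty[T_s\rho_s])\,ds$ are continuous there (using that $\theta_\infty$ is bounded on $\SSS_2$ by $(iii)$ of Proposition~\ref{prop:Schatten} and $\|\cdot\|_{\SSS_2}\le\|\cdot\|_{\SSS_1}$), and $\trho^\infty$ has continuous trajectories by Lemma~\ref{apriori}.

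For the uniform integrability I would first establish $\sup_N\E[\sup_{t\le T}\|M_t(\trho^N)\|_{\SSS_2}^2]<\infty$ and, for each fixed $U\in\SSS_2$, $\sup_N\E[\sup_{t\le T}|\la M_t(\trho^N),U\ra_{\SSS_2}|^4]<\infty$. Both follow from the Doob and Burkholder--Davis--Gundy inequalities applied to the compensated Poisson integral \eqref{def:MN}: by Lemma~\ref{lem:I_bound} every jump of $M(\trho^N)$ has $\SSS_2$-norm $O(N^{-1/2})$, the predictable quadratic variation is $O(t)$, and the number of jumps with $x\in[-R,R]$ on $[0,t]$ is Poisson with mean $Nt$, so the sum of the fourth powers of the jumps is $O(t/N)$. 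The $L^2$ bound together with the joint convergence lets us pass the martingale property to the limit (test the identity $\E[(M_t(\trho^N)-M_s(\trho^N))\,g]=0$, for $g$ a bounded continuous function of $(\trho^N_{s_1},\dots,\trho^N_{s_k})$ with $s_1<\dots<s_k\le s<t$, to the limit), so $M(\trho^\infty)$ is a martingale for the filtration generated by $\trho^\infty$.

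The core point is the identification of the quadratic variation. Expanding \eqref{approx:Iop} gives, for $\|\rho\|_{\SSS_2}\le1$,
\[
T_{-s}\bigl[I^N_{p,x}[T_s\rho]\bigr]-\rho=\frac{\alpha^2}{N}\bigl(T_{-s}\theta_p[T_s\rho]-\rho\bigr)\pm i\,\frac{\alpha}{\sqrt N}\sqrt{1-\frac{\alpha^2}{N}}\;e^{-2p^2}\,T_{-s}\bigl[[\gamma(\cdot-x),T_s\rho]\bigr],
\]
so that in \eqref{defAN}, after multiplication by $\frac N{2R}$, the $\theta_p$-contribution and the cross term are $O(N^{-1})$ and $O(N^{-1/2})$, while the square of the $\gamma$-term produces the finite limit. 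Performing the Gaussian integration in $p$,
\[
\sqrt{\frac{\beta}{2\pi}}\int_\R e^{-4p^2}e^{-\frac{\beta}{2}p^2}\,dp=\sqrt{\frac{\beta}{8+\beta}}=\sqrt{\frac{\beta_0}{8-\beta_0}}
\]
(using $\beta_0^{-1}=\beta^{-1}+\frac14$), and noting that the sign $\pm$ and the factor $i$ cancel in the product $\la\cdot,U\ra\la V,\cdot\ra$, one obtains $|\la (A^N-A)(s,\rho)U,V\ra_{\SSS_2}|\le C\,N^{-1/2}\|U\|_{\SSS_2}\|V\|_{\SSS_2}$, uniformly over $s\ge0$ and $\|\rho\|_{\SSS_2}\le1$. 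Moreover $\rho\mapsto\la A^N(s,\rho)U,V\ra_{\SSS_2}$ and $\rho\mapsto\la A(s,\rho)U,V\ra_{\SSS_2}$ are bilinear in $\rho$, hence Lipschitz on $\{\|\rho\|_{\SSS_2}\le1\}$ with constant $\le C\|U\|_{\SSS_2}\|V\|_{\SSS_2}$ uniform in $N$ and $s$ (again by Lemma~\ref{lem:I_bound}). Using a Skorokhod representation so that $\trho^N\to\trho^\infty$ almost surely in $\DD([0,\infty),\SSS_2)$ — hence, $\trho^\infty$ being continuous, $\sup_{s\le T}\|\trho^N_s-\trho^\infty_s\|_{\SSS_2}\to0$ a.s.\ — these two estimates yield $\int_0^t\la A^N(s,\trho^N_s)U,V\ra_{\SSS_2}\,ds\to\int_0^t\la A(s,\trho^\infty_s)U,V\ra_{\SSS_2}\,ds$ uniformly on $[0,T]$, a.s. Combining this with the $L^4$ bound (uniform integrability of the products $\la M_t,U\ra\la M_t,V\ra$) and passing $Z^N_t(U,V)$ to the limit exactly as the martingale property above, we get that $\la M_t(\trho^\infty),U\ra_{\SSS_2}\la M_t(\trho^\infty),V\ra_{\SSS_2}-\int_0^t\la A(s,\trho^\infty_s)U,V\ra_{\SSS_2}\,ds$ is a martingale for every $U,V\in\SSS_2$, which is exactly \eqref{defA}. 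The main obstacle is this last step: simultaneously controlling the bilinear-in-$\rho$ quadratic-variation functional (uniform convergence $A^N\to A$ and uniform Lipschitz bounds) and the $L^4$ integrability needed to pass a product of martingales to the limit in law.
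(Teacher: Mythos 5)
Your proposal is correct and follows essentially the same route as the paper: pass the martingale identities for $M(\trho^N)$ and for $\la M(\trho^N),U\ra\la M(\trho^N),V\ra-\int_0^\cdot\la A^N U,V\ra_{\SSS_2}\,ds$ to the limit by testing against bounded continuous functionals of the past, and identify $A^N\to A$ at rate $N^{-1/2}$ by isolating the $\gamma$-commutator term in $I^N_{p,x}-\mathrm{Id}$ (your Gaussian computation $\sqrt{\beta/(8+\beta)}=\sqrt{\beta_0/(8-\beta_0)}$ matches the constant in \eqref{defA}). The only difference is that your Doob/Burkholder--Davis--Gundy and $L^4$ estimates are heavier than needed: since $\|\trho^N_t\|_{\SSS_2}\le\|\trho^N_t\|_{\SSS_1}=1$, the definition \eqref{def:M} gives the deterministic bound $\|M_t(\trho^N)\|_{\SSS_2}\le 2+2\alpha^2 t$, so $\rho\mapsto M(\rho)$ is a \emph{bounded} continuous functional on paths in the unit ball of $\SSS_2$ and uniform integrability is automatic, which is how the paper argues.
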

The proof of that proposition is postponed to the end of that section. 

From now one, we denote $\kappa := \frac \alpha{\sqrt{2R}} \Bigl(\frac {\beta_0}{8 - \beta_0} \Bigr)^{1/4}$. 
We define for any $s \ge 0$ and $\rho \in \SSS_2$, a bounded linear operator  $\sigma(s,\rho) : L^2(-R,R) \to \SSS_2$ by
\[
\sigma(s, \rho) u :=  \kappa \int_{-R}^R  T_{-s} \Bigl[ \big[ \gamma (\cdot -x),T_s[\rho] \big] \Bigr]  u(x)  \,dx. 
\] 
Then its adjoint $\sigma^*(s,\rho) : \SSS_2 \to L^2(-R,R)$ is defined by
\[
\sigma^*(s, \rho) U := \biggl( x \mapsto \Bigl\la T_{-s} \Bigl[ \big[ \gamma (\cdot -x),T_s[\rho] \big] \Bigr],  U  \Bigr\ra_{\SSS_2}  \biggr),
\]
and $\sigma$ is a ``square root'' of $A$:
\[
\sigma(s,\rho) \sigma^*(s, \rho) = A(s,\rho).
\]

As a result, using a ``standard'' martingale representation theorem (see \cite[Theorem 8.2]{daprato} for instance), one can show that there exists a Brownian motion $B$ on $L^2(-R,R)$ with identity has covariance operator, possibly defined on an extension of the probability space, and such that almost surely:
\begin{equation}\label{EDS}\begin{split}
 M_t(\trho^\infty)= \trho^{\infty}_t -& \trho^\infty_0 
 + \alpha^2 \int_0^t  \Bigl( \trho^{\infty}_s -  T_{-s} \bigl[\theta_\infty \bigl[T_s[\trho^{\infty}_s]\bigr] \bigr] \Bigr) \,ds\\
 &+ i \kappa \int_0^{t} \int_{-R}^R  T_{-s} \Bigl[ \big[\gamma (\cdot -x),T_s[\trho^\infty_s] \big] \Bigr]  dB_s(x)dx \qquad \forall t\geq 0,
\end{split}\end{equation}

In order to apply that theorem, we need to check that $\Phi(t) := \sigma(t,\trho^\infty_t)$  is a predictable process with value in $L_2\bigl( L^2(-R,R), \SSS_2)$ the space of Hilbert-Schmidt operator from $L^2(-R,R)$ into $\SSS_2$. But $\Phi$ is predictable since it depends only on the time $t$ and the value of predictable process $\trho^\infty$ at this time. It is also Hilbert-Schmidt since, if $\Tr_{\SSS_2}$ denotes now the trace of operators acting on $\SSS_2$
\begin{align*}
\Tr_{\SSS_2} \bigl( \Phi(t) \Phi^*(t)\bigr) &  = 
\Tr_{\SSS_2} \bigl( \sigma(t,\trho^\infty_t) \sigma^*(t, \trho^\infty_t)\bigr) =
 \Tr_{\SSS_2} A(t, \trho^\infty_t) \\
 & = \kappa^2 \int_{-R}^R \Bigl\| T_{-t} \Bigl[ \big[\gamma (\cdot -x),T_t[\trho^\infty_t] \big] \Bigr]  \Bigr\|_{\SSS_2}^2  \,dx \\
 & \le 4 \kappa^2  \| \gamma \|_\infty^2 \int_{-R}^R \bigl\| \trho^\infty_t \bigr\|_{\SSS_2}^2 \,dx  \le  8 \kappa^2 R \| \gamma \|_\infty^2.
\end{align*}   
where we have used the fact that $T_s$ is an isometry on $\SSS_2$, the fact that $\|\cdot  \|_{\SSS_2} \le \| \cdot \|_{\SSS_1}$, and point $(iv)$ of Proposition~\ref{prop:Schatten}.

\medskip

The uniqueness for \eqref{EDS} is a direct consequence of  \cite[Theorem III.2.2]{yor}, which apply thanks to the following linear growth relations
\[\sup_{s\geq 0}\Big \| T_{-s}\bigl[ \bigl[\gamma_\infty, T_s[\rho] \bigr]\bigr]\Bigr \|_{\SSS_2}\leq C_2 \|\rho\|_{\SSS_2}\qquad
\text{and}\qquad\sup_{s\geq 0}\Bigl\| \rho-T_{-s}\bigl[\theta_\infty[T_s[\rho]]\bigr] \Bigr\|_{\SSS_2}\leq C_2 \|\rho\|_{\SSS_2}.\]
This characterizes uniquely the laws of all the accumulation points, and then prove the convergence in law.  

\medskip
Now, setting
\[
W_t(X)=  \kappa \int_0^t  \int_{-R}^R  \gamma (X-x) dB_s(x) \, dx ,
\]
we obtain a Brownian field on $\R$ with correlation function \eqref{eq:cor}, and also
\[ 
 \trho^{\infty}_t - \trho^\infty_0 
 + \alpha^2 \int_0^t  \Bigl( \trho^{\infty}_s -  T_{-s} \bigl[\theta_\infty \bigl[T_s[\trho^{\infty}_s]\bigr] \bigr] \Bigr) \,ds = i \int  T_{-s} \Bigl[ \bigl[  dW_s,T_s[\trho^\infty_s] \bigr] \Bigr],
\]
Finally, to conclude the proof, we apply It\^o's formula~\cite[Theorem III.1.3]{yor} to $\rho^\infty_t=T_t[\trho^\infty_t]$.

\begin{proof}[of Proposition \ref{prop:varquad}] 
Since  $M(\trho^N)$ is a martingale,
\begin{equation}\label{MGN}
\mathbb{E}\Big[ \Psi\big(\la M_{s_j}(\trho^N),U_j\ra_{\SSS_2},1\leq j \leq n\big)\Big(\la M_t(\trho^N)-M_s(\trho^N),U\ra_{\SSS_2}\Big) \Big]=0,
\end{equation}
for all $N,n\geq 1$, bounded continuous function $\Psi$, every sequence $0< s_1<\cdots <s_n \leq s <t$, and every family $(U_j)_{j\in \{1,\dots,n\}},U\in \SSS_2$. 

From the definition~\eqref{def:M}, it comes also that $\rho\mapsto M(\rho)$ is a bounded continuous function on $\DD([0,+\infty),\tilde{\SSS}_2)$ according to Proposition~\ref{prop:Schatten}, where $\tilde{\SSS}_2$  stands for the unit ball of $\SSS_2$. This is in fact a standard result in the Skorokhod topology. So  we can pass to the limit in $N$ in \eqref{MGN} and then obtain that $M(\trho^\infty)$ is a martingale (a bounded martingale according to Lemma~\ref{apriori}). 

To obtain the quadratic variation of $M(\trho^\infty)$, let us first remark that for any $\rho \in \SSS_2$, we have
\[
I_{p,x}^N[\rho] - \rho \mp \frac {i\alpha} {\sqrt N} e^{-2p^2} \bigl[ \gamma(\cdot - x) , \rho\bigr] =  R^N_{p,x} [\rho ],
\]
\[\text{where} \qquad R^N_{p,x}[\rho]  := \frac{\alpha^2}{N} \bigl(\theta_p[\rho] - \rho \bigr) \pm \frac{i\alpha}{\sqrt{N}}\Big(\sqrt{1-\frac{\alpha^2}{N}}-1\Big)\bigl[ \gamma(\cdot - x) , \rho\bigr] \]
and then according to Proposition~\ref{prop:Schatten}
\[
\bigl\| R^N_{p,x} [\rho] \bigr\|_{\SSS_2} \le \frac C N \| \rho \|_{\SSS_2},
\qquad 
\bigl\| I_{p,x}^N[\rho] - \rho [\rho] \bigr\|_{\SSS_2}
\le \frac C {\sqrt N} \| \rho \|_{\SSS_2}
\]
for some numerical constant $C>0$. Now from the definitions~\eqref{defAN} and~\eqref{defA} of $A_N$ and $A$, it comes that for any $s \in \R^+$, $\rho ,U,V \in \SSS_2$,
\begin{align*}
\bigl| A^N(s,\rho)(U,V) - A(s,\rho)(U,V) \bigr| &  \le 
\frac {C^2} {R \sqrt N}  \| U \|_{\SSS_2} \| V \|_{\SSS_2} \!\! \int_{-R}^R \int_{\R} \bigl\| R^N_{p,x} [\rho] \bigr\|_{\SSS_2}  \bigl\| I_{p,x}^N[\rho] - \rho [\rho] \bigr\|_{\SSS_2}  dx \, \mu_m(dp),\\
& \le \frac{C'}{\sqrt N} \| U \|_{\SSS_2} \| V \|_{\SSS_2} \| \rho \|_{\SSS_2}^2.
\end{align*}
Finally, using that the process
\[
t\mapsto \la M_{t}(\trho^N),U \ra^2_{\SSS_2}-\int_0^t \big<A^N(s,\trho^N_s)(U),U \big>_{\SSS_2}ds
\] 
is a martingale, and 
by the continuity of $\rho \mapsto M(\rho)$ and $\rho \mapsto \int_0^t \la A(s,\rho_s)U,V \ra_{\SSS_2} \,ds$ on $\DD([0,\infty),\SSS_2)$, we can argue as in~\eqref{MGN} and get that
\[
t\mapsto \la M_{t}(\trho^\infty),U \ra^2_{\SSS_2}-\int_0^t \big<A(s,\trho^\infty_s)(U),U \big>_{\SSS_2}ds
\]  
is also a martingale. This concludes the proof.
\end{proof}

\section*{Appendix}
\appendix

\section{Bounded operators in the Schatten class} \label{Sch_Bd}

In this section, we prove some simple inequalities for the Schatten norms that are extensively used throughout this paper. Let us start with the following two simple estimates.
\begin{lem} \label{dist_op}
Let $\phi, \psi \in L^2(\R^d)$. For all $p \in [1, \infty]$, we have :
\begin{enumerate}
\item[(i)]
\[
\bigl\| |\phi \ra \la \psi| \bigr\|_{\SSS_p}  = \| \phi \|_2 \,  \| \psi\|_2;
\]
\item[(ii)] if moreover $\phi,\psi$ are unit vectors of $L^2(\R^d)$, we have
\[
\bigl\| |\phi \ra \la \phi|- |\psi \ra \la \psi|\bigr\|_{\SSS_p} 
= 2^{1/p} \sqrt{1 - \bigl| \la  \phi | \psi \ra \bigr|^2} \le 2^{1/p} \| \phi - \psi \|_2.
\] 
\end{enumerate}
\end{lem}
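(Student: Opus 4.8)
The plan is to prove the two identities by reducing everything to rank-one and rank-two operators and computing their singular values explicitly, exploiting the fact that the Schatten norms of a finite-rank operator depend only on its finitely many singular values.

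For item $(i)$, I would first note that $|\phi\ra\la\psi|$ is a rank-one operator (unless $\phi$ or $\psi$ vanishes, in which case both sides are zero). Its only nonzero singular value is obtained from $\bigl(|\phi\ra\la\psi|\bigr)^*|\phi\ra\la\psi| = |\psi\ra\la\phi|\,|\phi\ra\la\psi| = \|\phi\|_2^2\, |\psi\ra\la\psi|$, whose nonzero eigenvalue is $\|\phi\|_2^2\|\psi\|_2^2$; hence the single singular value is $\mu_1 = \|\phi\|_2\|\psi\|_2$. Plugging into the definition~\eqref{Sch_norm} of $\|\cdot\|_{\SSS_p}$ gives $\bigl\||\phi\ra\la\psi|\bigr\|_{\SSS_p} = \mu_1 = \|\phi\|_2\|\psi\|_2$ for every $p\in[1,\infty]$ (the $p=\infty$ case being the operator norm, which is also $\mu_1$).

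For item $(ii)$, the operator $A := |\phi\ra\la\phi| - |\psi\ra\la\psi|$ is self-adjoint with range contained in $\mathrm{span}\{\phi,\psi\}$, so it has rank at most $2$, and its singular values are the absolute values of its (at most two) nonzero eigenvalues. Since $A$ is self-adjoint with trace $\Tr A = \|\phi\|_2^2 - \|\psi\|_2^2 = 0$, its two eigenvalues are $\pm\lambda$ for some $\lambda\ge 0$, so $\Tr(A^2) = 2\lambda^2$. I would then compute $\Tr(A^2)$ directly: expanding, $A^2 = |\phi\ra\la\phi| + |\psi\ra\la\psi| - \la\phi|\psi\ra\,|\phi\ra\la\psi| - \la\psi|\phi\ra\,|\psi\ra\la\phi|$, whose trace is $1 + 1 - |\la\phi|\psi\ra|^2 - |\la\phi|\psi\ra|^2 = 2(1 - |\la\phi|\psi\ra|^2)$. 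Hence $\lambda = \sqrt{1 - |\la\phi|\psi\ra|^2}$, and since there are exactly two singular values both equal to $\lambda$ (when $\lambda>0$), the definition~\eqref{Sch_norm} gives $\|A\|_{\SSS_p} = (2\lambda^p)^{1/p} = 2^{1/p}\sqrt{1-|\la\phi|\psi\ra|^2}$; when $\lambda = 0$ both sides vanish. The final inequality follows from $1 - |\la\phi|\psi\ra|^2 \le 1 - |\la\phi|\psi\ra|^2 + \bigl(1-|\la\phi|\psi\ra|\bigr)^2 \cdot(\text{something})$, but more cleanly: $\|\phi-\psi\|_2^2 = 2 - 2\,\mathrm{Re}\la\phi|\psi\ra \ge 2 - 2|\la\phi|\psi\ra| \ge 1 - |\la\phi|\psi\ra|^2 = (1-|\la\phi|\psi\ra|)(1+|\la\phi|\psi\ra|)$, using $2(1-t)\ge (1-t)(1+t)$ for $t\in[0,1]$; taking square roots gives $\sqrt{1-|\la\phi|\psi\ra|^2}\le \|\phi-\psi\|_2$, and $2^{1/p}\le 2$ is irrelevant since the factor $2^{1/p}$ is kept on the left.

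I do not expect any serious obstacle here; the only point requiring mild care is the degenerate case where $\phi$ and $\psi$ are colinear (or one is zero), where the rank drops and one must check the formulas still hold (both sides being $0$ when $|\la\phi|\psi\ra| = 1$), and the observation that a self-adjoint finite-rank operator has Schatten norm equal to the $\ell^p$-norm of its eigenvalue sequence. Everything else is a direct computation of traces of products of rank-one projections.
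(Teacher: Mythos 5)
Your proof is correct. Item $(i)$ is the same computation as in the paper (both reduce to $\bigl(|\phi\ra\la\psi|\bigr)^*|\phi\ra\la\psi| = \|\phi\|_2^2\,|\psi\ra\la\psi|$). For item $(ii)$ the paper also reduces to the singular values of the rank-two operator $A$, but extracts them by computing $A^*A=A^2$ explicitly and checking that $\phi$ and $\psi$ are both eigenvectors with eigenvalue $1-|\la\phi|\psi\ra|^2$, whence $A^2=(1-|\la\phi|\psi\ra|^2)P_2$ with $P_2$ the orthogonal projection onto $\mathrm{span}\{\phi,\psi\}$. Your route via $\Tr A=0$ together with $\Tr(A^2)=2(1-|\la\phi|\psi\ra|^2)$ reaches the same pair of singular values with a little less linear algebra, and handles the degenerate colinear case equally well; the paper's version identifies the eigenvectors explicitly, which is mildly more informative but not needed elsewhere. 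For the final inequality the paper derives the exact identity $1-|\la\phi|\psi\ra|^2=\|\phi-\psi\|_2^2-|\la\phi|\psi-\phi\ra|^2$ from the normalization, whereas you use the chain $1-t^2=(1-t)(1+t)\le 2(1-t)=2-2|\la\phi|\psi\ra|\le 2-2\,\mathrm{Re}\,\la\phi|\psi\ra=\|\phi-\psi\|_2^2$ with $t=|\la\phi|\psi\ra|$; both are valid. The only cosmetic flaw is your first, abandoned attempt at that inequality (the clause ending in ``$\cdot(\text{something})$''), which should simply be deleted since the ``more cleanly'' argument that follows is already complete.
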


\begin{proof} \emph{Item (i).} First, setting $\rho := |\phi \ra \la \psi|$ it is simple to see that $\rho^\ast \rho = \|\phi\|^2_2 |\psi \ra \la \psi|$, so that the calculations of the $\SSS_p$-norms of $\rho$ are straightforward.

\emph{Item (ii).} Defining now $\rho := |\phi \ra \la \phi|- |\psi \ra \la \psi|$ and $a:=\la  \psi | \phi \ra$, we have
\[
\rho^\ast \rho = |\phi \ra \la \phi| + |\psi \ra \la \psi|
- a \, |\psi \ra \la \phi|
- \overline{a} \, |\phi \ra \la \psi|.
\]
Then, it is not difficult to see that 
$ \rho^\ast \rho | \phi \ra = \bigl( 1 - |a|^2 \bigr) | \phi \ra$ and 
$ \rho^\ast \rho | \psi \ra = \bigl( 1 - |a|^2 \bigr) | \psi \ra$, so that $\rho^\ast \rho = \bigl(1 -|a|^2 \bigr) P_2$, with $P_2$ the orthogonal projection on the two dimensional subspace generated by $(\psi, \phi)$, when $\psi \neq \pm \phi$. This yields the equality.

Now, to obtain the bound, let us remark that the assumption $\| \psi \|_2 = \bigl\| \phi + ( \psi - \phi) \bigr\|_2= 1$ implies that $2 \mathrm{Re } \la \phi - \psi | \phi \ra =  \| \psi - \phi \|_2^2$.
Therefore, we have 
\begin{align*}
1 - \bigl| \la  \phi | \psi \ra \bigr|^2 & = 
1 - \bigl| 1 + \la  \phi | (\psi - \phi) \ra \bigr|^2 
= 2 \mathrm{Re} \la \phi - \psi | \phi \ra - \bigl| \la  \phi | (\psi - \phi) \ra \bigr|^2 \\
& = \| \psi - \phi \|_2^2 -  \bigl| \la  \phi | (\psi - \phi) \ra \bigr|^2
\le \| \psi - \phi \|_2^2,
\end{align*}
and the conclusion follows.
\end{proof}

\begin{lem}\label{lem:mix}
Assume that $\rho : L^2(\R^d) \mapsto L^2(\R^d)$ is a symmetric nonnegative operator, and that $g$ is a continuous function on $\R^d$. Then
\[
\Tr\bigl( \inab \rho g(X) + g(X) \rho \inab \bigr)  \le 2 \sqrt{\Tr \bigl( g(X) \rho g(X)) \, \Tr \bigl( \inab \rho  \inab \bigr)}
\]
\end{lem}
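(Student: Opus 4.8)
The plan is to reduce the claimed trace inequality to a Cauchy--Schwarz estimate in the Hilbert--Schmidt inner product, after writing everything in terms of the square root of $\rho$. Since $\rho \in \SSS_1^+$, it admits a nonnegative symmetric square root $\rho^{1/2} \in \SSS_2$, with $\rho = \rho^{1/2}\rho^{1/2}$. The first step is therefore to write
\[
\Tr\bigl( \inab \rho\, g(X) \bigr) = \Tr\bigl( \inab \rho^{1/2} \cdot \rho^{1/2} g(X) \bigr) = \bigl\la \bigl( \inab \rho^{1/2}\bigr)^\ast , \rho^{1/2} g(X) \bigr\ra_{\SSS_2} = \bigl\la \rho^{1/2}\inab , \rho^{1/2} g(X) \bigr\ra_{\SSS_2},
\]
using that $\inab$ and $g(X)$ (and $\rho^{1/2}$) are symmetric. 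The second term $\Tr\bigl( g(X)\rho\inab\bigr)$ is the complex conjugate of the first (it is $\overline{\Tr(\inab \rho\, g(X))}$ by taking adjoints), so the sum $\Tr\bigl( \inab \rho g(X) + g(X)\rho\inab\bigr)$ equals $2\,\mathrm{Re}\,\la \rho^{1/2}\inab, \rho^{1/2}g(X)\ra_{\SSS_2}$.

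Next I would apply the Cauchy--Schwarz inequality in $\SSS_2$:
\[
2\,\mathrm{Re}\,\bigl\la \rho^{1/2}\inab, \rho^{1/2}g(X)\bigr\ra_{\SSS_2} \le 2\, \bigl\| \rho^{1/2}\inab \bigr\|_{\SSS_2} \bigl\| \rho^{1/2}g(X) \bigr\|_{\SSS_2},
\]
and then identify the two $\SSS_2$-norms. One has $\bigl\| \rho^{1/2}g(X)\bigr\|_{\SSS_2}^2 = \Tr\bigl( g(X)\rho^{1/2}\rho^{1/2}g(X)\bigr) = \Tr\bigl( g(X)\rho\, g(X)\bigr)$, and similarly $\bigl\| \rho^{1/2}\inab\bigr\|_{\SSS_2}^2 = \Tr\bigl( \inab \rho \inab\bigr)$; here I use $(\inab)^\ast = \inab$ and $g(X)^\ast = g(X)$. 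Substituting these gives exactly the claimed bound $2\sqrt{\Tr(g(X)\rho\, g(X))\,\Tr(\inab\rho\inab)}$.

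The main obstacle is purely a matter of domains and justifying that all the objects above are well-defined when the right-hand side is finite (if it is infinite there is nothing to prove). Specifically, one must check that $\rho^{1/2}g(X)$ and $\rho^{1/2}\inab$ extend to bounded, indeed Hilbert--Schmidt, operators precisely under the hypotheses $\Tr(g(X)\rho\, g(X)) < \infty$ and $\Tr(\inab \rho\inab) < \infty$: this is where the unboundedness of $g(X)$ and $\inab$ enters, and one argues by a density/limiting argument, e.g.\ replacing $g$ by $g\,\indiq_{|X|\le n}$ and $\inab$ by a bounded Fourier multiplier cut-off, deriving the inequality in the truncated setting where all operators are bounded, and passing to the limit by monotone convergence of the traces (as in the proof of Lemma~\ref{lem:exchange}). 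Once the finiteness is secured, the cyclicity of the trace and the reordering $\Tr(\inab\rho^{1/2}\cdot\rho^{1/2}g) = \la \rho^{1/2}\inab,\rho^{1/2}g\ra_{\SSS_2}$ are routine, and the Cauchy--Schwarz step closes the argument.
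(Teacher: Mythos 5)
Your proof is correct and is essentially the paper's argument in coordinate-free form: the paper diagonalizes $\rho=\sum_j\lambda_j|\psi_j\ra\la\psi_j|$ and applies Cauchy--Schwarz twice (once on each term $\la g(X)\psi_j|\inab\psi_j\ra$, then on the $j$-sum with weights $\lambda_j$), which is exactly your single $\SSS_2$ Cauchy--Schwarz for $\rho^{1/2}\inab$ and $\rho^{1/2}g(X)$ written out in the eigenbasis. Your handling of the domain issues is, if anything, more careful than the paper's, which does not address them.
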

\begin{proof}
This can be shown using the standard decomposition of symmetric nonnegative operator $\rho = \sum_{i = 1}^{+\infty} \lambda_i |\psi_i \ra \la \psi_i |$, where $(\lambda_i)_{i \in \N^\ast}$ is the non-increasing sequence of eigenvalues of $\rho$, and $(\psi_i)_{i \in \N^\ast}$ is an associated orthogonal family of eigenvectors. Hence,
\[\Tr\bigl( \inab \rho g(X) + g(X) \rho \inab \bigr)  = 
\sum_{j =1}^{+\infty} 2 \lambda_j  \textrm{Re}  \bigl\la g(X) \psi_j  \big| \inab \psi_j \bigr\ra , 
\]
and then
\begin{align}
\biggl| \Tr\bigl( \inab \rho g(X) + g(X) \rho \inab \bigr) \biggr| & \le
2 \sum_{j =1}^{+\infty} \lambda_j \| g(X) \psi_j \|_2 \| \nabla \psi_j \|_2 \nonumber \\
& \le 2 \, \biggl( \sum_{j =1}^{+\infty} \lambda_j \| g(X) \psi_j \|_2^2 \biggr)^{1/2}
\biggl( \sum_{j =1}^{+\infty} \lambda_j \| \nabla \psi_j \|_2^2 \biggr)^{1/2} \label{tracebound}\\
& \le 2 \, \sqrt{\Tr \bigl( g(X) \rho g(X)) \, \Tr \bigl( \inab \rho \inab \bigr)}\nonumber.
\end{align}
\end{proof}

In the following proposition we collect simple inequalities for the Schatten norm. More specifically, we are interested by two operations on compact operators:
\begin{equation} \label{op_theta}
\rho \mapsto \theta[\rho] := \frac1{\sqrt{2\pi}} \int_\R e^{ikX} \rho e^{-ikX}  \widehat \theta(k) \,dk ,
\qquad
\rho \mapsto i \bigl(\gamma \rho - \rho \gamma \bigr),
\end{equation}
for function $\theta$ and $\gamma:\R \rightarrow \R$. Below, the Fourier convention is the following
\[\widehat{f}(k)=\frac{1}{\sqrt{2\pi}}\int f(X)e^{-ikX}dX\qquad\text{and then }\qquad f(X)=\frac{1}{\sqrt{2\pi}}\int \widehat{f}(k)e^{ikX}dk.\]

\begin{prop} \label{prop:Schatten}
We have for any compact operator $\rho$ on $L^2(\R)$, and any $p \in [1, \infty]$ :
\begin{itemize}
\item[(i)] $ \|\rho^*  \|_{\SSS_p} = \| \rho \|_{\SSS_p}$;
\item[(ii)] for any isometry $U$ on $L^2(\R)$,  $\| U \rho U^*  \|_{\SSS_p} = \| \rho \|_{\SSS_p}$;
\item[(iii)] if $\theta$ has an integrable Fourier transform, then the operator defined by~\eqref{op_theta} satisfies:
\[\; \bigl\|   \theta[\rho]  \bigr\|_{\SSS_p} \le \frac1{\sqrt{2\pi}} \bigl\| \widehat \theta\bigr\|_1 \| \rho \|_{\SSS_p};
\]
\item[(iv)] if $\gamma$ is bounded,
$ \displaystyle \; \|   \gamma \rho \|_{\SSS_p} \le  \| \gamma\|_\infty \| \rho \|_{\SSS_p}$,
and
$ \displaystyle \;  \| \rho  \gamma \|_{\SSS_p} \le  \| \gamma\|_\infty \| \rho \|_{\SSS_p}. $
\end{itemize}
\end{prop}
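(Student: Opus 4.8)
The plan is to prove each of the four items separately, since they are essentially standard facts about Schatten norms collected for convenient reference. For item (i), the key fact is that $\rho$ and $\rho^*$ have the same singular values: indeed $\rho^*\rho$ and $\rho\rho^*$ are unitarily equivalent on the orthogonal complements of their kernels (via the polar decomposition $\rho = U|\rho|$), so they have the same non-zero eigenvalues with multiplicities, hence $\mu_j(\rho)=\mu_j(\rho^*)$ for all $j$ and \eqref{Sch_norm} gives the equality. For item (ii), the singular values of $U\rho U^*$ are those of $\rho$: since $U$ is an isometry, $(U\rho U^*)^*(U\rho U^*) = U \rho^* U^* U \rho U^* = U\rho^*\rho U^*$ (using $U^*U = \Id$), which is again unitarily equivalent to $\rho^*\rho$ on the relevant subspace, so the singular values coincide. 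Here one must be slightly careful that an isometry need not be surjective, but this only affects the kernel of the operators, not the non-zero part of the spectrum.

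For item (iii), I would use the integral representation \eqref{op_theta} together with the triangle inequality for the Bochner integral in $\SSS_p$ (valid since $\SSS_p$ is a Banach space and the integrand $k \mapsto e^{ikX}\rho e^{-ikX}\widehat\theta(k)$ is Bochner-integrable when $\widehat\theta \in L^1$ and $\rho \in \SSS_p$):
\[
\bigl\| \theta[\rho]\bigr\|_{\SSS_p} \le \frac{1}{\sqrt{2\pi}}\int_\R \bigl\| e^{ikX}\rho e^{-ikX}\bigr\|_{\SSS_p}\, |\widehat\theta(k)|\,dk.
\]
Then I invoke item (ii) with the unitary (hence isometric) multiplication operator $U_k := e^{ikX}$, which gives $\| e^{ikX}\rho e^{-ikX}\|_{\SSS_p} = \|\rho\|_{\SSS_p}$, and pulling the constant out of the integral yields the claimed bound with $\|\widehat\theta\|_1$. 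For item (iv), I would argue that left (resp. right) multiplication by a bounded operator cannot increase the singular values: if $B$ is bounded with $\|B\|_{\mathrm{op}} = \|\gamma\|_\infty$, then by the min-max characterization of singular values (or equivalently $\mu_j(B\rho) = \mu_j$ of the restriction and the fact that $(B\rho)^*(B\rho) = \rho^* B^* B \rho \le \|B\|^2_{\mathrm{op}}\, \rho^*\rho$ in the operator order), one gets $\mu_j(\gamma\rho) \le \|\gamma\|_\infty\, \mu_j(\rho)$ for each $j$, and summing the $p$-th powers gives $\|\gamma\rho\|_{\SSS_p}\le \|\gamma\|_\infty\|\rho\|_{\SSS_p}$; the bound for $\rho\gamma$ follows by applying this to $\rho^*\gamma^*$ and using item (i), or symmetrically since $\gamma$ is real so $\gamma$ acts the same from either side as a multiplication operator.

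The only genuinely delicate point is making the singular-value comparison rigorous in the infinite-dimensional setting — one should either cite the classical references already in the bibliography (\cite{simon} for the theory of trace ideals, which contains exactly these monotonicity properties of $\SSS_p$ norms under composition with bounded operators and under isometries) or spell out the operator-order argument $\rho^* B^* B \rho \le \|B\|^2\rho^*\rho$ combined with the monotonicity of eigenvalues under the order relation. I expect this to be the main (mild) obstacle; everything else is routine. In practice the cleanest write-up probably just states the polar-decomposition / unitary-equivalence facts for (i) and (ii), and for (iii) and (iv) reduces to (ii) and to a citation of the standard ideal property $\|A\rho B\|_{\SSS_p}\le \|A\|_{\mathrm{op}}\|\rho\|_{\SSS_p}\|B\|_{\mathrm{op}}$ respectively.
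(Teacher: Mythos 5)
Your proposal is correct and follows essentially the same route as the paper: items (iii) and (iv) are proved exactly as in the text (Bochner triangle inequality reduced to the isometry invariance of item (ii), and the H\"older/ideal property $\|A\rho\|_{\SSS_p}\le\|A\|_{\SSS_\infty}\|\rho\|_{\SSS_p}$ cited from Simon), while for (i) and (ii) the paper reads the singular values off the canonical expansion $\rho=\sum_j\mu_j(\rho)|\psi_j\ra\la\phi_j|$ rather than arguing via unitary equivalence of $\rho^*\rho$ and $\rho\rho^*$, which is an equivalent formulation of the same fact.
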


\begin{proof}
\emph{Item (i).} It is an easy consequence of the expansion~\eqref{expansion}: it implies in fact that
\[
\rho^\ast = \sum_{i=1}^{+\infty} \mu_i(\rho) \, | \phi_i \ra \la \psi_i |,
\]
and so $\mu_i(\rho^\ast) = \mu_i(\rho)$ for each $i \ge 1$. 

\emph{Item (ii).} Using again the expansion \eqref{expansion}, we have
\[
U\rho U^\ast = \sum_{i=1}^{+\infty} \mu_i(\rho) \,  | U \psi_i \ra \la U \phi_n |,
\]
so that $\mu_i( U\rho U^\ast) = \mu_i(\rho)$ for every $i\geq 1$.

\emph{Item (iii).} The multiplication by $e^{ikX}$ is clearly an isometry of $L^2(\R)$, and then by $(ii)$ we get $\bigl\| e^{ikX} \rho  e^{-ikX} \|_{\SSS_p} =  \| \rho \|_{\SSS_p}$. Moreover, by \eqref{op_theta}, it comes
\[
\bigl\| \theta[\rho] \bigr\|_{\SSS_p} \le \frac1{\sqrt{2\pi}} \int_\R  \bigl\| e^{ikX} \rho e^{-ikX}  \bigr\|_{\SSS_p} \bigl|\widehat \theta(k) \bigr|  \,dk = 
\frac1{\sqrt{2\pi}} \bigl\| \widehat \theta \bigr\|_1 \| \rho \|_{\SSS_p}.
\]

\emph{Item (iv).} It is a simple consequence of the H\"older inequality for compact operator: $\| AB\|_{\SSS_p} \le \| A\|_{\SSS_\infty} \| B\|_{\SSS_p}$, that can be found for instance in~\cite[Theorem 2.8]{simon}. We apply it to $A=\gamma$ and $B= \rho$, and use the fact that the $\SSS_\infty$-norm of $\gamma$ acting as a multiplication operator on $L^2(\R)$  is lower than its infinite norm: $\| \gamma \|_{\SSS_\infty} \le \|\gamma \|_\infty$. Remark that the multiplication operator $\gamma $ is not compact but that $\gamma \rho$ and $\rho \gamma$ are when $\rho$ is. The conclusion follows.
\end{proof}

Let us finish this section with the following lemma providing equalities which are useful to prove the tightness in the proof of Theorem \ref{thm:main}.
\begin{lem}\label{kin_eq}
For any $\theta$ such that $\widehat  \theta \in L^1(\R)$, and any $\rho \in \SSS_1^+$, we have for the operator $\theta[\cdot]$ defined by~\eqref{op_theta}:
\begin{enumerate}
\item[(i)] 
\[
\Tr \bigl( X \theta[\rho] X \bigr) =  \theta(0) \Tr \bigl( X \rho X \bigr),
\]
\item[(ii)] if in addition $\widehat \theta (1+k^2) \in L^1(\R)$ and $\Tr \bigl(  \inab \rho \inab \bigr) <  \infty$, then 
\[
\Tr \bigl( \inab \theta[\rho] \inab \bigr) = 
\theta(0) \Tr \bigl( \inab \rho \inab \bigr) 
+i \,  \theta'(0) \Tr \bigl( \inab \rho + \rho \inab \bigr)
- \theta''(0) \Tr \rho.
\] 
\item[(iii)] if in addition $\widehat \theta (1+\vert k \vert) \in L^1(\R)$
and $ \Tr \bigl(  X \rho X \bigr)+ \Tr \bigl(  \inab \rho \inab \bigr) <  \infty$, then 
\[
\Tr \bigl( \inab \theta[\rho] X +X \theta[\rho] \inab \bigr) = 
\theta(0) \Tr \bigl( \inab \rho X +X \rho\inab \bigr)+ i \,  \theta'(0) \Tr \bigl(  \rho X+X \rho \bigr).
\]
\end{enumerate}
\end{lem}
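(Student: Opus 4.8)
\emph{Strategy.} The three identities rest on the same mechanism, so the plan is to treat them in parallel. The starting point is to view $\theta[\cdot]$ as the Bochner integral in $\SSS_1$
\[
\theta[\rho] = \frac1{\sqrt{2\pi}}\int_\R e^{ikX}\,\rho\, e^{-ikX}\,\widehat\theta(k)\,dk,
\]
which converges absolutely since $k\mapsto e^{ikX}\rho e^{-ikX}$ is bounded in $\SSS_1$ (each factor being a unitary conjugation, by Proposition~\ref{prop:Schatten}(ii)) and $\widehat\theta\in L^1$. The algebraic input is the pair of commutation relations, valid on smooth compactly supported functions and hence extending to the relevant domains,
\[
e^{-ikX}\,(i\nabla)\,e^{ikX} = (i\nabla) - k, \qquad e^{-ikX}\,X\,e^{ikX} = X ,
\]
the second being trivial since $X$ and $e^{ikX}$ are both multiplication operators.

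\emph{The pointwise-in-$k$ computation.} For each fixed $k$ I would conjugate the sandwiched operator, using cyclicity of the trace together with the unitarity of $e^{ikX}$. For (i), since $X$ commutes with $e^{ikX}$, one gets $\Tr\bigl(X e^{ikX}\rho e^{-ikX}X\bigr) = \Tr\bigl(X\rho X\bigr)$; for (ii) the conjugation turns $(i\nabla)$ into $(i\nabla)-k$ on both sides, so
\[
\Tr\bigl((i\nabla)e^{ikX}\rho e^{-ikX}(i\nabla)\bigr)
= \Tr\bigl(((i\nabla)-k)\rho((i\nabla)-k)\bigr)
= \Tr\bigl(\inab\rho\inab\bigr) - k\,\Tr\bigl(\inab\rho+\rho\inab\bigr) + k^2\,\Tr\rho ,
\]
and for (iii), with $X$ on one side and $(i\nabla)$ on the other, the same manipulation gives $\Tr\bigl(((i\nabla)-k)\rho X + X\rho((i\nabla)-k)\bigr) = \Tr(\inab\rho X+X\rho\inab) - k\,\Tr(\rho X + X\rho)$. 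Integrating these against $\frac1{\sqrt{2\pi}}\widehat\theta(k)\,dk$ and invoking the moment identities
\[
\frac1{\sqrt{2\pi}}\int_\R \widehat\theta(k)\,dk = \theta(0),\qquad
\frac1{\sqrt{2\pi}}\int_\R k\,\widehat\theta(k)\,dk = -i\,\theta'(0),\qquad
\frac1{\sqrt{2\pi}}\int_\R k^2\,\widehat\theta(k)\,dk = -\theta''(0)
\]
then yields the three claimed formulas. These moment identities are exactly differentiation of the Fourier representation of $\theta$ under the integral sign, which is licensed precisely by the integrability hypotheses $\widehat\theta\in L^1$ (for (i)), $\widehat\theta(1+|k|)\in L^1$ (for (iii)), and $\widehat\theta(1+k^2)\in L^1$ (for (ii)).

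\emph{The main obstacle.} The delicate point is the exchange of the unbounded operators $(i\nabla)$ and $X$ with the Bochner integral, i.e. the identity $\Tr\bigl(\inab\theta[\rho]\inab\bigr) = \frac1{\sqrt{2\pi}}\int \widehat\theta(k)\,\Tr\bigl(\inab e^{ikX}\rho e^{-ikX}\inab\bigr)\,dk$ and its mixed analogues, since the usual Bochner property only covers \emph{bounded} functionals. I would handle this as in the proof of Lemma~\ref{lem:exchange}: fix a basis $(\psi_j)$ of smooth Lemari\'e--Meyer functions, write $\Tr(\inab\theta[\rho]\inab) = \sum_j \la (i\nabla)\psi_j\,|\,\theta[\rho]\,|\,(i\nabla)\psi_j\ra$, observe that for each fixed $j$ the map $\rho\mapsto \la (i\nabla)\psi_j\,|\,\rho\,|\,(i\nabla)\psi_j\ra$ is a bounded linear functional on $\SSS_1$ (so it does commute with the integral), and then apply Fubini in $(j,k)$, the absolute convergence being guaranteed by finiteness of $\Tr(\inab\rho\inab)$ together with $\Tr\rho$ and $\Tr(X\rho X)$ via Lemma~\ref{lem:mix} and Proposition~\ref{prop:Schatten}. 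The preliminary check that $\inab\theta[\rho]\inab$ (resp. the symmetric mixed combinations appearing in (i) and (iii)) is genuinely trace class reduces, after the same conjugation, to the corresponding finiteness assumption on $\rho$; alternatively one can replace $(i\nabla)$ by its spectral truncations $(i\nabla)\mathbf{1}_{\{|i\nabla|\le n\}}$, for which the exchange is the plain Bochner property, and pass to the limit by monotone convergence using the uniform bounds above.
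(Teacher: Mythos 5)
Your proposal is correct and follows essentially the same route as the paper: the commutation relations $e^{-ikX}X e^{ikX}=X$ and $e^{-ikX}\inab e^{ikX}=\inab-k$ give the pointwise-in-$k$ trace identities, which are then integrated against $\frac{1}{\sqrt{2\pi}}\widehat\theta(k)\,dk$ using the same Fourier moment formulas for $\theta(0)$, $\theta'(0)$, $\theta''(0)$. The only difference is that you make explicit the justification for exchanging the unbounded operators $X$, $\inab$ with the Bochner integral (via the smooth basis expansion of Lemma~\ref{lem:exchange} and Fubini), a point the paper's proof leaves implicit; this is a welcome but not essentially different addition.
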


\begin{proof} {\sl Item $(i)$.} Using first that for any $k \in \R$, multiplication by $e^{i k X }$ commutes with multiplication by $X$, we obtain using $(ii)$ of Proposition~\ref{prop:Schatten} that
\[
\Tr \bigl(  X e^{ikX} \rho  e^{-ikX} X \bigr) 
 = \Tr \bigl( e^{ikX} X \rho X e^{-ikX} \bigr) 
 = \Tr \bigl( X \rho X \bigr).
\]
The conclusion follows by integrating with respect to $(2\pi)^{-1/2} \widehat \theta(k) \,dk$, and that $\sqrt{2\pi} \, \theta(0) = \int_\R\widehat \theta(k) \,dk$.

\medskip
{\sl Item $(ii)$} 
Using that $\inab e^{ikX}= e^{ikX} \bigl[  \inab -k\bigr]$, we first get that
\begin{align*}
\Tr \bigl(  \inab e^{ikX} \rho  e^{ikX} \inab \bigr) 
& = \Tr \Bigl( e^{ikX} \bigl[  \inab -k\bigr]  \rho \bigl[  \inab -k\bigr] e^{-ikX} \Bigr) 
=  \Tr \Bigl( \bigl[  \inab -k\bigr]  \rho \bigl[  \inab -k\bigr]  \Bigr) \\
& =  \Tr \bigl(  \inab \rho  \inab \bigr) + k^2 \Tr \rho - k \Tr \bigl( \inab \rho + \rho \inab \bigr) 
\end{align*}
Multiplying by $(2\pi)^{-1/2} \widehat \theta(k)$ and summing up in $k$, it comes by definition~\eqref{op_theta} of $\theta[\rho]$:
\[
\sqrt{2\pi} \Tr \bigl( \inab \theta[\rho]  \inab \bigr)= \Tr \bigl(  \inab \rho  \inab \bigr) \int_\R \widehat \theta(k) \,dk  +  \Tr \rho \int_\R k^2 \widehat \theta(k) \,dk 
- \Tr \bigl( \inab \rho + \rho \inab \bigr)  \int_\R k\widehat \theta(k) \,dk,
\]
and the conclusion follows since under our hypothesis 
$\sqrt{2\pi} \, \theta(0) = \int_\R\widehat \theta(k) \,dk$, 
$\sqrt{2\pi} \,\theta'(0) = i \, \int_\R k\widehat \theta(k) \,dk$ and 
$\sqrt{2\pi} \, \theta''(0) = - \int_\R k^2 \widehat \theta(k) \,dk$.

Finally, the proof of item $(iii)$ follows the same line as the proofs of the two first items.
\end{proof}

\section{Free evolution of density operators}\label{Appendix:B}

In this section we investigate some properties involving the momentum in position, the kinetic energy, and the distance in $\SSS_1$ to the initial condition, for the solution of the free evolution 
\[i \partial_t \rho_t = \bigl[ H_0, \rho_t \bigr],\]
with initial condition $\rho_0\in \SSS_1^+$ satisfying $\Tr \rho_0=1$. 
\begin{lem}\label{free_case}
We have for all $t>0$
\[
\partial_t \Tr ( X \rho_t X) = \Tr\bigl( \inab \rho_t X + X \rho_t \inab \bigr)
\quad \text{and} \quad
\partial_t \Tr\bigl( \inab \rho_t X + X \rho_t \inab   \bigr) = 2 \Tr \bigl( \inab \rho_t \inab \bigr).
\]
\end{lem}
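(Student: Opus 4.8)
The plan is to compute the two time derivatives directly from the free Von Neumann equation $i\partial_t \rho_t = [H_0,\rho_t]$, using the explicit form $H_0 = -\frac12\Delta = \frac12(i\nabla)^2$, and justifying each manipulation by the fact that $\rho_t$ stays in $\SSS_1^+$ with the relevant moments finite (this latter point propagates along the free flow and is the only place regularity must be watched). Throughout I will write $P:=i\nabla$ for brevity, so that $H_0 = \frac12 P^2$, $[P,X]=-i$, and I will use repeatedly that for trace-class operators $\Tr[A,B]=0$ whenever the commutator is itself trace class, together with cyclicity of the trace.

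For the first identity, differentiate under the trace:
\[
\partial_t \Tr(X\rho_t X) = \Tr\bigl(X(\partial_t\rho_t)X\bigr) = -i\,\Tr\bigl(X[H_0,\rho_t]X\bigr) = -\tfrac{i}{2}\Tr\bigl(X[P^2,\rho_t]X\bigr).
\]
Now expand $[P^2,\rho_t] = P[P,\rho_t] + [P,\rho_t]P$ and commute the outer $X$'s past $P$ using $PX = XP - i$; after cyclic rearrangement the $P^2$ and the pure $[H_0,X\rho_t X]$-type pieces drop out (their traces vanish), and one is left precisely with $\Tr(P\rho_t X + X\rho_t P) = \Tr(\inab\rho_t X + X\rho_t\inab)$. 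The bookkeeping is routine once one keeps track of the single commutator $[P,X]=-i$; the key is that every term that is not of the desired shape is the trace of a commutator of trace-class operators, hence zero.

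For the second identity, set $A_t := \inab\rho_t X + X\rho_t\inab = P\rho_t X + X\rho_t P$ and compute $\partial_t A_t$ by replacing $\partial_t\rho_t$ with $-i[H_0,\rho_t]$, then take the trace:
\[
\partial_t \Tr(A_t) = -i\,\Tr\bigl(P[H_0,\rho_t]X + X[H_0,\rho_t]P\bigr) = -\tfrac{i}{2}\Tr\bigl(P[P^2,\rho_t]X + X[P^2,\rho_t]P\bigr).
\]
Again expand $[P^2,\rho_t]$, use $[P,X]=-i$ to move factors of $X$ across factors of $P$, and discard the traces of commutators. Since $P$ commutes with $H_0$, the "$P$-heavy" contributions cancel and only the commutator $[P,X]$ survives, producing a term proportional to $\Tr(P\rho_t P) = \Tr(\inab\rho_t\inab)$; chasing the numerical constant gives the factor $2$.

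The main obstacle is not algebraic but analytic: justifying differentiation under the trace and the vanishing of the commutator traces requires knowing that $\rho_t$, $X\rho_t X$, $\inab\rho_t\inab$ (and mixed objects like $\inab\rho_t X$) remain trace class for $t>0$ when they are at $t=0$. This is handled by writing $\rho_t = e^{-itH_0}\rho_0 e^{itH_0}$ and checking that conjugation by the free group, which is explicit in Fourier variables, preserves these finiteness conditions (it shears $X$ into $X - tP$ in the Heisenberg picture), exactly the kind of estimate already used implicitly around Proposition~\ref{propSPDE}. Once that is in place, the two displayed identities follow from the elementary commutator computations above.
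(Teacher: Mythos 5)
Your argument is correct in substance but takes a genuinely different route from the paper's. The paper diagonalizes $\rho_t=\sum_j\lambda_j|\psi^j_t\ra\la\psi^j_t|$ as in~\eqref{diag} and reduces both identities to the classical virial computations for a single solution of the free Schr\"odinger equation, where every integration by parts is performed on actual functions; you instead work at the operator level with the canonical commutation relation, cyclicity of the trace, and the vanishing of traces of commutators. Your route is shorter to state and generalizes more easily, but the step you describe as ``routine bookkeeping'' is exactly where the content lies: cyclicity $\Tr(AB)=\Tr(BA)$ and ``$\Tr[A,B]=0$'' are not automatic when $A$ involves the unbounded operators $X$ and $\inab$, and justifying, say, $\Tr\bigl(X\inab^2\rho_tX\bigr)=\Tr\bigl(X^2\inab^2\rho_t\bigr)$ essentially forces you back to testing against an eigenbasis of $\rho_t$, i.e.\ to the paper's diagonalization. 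The remark you relegate to a rigor-check is in fact the cleanest complete proof: since $e^{itH_0}Xe^{-itH_0}=X-t\inab$ and $e^{itH_0}\inab e^{-itH_0}=\inab$, one has $\Tr(X\rho_tX)=\Tr\bigl((X-t\inab)\rho_0(X-t\inab)\bigr)$, an explicit quadratic polynomial in $t$ whose coefficients are finite under~\eqref{initcond} (the cross term by Lemma~\ref{lem:mix}); both identities follow by differentiating that polynomial, with no commutator gymnastics and no cyclicity issues. I would promote that to the main argument rather than leave it as a footnote.

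One concrete slip: with $P:=i\nabla$ the commutator is $[P,X]=+i$, not $-i$. Carried through either your computation or the Heisenberg one, this flips the sign of the cross term simultaneously in both displayed identities, giving $\partial_t\Tr(X\rho_tX)=-\Tr\bigl(\inab\rho_tX+X\rho_t\inab\bigr)$ and $\partial_t\Tr\bigl(\inab\rho_tX+X\rho_t\inab\bigr)=-2\Tr\bigl(\inab\rho_t\inab\bigr)$, which is self-consistent and still yields $\partial_t^2\Tr(X\rho_tX)=2\Tr\bigl(\inab\rho_t\inab\bigr)\ge 0$. (The statement as printed corresponds to the convention $P=-i\nabla$, and the paper's own identification of $\Tr(\inab\rho X+X\rho\inab)$ with $2\,\mathrm{Im}\int X\nabla\psi\,\overline{\psi}$ has the same looseness.) Since this quantity enters the rest of the paper only through the bound $|\Tr(\inab\rho_0X+X\rho_0\inab)|\le 2\sqrt{\Tr(X\rho_0X)\Tr(\inab\rho_0\inab)}$, nothing downstream is affected, but you should fix the sign of the commutation relation rather than rely on it cancelling.
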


\begin{proof}
We proceed by diagonalization, that is we write
\begin{equation}\label{diag} \rho_t=\sum_{j=1}^{+\infty} \lambda_j  |\psi^j_t \ra \la \psi^j_t |\qquad \text{with} \qquad \rho_0=\sum_{j=1}^{+\infty} \lambda_j  |\psi^j_0 \ra \la \psi^j_0 |,\end{equation}
where for all $j\geq 1$, $\psi^j$ satisfies the free Schr\"odinger equation $i \partial_t \psi^j_t =H_0 \psi^j_t$ with initial condition $\psi^j_0$.
Here, we only have to obtain the associated equality for solution $\psi_t$ of the free Sch\"odinger equation $i\partial_t \psi_t = H_0 \psi_t$. It comes after some computations 
\begin{align*}
\partial_t \biggl( \int_\R X^2 | \psi_t(X)|^2 \,dX \biggr) & = 
2\, \mathrm{Im} \biggl( \int_\R X \nabla \psi_t(X) \overline{\psi_t(X)} \,dX \biggr),\\
\partial_t \mathrm{Im} \biggl( \int_\R X \nabla \psi_t(X) \overline{\psi_t(X)} \,dX \biggr) & = 2 \int_\R |\nabla \psi_t(X)|^2 \,dX,
\end{align*}
and the result for $\rho_t$ follows from \eqref{diag}.
\end{proof}

\begin{lem} \label{rho_and_T}
We have for all $t\geq 0$
\[
\bigl\| \rho_t - \rho_0 \bigr\|_{\SSS_1} \le 2 \sqrt { \Tr \bigl( \inab \rho_0 \inab \bigr)\,t }.
\]
\end{lem}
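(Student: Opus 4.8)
The plan is to diagonalize $\rho_0$ and to propagate each eigenvector by the free Schr\"odinger flow, exactly as in the proof of Lemma~\ref{free_case}. We may assume $\Tr\bigl(\inab\rho_0\inab\bigr)<+\infty$, otherwise there is nothing to prove. Write $\rho_0=\sum_{j\ge1}\lambda_j|\psi^j_0\ra\la\psi^j_0|$ with $(\psi^j_0)_j$ orthonormal, $\lambda_j\ge0$ and $\sum_j\lambda_j=\Tr\rho_0=1$, and set $\psi^j_t:=e^{-itH_0}\psi^j_0$. Since $e^{-itH_0}$ is unitary, $(\psi^j_t)_j$ is again orthonormal and $\rho_t=\sum_j\lambda_j|\psi^j_t\ra\la\psi^j_t|$. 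By the triangle inequality in $\SSS_1$ (all terms being nonnegative, the exchange of norm and series is legitimate) and item $(ii)$ of Lemma~\ref{dist_op} with $p=1$,
\[
\bigl\|\rho_t-\rho_0\bigr\|_{\SSS_1}\le\sum_j\lambda_j\bigl\||\psi^j_t\ra\la\psi^j_t|-|\psi^j_0\ra\la\psi^j_0|\bigr\|_{\SSS_1}\le2\sum_j\lambda_j\|\psi^j_t-\psi^j_0\|_2.
\]

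Next I would estimate $\|\psi^j_t-\psi^j_0\|_2$ by spectral calculus on $H_0=-\tfrac12\Delta$. In Fourier variables $\widehat{\psi^j_t}(k)=e^{-itk^2/2}\widehat{\psi^j_0}(k)$, so that by Plancherel
\[
\|\psi^j_t-\psi^j_0\|_2^2=\int_\R\bigl|e^{-itk^2/2}-1\bigr|^2\,|\widehat{\psi^j_0}(k)|^2\,dk=\int_\R2\bigl(1-\cos(tk^2/2)\bigr)\,|\widehat{\psi^j_0}(k)|^2\,dk.
\]
Using the elementary bound $1-\cos x\le x$ for $x\ge0$ (at $x=0$ both sides vanish and $\tfrac{d}{dx}(x-1+\cos x)=1-\sin x\ge0$) with $x=tk^2/2$, the integrand is at most $tk^2|\widehat{\psi^j_0}(k)|^2$, hence $\|\psi^j_t-\psi^j_0\|_2^2\le t\|\nabla\psi^j_0\|_2^2$.

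Finally I would combine the two displays and apply the Cauchy--Schwarz inequality for the sum over $j$ with weights $\lambda_j$, using $\sum_j\lambda_j=1$ and $\sum_j\lambda_j\|\nabla\psi^j_0\|_2^2=\Tr\bigl(\inab\rho_0\inab\bigr)$:
\[
\bigl\|\rho_t-\rho_0\bigr\|_{\SSS_1}\le2\sqrt t\sum_j\lambda_j\|\nabla\psi^j_0\|_2\le2\sqrt t\Bigl(\sum_j\lambda_j\Bigr)^{1/2}\Bigl(\sum_j\lambda_j\|\nabla\psi^j_0\|_2^2\Bigr)^{1/2}=2\sqrt{\Tr\bigl(\inab\rho_0\inab\bigr)\,t}.
\]
There is no genuine obstacle; the only point requiring a little care is to use the linear bound $1-\cos x\le x$ (rather than $1-\cos x\le x^2/2$), which is exactly what produces the factor $\sqrt t$ together with only the first momentum moment $\|\nabla\psi^j_0\|_2$, and to justify that the diagonalization propagates through the unitary flow, which is immediate.
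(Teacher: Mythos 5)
Your proof is correct and follows essentially the same route as the paper: diagonalize $\rho_0$, propagate the eigenvectors by the free flow, reduce to $2\sum_j\lambda_j\|\psi^j_t-\psi^j_0\|_2$ via Lemma~\ref{dist_op}, and finish with Cauchy--Schwarz against $\sum_j\lambda_j=1$. The only (harmless) difference is the intermediate bound $\|\psi^j_t-\psi^j_0\|_2^2\le t\,\|\nabla\psi^j_0\|_2^2$, which you obtain by Plancherel and $1-\cos x\le x$ whereas the paper differentiates $\|\psi^j_t-\psi^j_0\|_2^2$ in time and uses conservation of the kinetic energy; both yield the identical estimate.
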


\begin{proof} 
Using as in the previous lemma the decomposition~\eqref{diag} and 
according to Lemma~\ref{dist_op}, we have for all $t\geq 0$
\[
\bigl\| \rho_t - \rho_0 \bigr\|_{\SSS_1} 
\le  2 \sum_{j=1} ^{+\infty} \lambda_j \| \psi^j_t - \psi^j_0 \|_2. 
\]
Now, using the Schr\"odinger equation, we also get for all $j\in \mathbb{N}^\ast$
\[\begin{split}
\frac d {dt} \|\psi^j_t -\psi^j_0\|^2_2  = 2 \, \mathrm{Re} \int_{\R}  \partial_t \psi^j_t(X) \overline{\psi^j_0(X)} \,dX  &= 
   \mathrm{Im} \int_{\R}  \nabla \psi^j_t (X)\cdot  \overline{\nabla \psi^j_0(X)} \,dX \\
  & \le \| \nabla \psi^j_t\|_2 \| \nabla \psi^j_0 \|_2 \le \| \nabla \psi^j_0 \|_2^2, 
\end{split}\]
since the kinetic energy $\| \nabla \psi^j_t \|_2^2$ is preserved by the free evolution. Hence,
\[
\bigl\| \rho_t - \rho_0 \bigr\|_{\SSS_1} 
\le 2  \sqrt t \sum_{j=1}^{+\infty} \lambda_j \| \nabla \psi_j \|_2 
=  2  \sqrt {  \Tr \bigl( \inab \rho_0 \inab \bigr)\,t }. 
\]
\end{proof}

\section{Martingale in the space of Hilbert-Schmidt operators.} \label{sec:MGS2}

In this section, $E$ is a Borel set of $\R^d$, and the nonnegative measure $\mu(ds, dv)$ is such that $\mu([0,t]\times E)<+\infty$ for any $t\geq 0$.
Let $(\Omega,\mathcal{F},\Pro)$ be a probability space on which we consider a Poisson random measure $P$ on $\R^+\times E$ with intensity $\mu(ds, dv)$.  Moreover, let us consider the filtration 
\[\mathcal{F}_t:=\sigma(P((0,s ],A), \,s\leq t, \, A\in \mathcal{B}(E)).\] 
Let also consider $F:[0,+\infty)\times E\times\Omega \to \SSS_2$ satisfying:
\begin{itemize}
\item for all $t\geq 0$, $(v,\omega)\mapsto F(t,v,\omega)$ is $\mathcal{B}(E)\otimes \mathcal{F}_{t^-}$- measurable,
\item for all $v\in E$ and $\omega\in\Omega$, $t\mapsto F(t,v,\omega)$ is left-continuous,
\item for all $T>0$ \[ \int_0^T \int \E[ \|F(t,v)\|^2_{\SSS_2}] \mu(dt,dv) <+\infty. \]
\end{itemize}

\begin{lem}\label{lem:MG}
Introducing the compensated Poisson random measure $\tilde{P}=P-\mu$, the process
\[M_t:=\int_0^t F(s,v)\tilde{P}(ds,dv),\]
is a is square-integrable martingale in $\SSS_2$, such that for any $\phi,\psi \in L^2(\R)$ the process $M_{\phi,\psi}:=\la \phi \vert M \vert \psi \ra$ is a square integrable martingale with quadratic variation 
\[
\la \la  M_{\phi,\psi} \ra\ra (t)= \int_0^t  \vert \la \phi \vert F(s,v) \vert \psi \ra\vert^2 \mu(ds,dv).
\]
Moreover, we also have for any $t,h\geq 0$,
\[ \E\bigl[ \| M_{t+h}-M_t \|^2_{\SSS_2} \big| \mathcal F_t \bigr]=\int_t^{t+h} \int \|F(s,v)\|^2_{\SSS_2} \mu(ds ,dv).\]
\end{lem}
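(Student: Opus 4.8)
The plan is to reduce everything to classical stochastic calculus for Poisson random measures valued in a separable Hilbert space, using $\SSS_2$ exactly as that Hilbert space. First I would recall that $\SSS_2$ is a separable Hilbert space with inner product $\la U,V\ra_{\SSS_2} = \Tr(U^* V)$, and that the hypotheses on $F$ (predictability, left-continuity, and the square-integrability $\int_0^T\int\E[\|F\|_{\SSS_2}^2]\mu<\infty$) are precisely the ones under which the stochastic integral $M_t := \int_0^t F(s,v)\tilde P(ds,dv)$ against the compensated Poisson measure is well-defined as an $\SSS_2$-valued process and is a square-integrable c\`adl\`ag martingale. For $F$ simple (a finite sum of terms $\indiq_{(a,b]}(s)\indiq_A(v)\,G$ with $G$ an $\mathcal F_a$-measurable $\SSS_2$-valued random variable), the martingale property and the isometry formula $\E\|M_t\|_{\SSS_2}^2 = \int_0^t\int\E\|F(s,v)\|_{\SSS_2}^2\,\mu(ds,dv)$ follow from the independence and the known first two moments of Poisson increments; then one passes to general $F$ by the density of simple functions in $L^2(\mu\otimes\Pro;\SSS_2)$ and an $L^2$-isometry argument. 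This is the content of the standard theory (e.g.\ the Hilbert-space version of the It\^o integral against a Poisson measure), so I would cite it and only indicate the simple-function computation.

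Next I would establish the statement about $M_{\phi,\psi} := \la\phi|M|\psi\ra$. The key observation is that the map $\Lambda_{\phi,\psi}:\SSS_2\to\C$, $U\mapsto \la\phi|U|\psi\ra = \la\,|\psi\ra\la\phi|\,,U\ra_{\SSS_2}$ (using that $|\psi\ra\la\phi|\in\SSS_2$ with $\SSS_2$-norm $\|\phi\|_2\|\psi\|_2$ by Lemma~\ref{dist_op}(i)) is a bounded linear functional on $\SSS_2$. Bounded linear maps commute with the stochastic integral against $\tilde P$ — this follows for simple integrands by linearity and extends by the continuity of $\Lambda_{\phi,\psi}$ together with the $L^2$-isometry — hence
\[
M_{\phi,\psi}(t) = \Lambda_{\phi,\psi}(M_t) = \int_0^t \la\phi| F(s,v)|\psi\ra\,\tilde P(ds,dv),
\]
which is a scalar square-integrable martingale. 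Its quadratic variation is the classical one for a scalar integral against a compensated Poisson measure, namely $\la\la M_{\phi,\psi}\ra\ra(t) = \int_0^t |\la\phi|F(s,v)|\psi\ra|^2\,\mu(ds,dv)$; I would get this from the fact that $M_{\phi,\psi}(t)^2 - \int_0^t|\la\phi|F(s,v)|\psi\ra|^2\,\mu(ds,dv)$ is a martingale, again checked on simple integrands and extended by $L^2$-approximation.

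Finally, for the conditional increment formula, I would write $M_{t+h}-M_t = \int_t^{t+h}\int F(s,v)\,\tilde P(ds,dv)$ and apply the isometry relative to the $\sigma$-algebra $\mathcal F_t$: by the independence of the Poisson measure restricted to $(t,t+h]\times E$ from $\mathcal F_t$ and the predictability of $F$, the same simple-function computation gives
\[
\E\bigl[\|M_{t+h}-M_t\|_{\SSS_2}^2\,\big|\,\mathcal F_t\bigr] = \int_t^{t+h}\int \|F(s,v)\|_{\SSS_2}^2\,\mu(ds,dv).
\]
The main obstacle, and the only place requiring genuine care rather than citation, is making the approximation by simple integrands rigorous in the Hilbert-space-valued setting: one must verify that the class of simple $\SSS_2$-valued predictable integrands is dense in $L^2((0,T]\times E\times\Omega;\SSS_2)$ and that the It\^o isometry is exact on that class, so that $M$ extends by continuity; once this is in place, all three assertions of the lemma follow by passing bounded linear functionals (respectively the norm-squared, via polarization) through the limit. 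Since $\SSS_2$ is separable this density is standard, so I would state it and move on.
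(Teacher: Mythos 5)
Your proposal is correct, but it runs the argument in the opposite direction from the paper. The paper starts from the scalar side: it invokes Applebaum's theorem on integration against compensated Poisson measures to get that each matrix coefficient $M_{\phi,\psi}=\la\phi|M|\psi\ra$ is a square-integrable scalar martingale with the stated compensator (the quadratic-variation identity coming from the independence of $P((t,t+h])$ from $\mathcal F_t$), and only then recovers the $\SSS_2$-statements by summing the scalar isometries over an orthonormal basis via $\|\rho\|^2_{\SSS_2}=\sum_{j,l}|\la\psi_j|\rho|\psi_l\ra|^2$. You instead build the integral directly as an $\SSS_2$-valued (Hilbert-space-valued) stochastic integral via simple predictable integrands and the It\^o isometry, and then project down to the scalar martingales with the bounded linear functionals $U\mapsto\la|\psi\ra\la\phi|,U\ra_{\SSS_2}$. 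Both routes are standard and both work: the paper's choice lets it cite an off-the-shelf scalar result and reduces the Hilbert-space claims to a Parseval computation, at the cost of having to justify the interchange of the expectation with the double sum over the basis; your choice front-loads the construction of the vector-valued integral (density of simple integrands, isometry on that class), after which every assertion follows by passing continuous linear maps through the limit. One small point of care in your version: since $M_{\phi,\psi}$ is complex-valued, the martingale characterizing the predictable bracket should be $|M_{\phi,\psi}(t)|^2-\int_0^t|\la\phi|F(s,v)|\psi\ra|^2\,\mu(ds,dv)$ rather than the unsquared-modulus expression you wrote; this is cosmetic and does not affect the argument.
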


\begin{proof} 
First, according to \cite[Theorem 4.2.3 pp. 224]{applebaum} it is direct that $M_{\phi,\psi}$ is a square-integrable scalar martingale for any $\phi,\psi\in L^2(\R)$, and then $M$ is a martingale in $\SSS_2$. Moreover, we also have that for any $t\geq 0$
\[\E\Big[\Big \vert \int_0^t \int  \la \phi \vert F(s,v)\vert \psi \ra \tilde{P}(ds,dv) \Big \vert^2 \Big]=\int^t_0 \int   \E[\vert  \la \phi \vert F(s,v)\vert \psi \ra\vert^2 ] \mu(ds,dv).\]
From this last relation, using that for all $h>0$ $P((t,t+h])$ is independent of $\mathcal{F}_t$ and standard properties of the conditional expectation, it is not difficult to see that the quadratic variation of $M_{\phi,\psi}$ is in fact given by $t\mapsto \int_0^t  \vert \la \phi \vert F(s,v) \vert \psi \ra\vert^2 \mu(ds,dv)$. To prove that $M$ is square-integrable in $\SSS_2$ let us remind the following relation
\[ \|\rho\|^2_{\SSS_2}=\sum_{j=1}^{+\infty} \la \psi_j \vert \rho^\ast\rho \vert \psi_j \ra = \sum_{j,l=1}^{+\infty} \vert \la \psi_j \vert \rho \vert \psi_l \ra \vert^2.\]
Hence, for any $t\geq 0$
\[\begin{split}
 \E \bigl[\|M_t\|^2_{\SSS_2} \bigr] & = \sum_{j,l=1}^{+\infty} \E\biggl[ \biggl\vert \int_0^t  \la \psi_j \vert F(s,v) \vert \psi_l \ra \tilde{P}(ds,dv) \biggr\vert^2\biggr]\\
 &=  \sum_{j,l=1}^{+\infty} \int_0^t   \E\big[ \vert \la \psi_j \vert F(s,v) \vert \psi_l \ra\vert^2 \big] \mu(ds,dv)
=  \int_0^t \int \E \bigl[ \|F(s,v) \|^2_{\SSS_2}\bigr]  \mu(ds,dv)<+\infty,
\end{split}\]
and in the same way
\[  
\E \bigl[ \| M_{t+h}-M_t \|^2_{\SSS_2} \big|  \mathcal F_t \bigr] 
=\int_t^{t+h} \int \|F(s,v)\|^2_{\SSS_2} \mu(ds ,dv).
\]
\end{proof}

%
%
%
\section{Compact sets in the trace class.} \label{sec:App_comp}

Here, we prove (and quote) the following helpful proposition regarding the compactness of subset of $\SSS_1$. The following result is used to prove the tightness in Theorem \ref{thm:main}.
\begin{prop} \label{prop:compS1}
Let $M>0$. The subset defined by
\begin{equation} \label{def:KM}
\KK_M := \bigl\{ \rho \in \SSS_1^+ :  \; \|\rho \|_{\SSS_1} =1 \text{ and } 
\|X \rho X \|_{\SSS_1} + \bigl\| (i\nabla) \rho (i\nabla) \bigr\|_{\SSS_1} \le M
\bigr\}
\end{equation}
is a compact in $\SSS_1$. 
\end{prop}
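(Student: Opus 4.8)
Since $\SSS_1$ is a metric space, it suffices to prove sequential compactness: every sequence $(\rho_n)_n\subset\KK_M$ has a subsequence converging in $\SSS_1$ to a limit lying in $\KK_M$ (this also yields closedness). The plan is to first extract a weak-$*$ limit using only the bound $\|\rho_n\|_{\SSS_1}=1$, and then to \emph{upgrade} this to norm convergence using the second-moment bound, which is exactly what prevents mass from escaping. The tool for the upgrade is a single compact operator encoding both moments: let $A:=X^2+\inab^2=X^2-\Delta$, which has compact resolvent, an orthonormal eigenbasis of Hermite functions $(e_k)_{k\ge0}$ and eigenvalues $\lambda_k=2k+1\to+\infty$. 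Write $P_K$ for the orthogonal projection onto $\mathrm{span}(e_0,\dots,e_K)$, so that $\Id-P_K\le \lambda_{K+1}^{-1}A$ as positive operators.

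The heart of the argument is a \emph{uniform} tail estimate over $\KK_M$. For $\rho\in\KK_M$ one has $X\rho^{1/2},\inab\rho^{1/2}\in\SSS_2$ with $\|X\rho^{1/2}\|_{\SSS_2}^2=\Tr(X\rho X)$ and $\|\inab\rho^{1/2}\|_{\SSS_2}^2=\Tr(\inab\rho\inab)$ (these are finite since $X\rho X,\inab\rho\inab\in\SSS_1^+$, and for nonnegative operators $\|X\rho X\|_{\SSS_1}=\Tr(X\rho X)$, similarly for the kinetic term). Conjugating $\Id-P_K\le\lambda_{K+1}^{-1}A$ by $\rho^{1/2}$ and taking traces gives
\begin{equation}\label{tailplan}
\Tr\bigl((\Id-P_K)\rho\bigr)\ \le\ \frac{1}{\lambda_{K+1}}\Bigl(\Tr(X\rho X)+\Tr(\inab\rho\inab)\Bigr)\ \le\ \frac{M}{\lambda_{K+1}}.
\end{equation}
Then by the Cauchy--Schwarz inequality in $\SSS_2$ and $\|\rho^{1/2}\|_{\SSS_2}^2=\Tr\rho=1$,
\[
\bigl\|(\Id-P_K)\rho\bigr\|_{\SSS_1}\le\bigl\|(\Id-P_K)\rho^{1/2}\bigr\|_{\SSS_2}\,\bigl\|\rho^{1/2}\bigr\|_{\SSS_2}=\sqrt{\Tr\bigl((\Id-P_K)\rho\bigr)}\le\sqrt{M/\lambda_{K+1}},
\]
and, taking adjoints, the same bound holds for $\|\rho(\Id-P_K)\|_{\SSS_1}$. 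Hence
\begin{equation}\label{frplan}
\bigl\|\rho-P_K\rho P_K\bigr\|_{\SSS_1}\le\bigl\|(\Id-P_K)\rho\bigr\|_{\SSS_1}+\bigl\|P_K\rho(\Id-P_K)\bigr\|_{\SSS_1}\le \epsilon_K:=\frac{2\sqrt M}{\sqrt{\lambda_{K+1}}}\xrightarrow[K\to\infty]{}0,
\end{equation}
uniformly in $\rho\in\KK_M$: every element of $\KK_M$ is approximated in $\SSS_1$, uniformly, by its finite-rank compression.

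Next I extract a limit. Since $\SSS_1$ is the dual of the separable space of compact operators and $\|\rho_n\|_{\SSS_1}=1$, Banach--Alaoglu together with metrizability of the weak-$*$ topology on bounded sets gives a subsequence (not relabeled) and $\rho\in\SSS_1$ with $\|\rho\|_{\SSS_1}\le1$ such that $\Tr(\rho_nK)\to\Tr(\rho K)$ for every compact $K$. Testing against $|\phi\ra\la\phi|$ shows $\rho\in\SSS_1^+$, and testing against $|e_j\ra\la e_l|$ shows $\la e_j|\rho_n|e_l\ra\to\la e_j|\rho|e_l\ra$ for all $j,l$, hence $P_K\rho_nP_K\to P_K\rho P_K$ in $\SSS_1$ for each fixed $K$ (finite-dimensional range). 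From \eqref{tailplan}, $\Tr(P_K\rho)=\lim_n\Tr(P_K\rho_n)\ge 1-M/\lambda_{K+1}$; letting $K\to\infty$ gives $\Tr\rho\ge1$, so $\Tr\rho=1$. Likewise $\sum_{k\le K}\lambda_k\la e_k|\rho|e_k\ra=\lim_n\sum_{k\le K}\lambda_k\la e_k|\rho_n|e_k\ra\le M$ for every $K$, so by monotone convergence $\Tr(X\rho X)+\Tr(\inab\rho\inab)\le M$ and $\rho\in\KK_M$. Finally, for all $K$ and $n$,
\[
\bigl\|\rho_n-\rho\bigr\|_{\SSS_1}\le\bigl\|\rho_n-P_K\rho_nP_K\bigr\|_{\SSS_1}+\bigl\|P_K\rho_nP_K-P_K\rho P_K\bigr\|_{\SSS_1}+\bigl\|P_K\rho P_K-\rho\bigr\|_{\SSS_1}\le 2\epsilon_K+\bigl\|P_K\rho_nP_K-P_K\rho P_K\bigr\|_{\SSS_1}
\]
by \eqref{frplan} applied to $\rho_n$ and to $\rho\in\KK_M$; choosing $K$ with $2\epsilon_K<\delta/2$ and then $n$ large makes the right-hand side $<\delta$. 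This proves $\rho_n\to\rho$ in $\SSS_1$, hence $\KK_M$ is compact. The only genuinely delicate point is the uniform tail bound \eqref{frplan}: it is simultaneously what forbids loss of trace in the limit and what converts weak-$*$ into norm convergence, and it is the sole place where the moment bound, rather than mere boundedness in $\SSS_1$, is used; the remaining manipulations (the identities $\Tr(X\rho X)=\|X\rho^{1/2}\|_{\SSS_2}^2$ etc., and the domain issues from the unbounded $X,\inab$) are routine and follow the lines of Lemma~\ref{lem:mix}.
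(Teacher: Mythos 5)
Your proof is correct, but it takes a genuinely different route from the paper's. The paper diagonalizes each $\rho_n$ in its own eigenbasis, invokes the auxiliary Lemma~\ref{lem:ortho} (a bound on the cardinality of orthonormal families inside the Rellich-compact set $K_R$) to get uniform summability of the eigenvalue tails, and then performs a diagonal extraction on eigenvalues and eigenvectors before reassembling the limit operator. You instead work in a single fixed reference basis: the operator inequality $\Id-P_K\le\lambda_{K+1}^{-1}(X^2-\Delta)$ for the Hermite projections turns the moment bound into a uniform finite-rank approximation $\|\rho-P_K\rho P_K\|_{\SSS_1}\le 2\sqrt{M/(2K+3)}$ over $\KK_M$, after which Banach--Alaoglu plus this uniform tail upgrades weak-$*$ to norm convergence. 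Your version buys several things: it avoids the extraction of eigenvectors (and the attendant issues of eigenvalue ordering and degeneracy that the paper's Steps 2--3 must handle), it dispenses with Lemma~\ref{lem:ortho} entirely, it is quantitative, and it generalizes verbatim to any confining positive operator with compact resolvent in place of $X^2-\Delta$. The paper's version is more elementary in its ingredients --- it uses only the compactness of $K_R$ in $L^2(\R)$ and no spectral theory of the harmonic oscillator --- and it hands you the spectral decomposition of the limit for free. The identities $\Tr(X\rho X)=\|X\rho^{1/2}\|_{\SSS_2}^2$ and $\Tr(\rho^{1/2}(X^2-\Delta)\rho^{1/2})=\Tr(X\rho X)+\Tr(\inab\rho\inab)$, which you flag as routine, are indeed justified at the same level of rigor the paper itself uses (expand $\rho=\sum_j\mu_j|\psi_j\ra\la\psi_j|$ and sum nonnegative terms in $[0,+\infty]$), so there is no gap there.
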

An interesting remark is that $\KK_M$ is also compact in $\SSS_p$ for all $p\in[1,+\infty)$ since $\|\cdot\|_{\SSS_p}\leq \|\cdot\|_{\SSS_1}$. The proof of Proposition~\ref{prop:compS1} use the following lemma.
\begin{lem} \label{lem:ortho}
For any $R>0$, we define the subset $K_R$ of $L^2(\R)$ by
\begin{equation}\label{def_KR}
K_R := \bigl\{  \psi \in H^1(\R) : \; \| \psi\|_2=1 \text{ and } \|X\, \psi \|_2^2 + \| \nabla \psi \|_2^2 \le R \bigr\}.
\end{equation}
There exists a positive integer $N(R)$ such that 
for any orthonormal family of vectors $(\psi_j)_{j \in J}$ that all belongs to $K_R$:
\[
\forall j \in J, \; \psi_j \in K_R,
\quad \text{and} \quad 
\forall j,l \in J, \; \la \psi_j, \psi_l \ra = \delta_{j,l}, 
\]
then the cardinal of the family is bounded by $N(R)$, i.e. $\mathrm{Card} \,I \le N(R)$.
\end{lem}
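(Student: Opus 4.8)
The plan is to use the compactness of $K_R$ in $L^2(\R)$ and a simple almost-orthogonality argument. First I would establish that $K_R$ is a \emph{relatively compact} subset of $L^2(\R)$. This is a standard consequence of the Rellich--Kondrachov-type criterion: the condition $\|\nabla \psi\|_2^2 \le R$ controls oscillations (equivalently, equi-continuity of the Fourier transforms at high frequencies, since $\int |k|^2 |\widehat\psi(k)|^2 \le R$), while the condition $\|X\psi\|_2^2 \le R$ provides tightness at infinity (no mass escapes to $|X| = \infty$). Together these two bounds give, via the Fr\'echet--Kolmogorov theorem, that $K_R$ is precompact in $L^2(\R)$; in fact $K_R$ is closed (the constraints are weakly closed and the unit-norm constraint is preserved because any limit retains norm $1$ once tightness rules out loss of mass), hence compact.

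Next, fix $\ep \in (0,1)$ to be chosen later (e.g.\ $\ep = 1/2$). By compactness, cover $K_R$ by finitely many balls $B(\phi_1,\ep/2),\dots,B(\phi_m,\ep/2)$ of radius $\ep/2$ in the $L^2$-norm, with centers $\phi_i \in K_R$. I claim each such ball can contain at most one member of an orthonormal family. Indeed, if $\psi_j, \psi_l$ both lie in $B(\phi_i, \ep/2)$ with $j \ne l$, then $\|\psi_j - \psi_l\|_2 < \ep < 1$; but orthonormality forces $\|\psi_j - \psi_l\|_2^2 = \|\psi_j\|_2^2 + \|\psi_l\|_2^2 = 2$, i.e.\ $\|\psi_j - \psi_l\|_2 = \sqrt 2 > 1$, a contradiction. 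Since every $\psi_j$ lies in some ball of the cover, the pigeonhole principle gives $\mathrm{Card}\, J \le m =: N(R)$, which proves the lemma.

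The main (and essentially only) obstacle is the compactness statement for $K_R$: one must carefully invoke the Fr\'echet--Kolmogorov criterion and verify both the tightness at spatial infinity (from the $\|X\psi\|_2$ bound) and the equicontinuity of translations (from the $\|\nabla\psi\|_2$ bound), and then check that the unit-norm constraint passes to the limit, which it does precisely because tightness prevents any loss of $L^2$-mass in the limit. Everything after that --- the finite subcover and the $\sqrt 2$-separation of orthonormal vectors --- is elementary. One could alternatively argue directly with Fourier/Hermite expansions: on $K_R$ the number operator $-\Delta + X^2$ is bounded by $2R$ (up to a constant), so the mass of $\psi$ on the span of Hermite functions of index $\ge n$ is $\le 2R/n$, and choosing $n$ with $2R/n < 1/4$ confines each $\psi_j$ essentially to a fixed finite-dimensional space, whence $\mathrm{Card}\, J$ is bounded by roughly $2\dim$ of that space; this gives an explicit $N(R) = O(R)$.
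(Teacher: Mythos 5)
Your proof is correct, and it takes a genuinely different route from the paper's. The paper argues by contradiction: assuming no finite bound exists, it performs a diagonal extraction on larger and larger orthonormal families in $K_R$ to produce an infinite orthonormal sequence inside the compact set $K_R$, which cannot have an accumulation point since distinct orthonormal vectors are at mutual distance $\sqrt 2$ --- a contradiction. You instead run the same two ingredients (precompactness of $K_R$ and the $\sqrt2$-separation of orthonormal vectors) forwards: a finite $\ep/2$-net with $\ep<\sqrt2$ can meet each orthonormal family in at most one point per ball, so the covering number is an explicit admissible $N(R)$. Your version is more direct and quantitative, needs only total boundedness of $K_R$ rather than compactness, and avoids the extraction argument entirely; moreover, you sketch a justification of the precompactness of $K_R$ (via Fr\'echet--Kolmogorov, using $\|X\psi\|_2$ for tightness and $\|\nabla\psi\|_2$ for equicontinuity of translations), which the paper simply asserts. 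Your Hermite-basis variant is also sound and arguably the cleanest: since $\la \psi,(-\Delta+X^2)\psi\ra\le R$ on $K_R$, the mass of each $\psi_j$ outside the span $V_n$ of the first $n$ Hermite functions is at most $R/(2n+1)$, so for $n$ with $R/(2n+1)<1/2$ each $\psi_j$ carries more than half its mass in $V_n$, and Bessel's inequality applied to the orthonormal family gives $\mathrm{Card}\,J<2n$, i.e.\ an explicit $N(R)=O(R)$. The only cosmetic point: for your covering argument the closedness of $K_R$ is irrelevant, so the discussion of whether the unit-norm constraint passes to the limit can be dropped.
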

\begin{proof}[of the Lemma~\ref{lem:ortho}]
Here, we argue by contradiction. Fix $R>0$ and assume that for any $N$, we can construct a orthonormal family $(\psi_j^N)_{i \le N}$ of $K_R$. Using that $K_R$ is a compact subset of $L^2(\R)$, and then a diagonal extraction argument, we can find a subsequence $(\psi_j^{\sigma(N)})_{j \le  \sigma(N)}$ such that for any $j\in J$, the sequence $(\psi_j^{\sigma(N)})_{N \in \N}$ converges strongly in $L^2(\R)$ towards some $\psi_j$. Hence, $(\psi_j)_{j \in \N}$ forms a countable orthonormal family of $K_R$. This contradicts the compactness of $K_R$ in $L^2(\R)$, since the sequence $(\psi_j)_{j \in \N}$ cannot have any accumulation  point in $K_R$.
\end{proof}

\begin{proof}[of Proposition~\ref{prop:compS1}]
Let  $M>0$ and  $(\rho_n)_{n \in \N}$ be a sequence of operators in $\KK_M$. Let us start the proof by reminding the reader about some direct computations which are useful in what follows. Using the diagonalization of each operator $\rho_n$, we write
\[
\rho_n = \sum_{i=1}^{+\infty} \lambda_i^n | \psi_i^n \ra \la \psi^n_i |, \quad \text{with} \quad  \| \psi_n^i \|_2 =1,
\] 
where the (nonnegative) eigenvalues are listed in decreasing order: $\lambda^n_1 \ge \lambda_2^n  \ge \dots $. Hence,
\[
X \rho_n X = \sum_{i=1}^{+\infty} \lambda_i^n | X \psi_i^n \ra \la X \psi^n_i |,
\qquad 
\inab \rho_n \inab = 
\sum_{i =1}^{+\infty} \lambda_i^n | \inab \psi_i^n \ra \la \inab \psi^n_i |,
\]
which yields
\begin{equation} \label{trace_bd}
\| X \rho_n X \|_{\SSS_1} + \bigl\| \inab \rho_n \inab \|_{\SSS_1}  
= \Tr( X \rho_n X ) + \Tr \bigl( \inab \rho_n \inab \bigr) 
 = \sum_{i=1}^{+\infty} \lambda_i^n \bigl( \| X \psi_i^n \|_2^2 +  \| \nabla \psi_i^n \|_2^2 \bigr). 
\end{equation}

{\sl Step 1: The convergence $\sum_{i\geq 1} \lambda_i^n =1$ is uniform in $n \in \N$.} To prove it, we just have to show that for any $\ep >0$, there exists an $I \in \N$ such that $\sup_{n \in \N} \sum_{i \ge I} \lambda^n_i \le \ep$. We will show that is true for $I = N(R)$ with $R= M/ \ep$, where $N$ is the integer given by Lemma~ref{lem:ortho}: it bounds the cardinal of any orthonormal family made of elements of $K_R$. By~\eqref{trace_bd} we have for all $n \in \N$: 
\begin{align*}
M & \ge \| X \rho_n X \|_{\SSS_1} + \bigl\| \inab \rho_n \inab \|_{\SSS_1}  \ge R \sum_{\psi_i^n \notin K_R} \lambda_i^n 
 = R \biggl( 1 - \sum_{\psi_i^n \in K_R} \lambda_i^n  \biggr) \\
& \ge R \biggl( 1 - \sum_{i \le N(R)} \lambda_i^n  \biggr)
= R  \sum_{i \ge N(R)} \lambda_i^n
= \frac M \epsilon \sum_{i \ge I} \lambda_i^n,
\end{align*}
and the conclusion follows. 
Above, we pass from the first to the second line using Lemma~\ref{lem:ortho}, which implies that the cardinal of $\{ i, \; \psi_i^n \in K_R \}$ is bounded by $N(R)$. Then, the sum $\sum_{\psi_i^n \in K_R} \lambda_i^n$ is always smaller that $\sum_{i \le N(R)} \lambda_i^n $ since the sequence $(\lambda^n_i)_{i \in \N}$ is non-increasing.

\medskip
{\sl Step 2: A diagonal extraction.} Since the sequences $(\lambda_i^n)_{n \in \N}$ are all bounded by $1$, we may use a diagonal extraction argument, and then assume that for each $i \in \N^\ast$, $\lim_n\lambda_i^n = \lambda_i \in [0,1]$. The nonnegative $\lambda_i$ are also listed in decreasing order, $\lambda_1 \ge \lambda_2  \ge \ldots $, and they also satisfy $\sum_{i\geq 1} \lambda_i = 1$ thanks to the first step.

Let us consider now $n_0$ as the maximal integer for which $\lambda_i \neq 0$ ( $n_0 = +\infty$ if this never happens).
If $i \le n_0$, it is clear by~\eqref{trace_bd} that
\[
\limsup_{n \rightarrow + \infty} \| X \, \psi^n_i\|_2+ \| \nabla \psi^n_i\|_2 < + \infty.
\]
Moreover, since for any $R >0$ the subset $K_R$ defined by \eqref{def_KR} is compact in $L^2(\R)$, up to an diagonal extraction argument we may still assume that for any $i\in\mathbb{N}^\ast$ we have $ \| \psi^n_i -\psi_i \|_2 \rightarrow 0$, for some $\psi_i$ in $L^2(\R)$ with $\| \psi_i \|_2 = 1$. 

\medskip
{\sl Step 3: Conclusion.} Let us define $\rho := \sum_{i =1 }^{n_0} \lambda_i | \psi_i \ra \la \psi_i |$ and fix $\ep >0$. According to  the previous steps, we pick up $I\in \N^\ast$ large enough, such that
$\sup_{n \in \N} \sum_{i \ge I }\lambda^n_i+  \sum_{i \ge I } \lambda_i \le \ep$. In the case $n_0 <+\infty$, we may always assume that $I \ge n_0$ and then
\begin{align*}
\| \rho_n - \rho \|_{\SSS_1} & \le \sum_{i \le n_0}
\bigl\|  \lambda^n_i  | \psi^n_i \ra \la \psi^n_i | - \lambda_i  | \psi_i \ra \la \psi_i |\bigr\|_{\SSS_1} + \sum_{i=n_0+1}^{I} \lambda^n_i + \sum_{i > I} \lambda^n_i \\
& \le \sum_{i \le n_0} \bigl(|\lambda^n_i - \lambda_i | + 2 \lambda_i \| \psi_i^n - \psi_i \|_2 \bigr) +  \sum_{i = n_0+1}^{I} \lambda^n_i +  \ep,
\end{align*}
where we have used the fact that 
$ \bigl\|   | \psi^n_i \ra \la \psi^n_i | -  | \psi_i \ra \la \psi_i |\bigr\|_{\SSS_1} \le 2 \| \psi_i - \psi_i^n \|_2$ (see Lemma~\ref{dist_op}). Finally, it is not difficult to see that both sum of the r.h.s go to zero as $n$ goes to infinity, and then that $\rho_n$ converges to $\rho$ in $\SSS_1$.
In the case where $n_0= + \infty$, we may use a similar decomposition to get the same result: replace formally $n_0$ by $I$ is the above bound and erase the second term in the r.h.s.\ .  
\end{proof}


\end{document}